\newtheorem{theorem}{Theorem}
\newtheorem{lemma}[theorem]{Lemma}
\newtheorem{corollary}[theorem]{Corollary}
\newtheorem{claim}[theorem]{Claim}
\theoremstyle{definition}
\newtheorem{definition}{Definition}
\newtheorem*{method}{Maximum Entropy Method}
\renewenvironment{cases}{\left\{\mkern-6mu\begin{array}{l@{~~~}l}}{\end{array}\right.}
\let\norm\|
\newcommand\tsp{\mkern 1.5mu}
\def\|{\mkern1.5mu{|}\mkern1.5mu}
\def\restr{\mathord\upharpoonright}
\renewcommand\ge{\geqslant}\renewcommand\le{\leqslant}
\newcommand\C{\mathcal C}
\newcommand\Co{\mathsf C} 
\newcommand\D{\mathcal D}
\newcommand\I{\mathcal I} 
\newcommand\J{\mathcal J} 
\newcommand\Z{\mathcal Z} 
\newcommand\N{\mathbb N}  
\newcommand\R{\mathbb R}  
\newcommand\T{\top}       
\newcommand\cups{\mathrel{\cup\mkern -1mu^*\mkern-3mu}}
\def\<{\langle}\def\>{\rangle}
\renewcommand\phi{\varphi}
\DeclareMathAlphabet{\mathbfit}{OML}{cmm}{b}{it}
\def\r{\mathbfit r}
\def\b{\mathbfit b}
\def\Hen{\mathbf H}
\newcommand\eqdef{\stackrel{\mathrm{def}}{=}}
\newcommand\dn{\mathord{\downarrow}}
\newcommand\tsum{{\textstyle\sum}}
\renewcommand\cups{\mathop{{\cup}\mkern -2mu ^*}}
\DeclareMathOperator\len{\mathrm{len}}
\def\m#1{\tsp{#1}\tsp}
\def\sm{\tsp{-}\tsp} 
\def\titem#1{\noindent\hbox to 2\parindent{\hss\upshape
#1}\space\ignorespaces\advance\leftmargin2\parindent}
\def\G#1{\Gamma_{\!#1}}
\def\Ga#1{\Gamma^*_{\!#1}}
\def\barGamma{\mkern1mu\overline{\hbox{$\mkern-1mu\Gamma\mkern4mu$}}\mkern-4mu}
\def\clGa#1{\barGamma^*_{\!#1}}
\def\abcd{[abcd\tsp]}
\def\acbd{[acbd\tsp]}
\def\cdab{[cdab\tsp]}
\def\YXZ{Y\mkern-2mu XZ}
\def\YX{Y\mkern-2mu X}
\def\@setemails{%
    \mbox{{\itshape E-mail address}:\space}{\ttfamily\emails}.%
}
\tikzset{algpxIndentLine/.style={draw,line width=0.5pt,color=gray!60}}
\algnewcommand\algorithmicforeach{\textbf{for each}}
\pretocmd{\ForEach}{\algpx@endCodeCommand}{}{}
   \algpx@endIndent\algpx@startEndBlockCommand{\algorithmicend\algorithmicif}%
   \algpx@endIndent\algpx@startEndBlockCommand{\algorithmicend\algorithmicfor}%
\algrenewcommand\alglinenumber[1]{\scriptsize #1~}
\newenvironment{pseudocode}[1]
{\begin{code}[htb]%
  \hrule\vskip5pt
  \caption{\fontsize{9}{10}\selectfont #1}%
  \vskip 3pt
  \hrule\hbox{}\begin{algorithmic}[1]\ignorespaces}
{\end{algorithmic}
  \vskip 3pt \hrule
  \end{code}}
\newcommand\bref[1]{\hyperlink{#1}{#1}}
\newcommand\blabel[1]{\hypertarget{#1}{}}
\title{Information Inequalities for Five Random Variables}
\author{E.~P.~Csirmaz${\strut}^*$}
\author{L.~Csirmaz${\strut}^{*\dagger}$}
\address{\normalfont ${\strut}^{*}$ R\'enyi Institute, Budapest}
\address{\normalfont ${\strut}^\dagger$ UTIA, Prague}
\email{csirmaz@renyi.hu}
\date{}
\begin{document}
\def\r{\mathbfit r}
\def\b{\mathbfit b}
\advance\abovedisplayskip.3\baselineskip
\advance\belowdisplayskip.3\baselineskip
\makeatletter
\let\atopwithdelims\@@atopwithdelims
\def\section{\@startsection{section}{1}%
  \z@{1.0\linespacing\@plus\linespacing}{.7\linespacing}%
  {\normalfont\scshape\centering}}
\makeatother
\begin{abstract}

The entropic region is formed by the collection of the Shannon entropies of
all subvectors of finitely many jointly distributed discrete random
variables. For four or more variables, the structure of the entropic region
is mostly unknown. We utilize a variant of the Maximum Entropy Method to
obtain five-variable non-Shannon entropy inequalities, which delimit the
five-variable entropy region. This method adds copies of some of the random
variables in generations. A significant reduction in computational
complexity, achieved through theoretical considerations and by harnessing
the inherent symmetries, allowed us to calculate all five-variable
non-Shannon inequalities provided by the first nine generations. Based on
the results, we define two infinite collections of such inequalities and
prove them to be entropy inequalities. We investigate downward-closed
subsets of non-negative lattice points that parameterize these collections,
and based on this, we develop an algorithm to enumerate all extremal
inequalities. The discovered set of entropy inequalities is conjectured to
characterize the applied method completely.

\smallskip

\noindent{\bf Keywords:} Entropy; information inequalities; polymatroid;
polyhedral geometry.

\smallskip
\noindent{\bf AMS Classification Numbers:}
05B35; 
26A12; 
52B12; 
90C29; 
94A17; 
52B40; 
90C27 

\end{abstract}

\maketitle

\section{Introduction}

Many important mathematical problems can be reduced to the following
question: does a collection of finite random variables exist such that
the entropies of the variable subsets satisfy certain linear constraints?
Examples include, but are not limited to, channel coding \cite{Csiszar.Korner}
and network coding in particular \cite{yeung-network}, estimating the
efficiency of secret sharing schemes \cite{Beimel,Beimel.Orlov,G.Rom},
questions about matroid representations \cite{BBFP}, guessing games
\cite{riis,rombach}, extracting information from common strings in cryptography
\cite{groth}, additive combinatorics \cite{madiman-etal}, and finding
conditional independence inference rules \cite{studeny}.

The \emph{entropy function} of finitely many discrete random variables
$\langle \xi_i:i\in N\rangle$ indexed by the fixed finite set $N$ maps the
non-empty subsets $I\subseteq N$ to the Shannon entropy $\Hen(\xi_I)$ of the
variable set $\xi_I=\langle x_i:i\in I\rangle$, see \cite{yeung-book}. The
\emph{entropy region}, denoted by $\Ga N$, is the range of the entropy
function; it is a part of the $2^{|N|}\m-1$-dimensional Euclidean space where
the coordinates are labeled by non-empty subsets of $N$. Entropies are
non-negative real numbers, and thus the entropy region lies in the
non-negative orthant of this Euclidean space. It is delimited by a
collection of homogeneous linear
inequalities corresponding to the non-negativity of basic Shannon
information measures \cite{yeung-book}. Points satisfying all these
inequalities form the \emph{Shannon-bound}; the Shannon-bound is denoted by $\G N$.

N.~Pippenger argued in \cite{pippenger} that linear inequalities bounding
the entropic region $\Ga N$ encode the fundamental laws of Information
Theory and determine the limits of information transmission and data
compression. The long-standing problem of whether a linear information
inequality can properly cut into the Shannon bound was settled in 1998 by
Zhang and Yeung \cite{ZhY.ineq} by exhibiting the first example of such a
non-Shannon information inequality. Their discovery initiated intensive
research. The phrase \emph{Copy Lemma} was coined by Dougherty et
al.~\cite{DFZ11} to describe the general method distilled from the original
Zhang--Yeung construction. The Copy Lemma has been applied successfully to
generate several hundred sporadic and a couple of infinite families of
non-Shannon entropy inequalities for $\Ga4$, see
\cite{csirmaz.book,DFZ11,M.infinf}. A different method, utilizing an
information-theoretic lemma attributed to Ahlswede and K\"orner \cite{AK},
was proposed in \cite{MMRV}; later it was shown to be equivalent to a
special case of the Copy Lemma \cite{Kaced}.

Our method to obtain five-variable non-Shannon entropy inequalities is based
on a more general paradigm of which the Copy Lemma is a special case
\cite{exploring}. Derived from the principle of maximum entropy
\cite{maxentp}, it is called MEM, short for Maximum Entropy Method. For more
details, see Section \ref{sec:method}.

Previous works on generating and applying non-Shannon entropy inequalities,
such as \cite{Beimel.Orlov,csirmaz.book,inner,DFZ11,entreg,studeny}, focused
on the four-variable case, and only a few sporadic five-variable non-Shannon
inequalities have been discovered, such as the MMRV inequality from
\cite{MMRV} or an infinite set of inequalities from \cite{M.infinf}. This is
the first work that provides a method that systematically generates an
infinite collection of non-Shannon bounds on the five-variable entropy
region $\Ga5$. Compared to the four-variable case, there are significant
challenges, both theoretical and computational. The four-variable entropy
region $\Ga4$ sits in the $15$-dimensional Euclidean space, while the
five-variable region $\Ga5$ is $31$-dimensional. The structure of the
Shannon bound $\G4$ is well-understood: it has $41$ extremal directions, and
only $6$ of them have no entropic points. The entropy region $\Ga4$ has an
inner polyhedral cone where it fills its Shannon bound, and has six
isomorphic ``protrusions'' towards the six exceptional extremal directions,
each protrusion surrounded by $15$ hyperplanes of which $14$ come from the
Shannon bound \cite{Ma.Stud}. Only the protrusions contribute to new entropy
inequalities, and their dimension can be reduced to $10$. Computational
results about $\Ga4$ can be obtained by computing vertices and facets of
numerous implicitly defined $10$-dimensional polyhedra \cite{inner}. In
contrast, the Shannon bound $\G5$ of the five-variable entropy region has
117,983 extremal directions \cite{studeny-kocka}, and for a few of them it
is not even known whether they contain an entropic point or not. No
structural reduction similar to the four-variable case is available, and it
is not known whether such a reduction exists or not. Computations about
$\Ga5$ can still be reduced to $25$-dimensional polyhedral enumeration
problems (although with significantly larger number of constraints than in
the $4$-variable case). The complexity of enumeration problems typically
doubles when the dimension increases by one, making such high-dimensional
enumeration problems practically intractable.

We overcome this computational difficulty by applying a particular variant
of the Maximum Entropy Method. This variant, working in generations, first
reduces the problem dimension from $31$ to $19$, and then, at each
generation, adds extra copies of some of the random variables, increasing
the problem dimension again. Theoretical considerations and harnessing the
inherent symmetry allowed us to complete the associated polyhedral
computations up to nine generations. The output was the complete list of
five-variable non-Shannon inequalities provided by the first nine
generations. Based on the experimental results, we define an infinite
collection of five-variable inequalities that we \emph{prove} to be provided
by this MEM variant---in particular, they are valid non-Shannon entropy
inequalities---and \emph{conjecture} this collection to be complete; that
is, no additional inequalities are yielded by this MEM variant. The
collection of the inequalities is parametrized by finite, downward closed
subsets of the non-negative lattice points of the plane. Some of the
inequalities in our collection are consequences of the others; those that
are not, are called \emph{extremal}. We developed an incremental algorithm
that enumerates, from generation to generation, the parameters yielding the
extremal inequalities, in complete agreement with the computational results.
The algorithm allowed us to significantly exceed the capabilities of
polyhedral computation. While numerical instability prevented the completion
of the polyhedral computation for the tenth generation, all extremal entropy
inequalities were enumerated up to generation 60. Finally, we have looked at
the large-scale behavior of the extremal inequalities, and pictured how
these inequalities delimit a three-dimensional cross-section of $\Ga5$.

The new five-variable non-Shannon inequalities can be applied to real-world
problems. The~most immediate application is in network coding. The~new
inequalities tighten the boundaries; they provide stricter and more accurate
bounds on network capacity. In~a network protocol they can assist in
proving whether a targeted data rate is achievable or not~\cite{yeung-network}.

Cloud storage services (like Google Drive or AWS S3) distribute data
fragments across many nodes~\cite{cloud}. In~case of failure, the~node has
to download the missing data from other nodes. The~new five-variable inequalities
can be used to determine the theoretical limits of storage efficiency for
systems with more complex failure models or larger~clusters.

In the realm of secret sharing, entropy inequalities provide lower bounds on
the size of secrets~\cite{Beimel,Beimel.Orlov,G.Rom}. To~explore another
facet of these problems, the~new inequalities can prove that certain
efficient schemes are impossible to realize. In~complex datasets, it is
important to distinguish between correlation and actual
causation~\cite{casual}. When an AI model analyzes data to build a causal
graph, it can use entropy inequalities to rule out models that are
information-theoretically impossible, narrowing the search space and
improving~accuracy.

In this paper lemmas, claims and theorems are arranged so that each is used
in the same section only, typically right after they are stated and proved.
Section \ref{sec:prepare} proves structural properties of the entropy region
that are used to reduce the computational complexity of the polyhedral
algorithms. The main theoretical results are stated and proved in Section
\ref{sec:all}. In Theorem \ref{thm:mainI} we prove a large collection of
entropy inequalities parametrized by the downward-closed subsets of the
non-negative lattice points. Lemmas estimating different entropy expressions
are used in this section only. Unfortunately, many inequalities provided by
Theorem \ref{thm:mainI} are consequences of the others. Claims and lemmas in
Section \ref{sec:reducedset} provide the theoretical foundation for our
algorithm that selects and enumerates the \emph{extremal} inequalities
among~them.

\smallskip

The remaining part of the paper is organized as follows. Notations are
recalled in Section \ref{sec:prelim}. Section \ref{sec:method} describes the
special variant of the Maximum Entropy Method we apply to $\Ga5$. Section
\ref{sec:prepare} discusses possible simplifications, including how symmetry
can be utilized and how the MEM parameters were chosen. Section
\ref{sec:compresults} describes the chosen coordinate systems, polyhedral
computations, and their results. Section \ref{sec:ineq} presents the
five-variable inequalities we obtained, paving the way for the definition of
two infinite families of such inequalities in Section \ref{sec:all}.
Additional theoretical results, including the proof that inequalities in
these families are indeed generated by the MEM method, are presented in
Section \ref{sec:all}. Section \ref{sec:reducedset} discusses methods that
can recognize extremal inequalities and the incremental algorithm that
enumerates the extremal inequalities for each MEM generation, describes the
large-scale behavior of the new inequalities, and investigates the delimited
part of the five-variable entropy region. In Section \ref{sec:fminf},  we take
a look at the five-variable non-Shannon entropy inequalities constructed by
F.~Mat\'u\v s in \cite{M.infinf} and compare them to a special subclass of
the inequalities obtained in Section \ref{sec:all}. Finally, Section
\ref{sec:final} summarizes our work, lists open questions, and provides
directions for further research.
 

\section{Preliminaries}\label{sec:prelim}

In this paper all sets are finite. Capital letters, such as $A$, $J$, $N$,
etc., denote (finite) sets; elements of these sets are denoted by lower case
letters. The union sign and the curly brackets around singletons are
frequently omitted, thus, $Nij$ denotes the set $N\cup\{i,j\}$. The
difference of two sets is written as $A\m-B$, or $A\m-b$ if the second set
is a singleton. The star in the union $A\cups B$ emphasizes that $A$ and $B$
are disjoint sets. A partition of $N$ is a collection of non-empty disjoint
subsets of $N$ whose union equals $N$.

A \emph{discrete random variable $\xi$} takes its values from a finite set
$\mathcal X$, called \emph{alphabet}. The probability that $\xi$ takes
$x\in\mathcal X$ is denoted by $\Pr(\xi=x)$, or simply by $\Pr(x)$ when the
random variable $\xi$ is clear from the context. Suppose $\xi$ is defined on
the direct product $\mathcal X=\prod_{i\in N} \mathcal X_i$ for some finite
set $N$, called the \emph{base set}. For a non-empty $A\subseteq N$ the
\emph{marginal $\xi_A$} is defined on the product alphabet $\mathcal X_A =
\prod_{i\in A}\mathcal X_i$ so that the probability of $y\in \mathcal X_A$
is the sum of the probabilities of those $x\in\mathcal X$ whose projection
to $\mathcal X_A$ equals $y$:
$$
   \Pr(y) = \sum \{ \Pr(x): x\restr A = y \}.
$$
To emphasize that $\xi$ is defined on a product space, we write
$\xi=(\xi_i:i\in N)$, and say that the random variables $\xi_i$ are
\emph{distributed jointly}. The Shannon entropy of the distribution
$\xi$ is defined as
$$
   \Hen(\xi) = \sum_{x\in\mathcal X} -\Pr(x)\log \Pr(x)
$$
with the convention that $0\log 0 = 0$. If $\xi=(\xi_i:i\in N)$ is a joint
distribution, then we write $\Hen_\xi(A)$ for $\Hen(\xi_A)$. The index $\xi$
is also dropped when it is clear from the context. By convention,
$\Hen(\emptyset)=0$. The entropies $\Hen_\xi(A)$ are arranged into a vector
indexed by the non-empty subsets $A$ of $N$. This vector is the
\emph{entropy profile} of the distribution $\xi$. The collection of these
$(2^{|N|}\m-1)$-dimensional vectors forms the \emph{entropy region}, denoted
by $\Ga N$. Elements of $\Ga N$ are considered interchangeably as vectors,
as points in this Euclidean space, and as functions assigning non-negative
real numbers to non-empty subsets of the base set $N$. For a gentle
introduction to these notions of Information Theory, please consult
\cite{yeung-book}.

Notions of conditional entropy, mutual information, and conditional mutual
information from Information Theory are formally extended to the functional
form of these vectors. If $f$ is any function on subsets of $N$, then for
subsets $A,B,C,D$ of $N$ the following forms will be used as abbreviations:
\begin{align*}
   f(A\|B) &\eqdef f(AB)-f(B), \\
   f(A,B)  &\eqdef f(A)+f(B)-f(AB), \\
   f(A,B\|C) &\eqdef f(AC)+f(BC)-f(ABC)-f(C), \mbox{ and}\\
   f[A,B,C,D] &\eqdef -f(A,B)+f(A,B\|C)+f(A,B\|D)+f(C,D).
\end{align*}
The first three expressions are called \emph{conditional entropy},
\emph{mutual information}, and \emph{conditional mutual information},
respectively. The last line defines the \emph{Ingleton expression}. An
entropy function is not defined on the empty set, nevertheless,
$f(\emptyset)=0$ will be assumed whenever convenient. In particular,
$f(A,B\|\emptyset)$ and $f(A,B)$ are the same expressions. Frequently, when
clear from the context, the function $f$ is omitted before the parenthesized
expression. Additionally, if applied to singletons, the Ingleton expression
is written without commas. An example is the inequality
\begin{equation}\label{eq:sampleineq}
  \abcd+(a,b\|z)+(b,z\|a)+(a,z\|b)+3(z\|ab) \ge 0.
\end{equation}

The \emph{Shannon inequalities} state the non-negativity of the conditional
entropy, mutual information, and conditional mutual information for all
subsets $A$, $B$, $C$ of the base set $N$. They are consequences of the
unique minimal set of such inequalities, called \emph{basic Shannon
inequalities}, see \cite{yeung-book}, listed in (\bref{B1}) and (\bref{B2})
below:
\begin{itemize}\setlength\itemsep{3pt plus 1pt minus 2pt}
\item[(B1)]\blabel{B1}$f(i\|N\sm i)\ge 0$ for all $i\in N$;
\item[(B2)]\blabel{B2}$f(a,b\|K)\ge 0$ for all $K\subseteq N$ and
different $a,b\in N\sm K$, including $K=\emptyset$.
\end{itemize}
The collection of all $(2^{|N|}\m-1)$-dimensional vectors (or points, or
functions) that satisfy the Shannon inequalities is denoted by $\G N$. It is
a natural outer bound for the entropy region $\Ga N$. $\G N$ is a pointed
polyhedral cone \cite{ziegler}; its facets are the hyperplanes specified by
the basic Shannon inequalities. \emph{Polymatroids} are elements of $\G N$
written in functional form. A polymatroid is usually written as $(f,N)$, or
just $f$, and we say that \emph{$f$ is on $N$}. The polymatroid $f$ is
\emph{entropic} if it is in $\Ga N$, and \emph{almost entropic}, or
\emph{aent} for short, if it is in the closure (in the usual Euclidean
topology) of $\Ga N$. Linear inequalities valid for all polymatroids are
consequences of the basic Shannon inequalities; an example is the inequality
(\ref{eq:sampleineq}). A \emph{non-Shannon inequality} is a homogeneous
linear inequality that is valid for points of the entropic region but not
for all points of the Shannon bound. Equivalently, the non-negative side of
the hyperplane corresponding to such an inequality contains the complete
entropy region, while it cuts properly into $\G N$.

The closure of the entropic region is a pointed convex full-dimensional
cone \cite{yeung-book}, and only its boundary points can be non-entropic
\cite{M.fmadhe}.
 
\smallskip
The polymatroid $(f,N)$ on the base set $N$ is \emph{linearly representable
over the field $\mathbb F$}, or \emph{ $\mathbb F$-representable} in short,
if there is a finite-dimensional vector space $V$ over $\mathbb F$, and
linear subspaces $V_i\subseteq V$ for $i\in N$, such that for all
$I\subseteq N$, $f(I)$ is the dimension of the linear subspace spanned by
$\bigcup_{i\in I} V_i$. Clearly, if both $(f,N)$ and $(g,N)$ are $\mathbb
F$-representable over the same field, then so is their sum $f+g$. The
polymatroid $f$ is \emph{$\mathbb F$-linear} if it is in the closure of the
multiplies of $\mathbb F$-representable polymatroids. By the previous
remark, $\mathbb F$-linear polymatroids form a closed cone. Finally, $f$ is
\emph{linear}, if it is $\mathbb F$-linear for some field $\mathbb F$.

Following a compactness argument, if $f$ is $\mathbb F$-representable, then
it is representable over some finite field as well, see \cite{jbell},
meaning that the vector space $V$ is also finite. Taking the uniform
distribution on $V$ provides the entropic polymatroid $(\log |V|)f$. Thus,
linear polymatroids are also almost entropic.

Linear polymatroids on the base set $N$ with $|N|\le 5$ are $\mathbb
F$-linear for every field $\mathbb F$, see \cite{Ma.Stud} and \cite{DFZ10};
this statement is not true in general. For
$|N|\le3$ every polymatroid is linear. For $N=\{abcd\}$ a polymatroid $f$ on
$N$ is linear if and only if it satisfies the following six instances of the
Ingleton inequality:
$$\begin{array}{ccc}
   f\abcd\ge 0, & f\acbd\ge 0, & f[adbc]\ge 0, \\[2pt]
   f[bcad\tsp]\ge 0, & f[bdac]\ge 0,     & f\cdab\ge 0,
\end{array}$$
see \cite{Ma.Stud}.
Since the Ingleton expression is symmetric in the first two and in the
last two arguments, these expressions cover all $24$ permutations of
$N$.

\smallskip
Finally, we recall notions of independence. Let $(f,N)$ be a polymatroid,
and $X$, $Y_1$, \dots, $Y_k$ be disjoint subsets of $N$. $Y_1$ and $Y_2$ are
\emph{independent in $f$} if $f(Y_1,Y_2)=0$. The collection $Y_1,\dots,Y_k$
is \emph{completely independent} in $f$ if for any two disjoint subsets $I$
and $J$ of the indices $\{1,2,\dots,k\}$, $Y_I=\bigcup_{i\in I}Y_i$ and
$Y_J$ are independent, or, equivalently, if
$$ 
   f(Y_1\cdots Y_k) = f(Y_1)+\cdots +f(Y_k).
$$
In this case we also have $f(Y_I)=\sum_{i\in I} f(Y_i)$ for every subset $I$
of the indices. The disjoint subsets $Y_1$ and $Y_2$ are \emph{conditionally
independent over $X$} if $f(Y_1,Y_2 \|X)=0$; and $Y_1,\dots,Y_k$ are
\emph{completely conditionally independent over $X$} if $Y_I$ and $Y_J$ are
conditionally independent over $X$ for arbitrary disjoint subsets $I$ and
$J$ of the indices. An equivalent condition is
$$
   f(Y_1\cdots Y_k\|X) = f(Y_1\|X)+ \cdots + f(Y_k\|X),
$$
which similarly implies $f(Y_I\|X)=\sum_{i\in I} f(Y_i\|X)$ for every
index set $I$.


\section{The Maximum Entropy Method}\label{sec:method}

In general terms, the principle of maximum entropy is easy to formulate:
\emph{if a probability distribution is specified only partially, take the
one with the largest entropy,} see, e.g., \cite{maxentp}. In the particular
case applied here ``partial specification'' means fixing some, but not all,
marginal distributions. To be more concrete, suppose $\xi$ is distributed
jointly on the base set $N$. Partition $N$ into three non-empty subsets as
$N= Y\cups X\cups Z$. Take $n\ge 1$ disjoint copies of $Y$ and $m\ge 1$
disjoint copies of $Z$ to form the enlarged base set
$$
  N^* = Y_1\cups \cdots \cups Y_n \cups X \cups Z_1\cups \cdots\cups Z_m.
$$
Consider the collection of those distributions $\xi^*$ on $N^*$ whose
marginals on $Y_iX$ are equal to $\xi_{\YX}$, and marginals on $XZ_j$ are
equal to $\xi_{XZ}$. That is, the marginal of $\xi$ on $\YX$ and the
marginals of $\xi^*$ on all $Y_iX$ are the same as well as the marginal of
$\xi$ on $XZ$ and the marginals of $\xi^*$ on $XZ_j$. This collection of
distributions is not empty, as one can take each $Y_i$ to be the same as
$Y$, and each $Z_j$ to be the same as $Z$. The total entropy is a strictly
concave function of the probability masses, and fixing certain marginals
imposes linear constraints on those masses. Consequently, there is a unique
optimal distribution $\xi^*$ with maximum total entropy, see
\cite{convex-optim}. Although structural properties of the maximum entropy
distributions are mainly unknown, they are known to satisfy numerous
conditional independencies. For this particular case, these are stated as
Lemma \ref{lemma:maxent1} below.

\begin{lemma}\label{lemma:maxent1}
In the distribution with maximum total entropy, the subsets $Y_1,\dots,
Y_n$ and $Z_1,\dots, Z_m$ are completely conditionally independent
over $X$.
\end{lemma}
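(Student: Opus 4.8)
The plan is to pin down the maximum-entropy distribution $\xi^*$ explicitly --- it is the distribution in which $Y_1,\dots,Y_n,Z_1,\dots,Z_m$ are made conditionally independent over $X$ while keeping the prescribed marginals --- and then to read the conclusion straight off the definition of complete conditional independence.

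The first step is to record that the prescribed marginals agree on the overlap $X$: both $\xi_{YX}$ and $\xi_{XZ}$ are marginals of the single distribution $\xi$, hence both have $X$-marginal $\xi_X$. Using the canonical identifications $Y_i\cong Y$, $Z_j\cong Z$, this lets me build a distribution $\eta$ on $N^*$ by
$$
  {\Pr}_{\eta}(y_1,\dots,y_n,x,z_1,\dots,z_m)\eqdef
  {\Pr}_{\xi}(x)\prod_{i=1}^{n}{\Pr}_{\xi}(y_i\mid x)\prod_{j=1}^{m}{\Pr}_{\xi}(z_j\mid x),
$$
with the convention that the product is $0$ when ${\Pr}_{\xi}(x)=0$. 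Summing out the unneeded coordinates shows the $Y_iX$-marginal of $\eta$ is $\xi_{YX}$ and the $XZ_j$-marginal is $\xi_{XZ}$, so $\eta$ is feasible for the maximization; and $Y_1,\dots,Y_n,Z_1,\dots,Z_m$ are completely conditionally independent over $X$ in $\eta$ by construction, whence, by the chain rule,
$$
  \Hen(\eta)=\Hen(\xi_X)+\sum_{i=1}^{n}\Hen_\xi(Y\|X)+\sum_{j=1}^{m}\Hen_\xi(Z\|X)\eqdef M .
$$

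The second step is to show that $M$ already bounds the total entropy of an \emph{arbitrary} feasible distribution $\xi'$. Feasibility fixes $\Hen(\xi'_X)=\Hen(\xi_X)$ and, as the $Y_iX$- and $XZ_j$-marginals are prescribed, $\Hen_{\xi'}(Y_i\|X)=\Hen_\xi(Y\|X)$ and $\Hen_{\xi'}(Z_j\|X)=\Hen_\xi(Z\|X)$. Applying the chain rule once and then conditional subadditivity --- obtained by iterating the basic Shannon inequality (\bref{B2}) in its ``conditioning cannot increase entropy'' form --- gives
$$
  \Hen(\xi')=\Hen(\xi'_X)+\Hen_{\xi'}(Y_1\cdots Y_nZ_1\cdots Z_m\|X)
  \le\Hen(\xi_X)+\sum_{i=1}^{n}\Hen_\xi(Y\|X)+\sum_{j=1}^{m}\Hen_\xi(Z\|X)=M .
$$
So $M$ is an upper bound attained by $\eta$; since the maximizer is unique by strict concavity of the total entropy (as noted before the statement), $\xi^*=\eta$. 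Equivalently, with $\xi'=\xi^*$ the last display is an equality, i.e.
$$
  \Hen_{\xi^*}(Y_1\cdots Y_nZ_1\cdots Z_m\|X)=\sum_{i=1}^{n}\Hen_{\xi^*}(Y_i\|X)+\sum_{j=1}^{m}\Hen_{\xi^*}(Z_j\|X),
$$
which is precisely the asserted complete conditional independence over $X$.

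The only step that is not purely mechanical is making the bound $\Hen(\xi')\le M$ tight, i.e.\ exhibiting a feasible distribution of entropy exactly $M$; that is what the explicit $\eta$ does, and its feasibility rests only on the consistency of the prescribed marginals over $X$ --- everything else is the chain rule and instances of the basic Shannon inequalities. (An alternative I would mention but not carry out is to invoke the exponential-family shape of maximum-entropy distributions, ${\Pr}^{*}=\prod_i\alpha_i(y_i,x)\prod_j\beta_j(x,z_j)$, and observe that a factorization over a hypergraph whose hyperedges pairwise meet exactly in $X$ forces conditional independence over $X$; the direct comparison above is shorter and self-contained.)
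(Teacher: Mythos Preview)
Your proof is correct. The paper's own proof is only a one-line sketch (if some conditional independence fails, redefine the distribution to keep the marginals while strictly increasing the total entropy, contradicting maximality), with details deferred to \cite{exploring}. Your argument takes a slightly different, and in fact cleaner, route: rather than an improvement-by-contradiction step, you exhibit the conditionally independent extension $\eta$ explicitly, verify it is feasible, compute its entropy $M$ via the chain rule, and then use conditional subadditivity to show $M$ is a global upper bound on the entropy of any feasible distribution; uniqueness of the maximizer then forces $\xi^*=\eta$. What your approach buys is a self-contained optimality certificate (the pair ``feasible point $\eta$'' $+$ ``matching upper bound $M$'') that simultaneously identifies the maximizer, whereas the paper's sketch only establishes that the maximizer, whatever it is, must satisfy the conditional independences. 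The two arguments are of course equivalent in spirit---the ``redefinition'' in the paper's sketch is precisely replacing $\xi^*$ by your $\eta$---but yours avoids any appeal to the cited reference.
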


\begin{proof}
If some of the conditional independence statements do not hold,
then one can redefine the distribution keeping the specified
marginals while increasing the total entropy. For details, see
\cite{exploring}.
\end{proof}

Since identical distributions have identical entropy profiles, Lemma
\ref{lemma:maxent1} immediately implies that an entropic polymatroid
has an \emph{$n,m$-copy} as defined below:

\begin{definition}\label{def:maxent1}
Let $f$ be a polymatroid on $N$, and partition $N$ into three non-empty
subsets as $N=Y\cups X\cups Z$. Let $Y_1,\dots,Y_n$ and $Z_1,\dots,Z_m$
be disjoint copies of $Y$ and $Z$, respectively. The polymatroid $f^*$
on the base set $N^*=Y_1\dots Y_n\tsp X \tsp Z_1\dots Z_m$ is an
\emph{$n,m$-copy of $f$} if
\begin{itemize}
\item[(i)]
   $f^*$ restricted to $Y_iX$ is isomorphic to $f\restr \YX$ for
   every $i\le n$,
\item[(ii)]
   $f^*$ restricted to $XZ_j$ is isomorphic to 
   $f\restr XZ$ for every $j\le m$,
\item[(iii)]
   the $n\m+m$ subsets $Y_1, \dots, Y_n,\, Z_1, \dots, Z_m$ are completely
   conditionally independent over $X$ in $f^*$.
\end{itemize}
\end{definition}
The special version of the Maximum Entropy Method used in this paper is
based on the fact that entropic polymatroids have $n,m$-copies. For fixed
integers $n$ and $m$, polymatroids on $\YXZ$ that have an $n,m$-copy form a
polyhedral cone $\C_{n,m}$. This is proved as Claim \ref{claim:mempoly}
below. The cone $\C_{n,m}$ contains the complete entropy region $\Ga{\YXZ}$,
and is contained in the Shannon cone $\G{\YXZ}$. Consequently, bounding
facets of the cone $\C_{n,m}$ that are not facets of the Shannon cone
provide new entropy inequalities. This method is summarized as follows.

\begin{method}[special case]
Fix the base set $N$ and the partition $N=Y\cups X\cups Z$. For $n,m \ge 1$
let $\C_{n,m}$ be the polyhedral cone of those polymatroids on $N$ that have
an $n,m$-copy. Compute all bounding facets of $\C_{n,m}$ as homogeneous
linear inequalities, and delete those which are consequences of the basic
Shannon inequalities. The remaining inequalities form the maximal set of
non-Shannon inequalities provided by the partition $\YXZ$ and the numbers
$n$ and $m$.
\end{method}

Let us remark that while the maximum entropy extension is unique, the
$m,n$-copy in Definition \ref{def:maxent1} is typically not, as the
definition captures only a small part of the properties of the maximum
entropy extension. The obtained entropy inequalities form the facets of a
convex polytope; consequently, they are independent in the sense that none
of them is a consequence of the others or the Shannon~inequalities.

\smallskip
Next we prove that $\C_{n,m}$ is a polyhedral cone indeed.

\begin{claim}\label{claim:mempoly}
Polymatroids $(f,N)$ with an $n,m$-copy form a polyhedral cone.
\end{claim}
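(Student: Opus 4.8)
The plan is to exhibit $\C_{n,m}$ as a linear projection of an explicitly-defined polyhedral cone living in the higher-dimensional space indexed by subsets of $N^*$, and then invoke the standard fact that a linear image of a polyhedral cone is again a polyhedral cone. Concretely, let $D = 2^{|N^*|}-1$ and consider the cone $\widetilde{\C} \subseteq \R^D$ consisting of all polymatroids $g$ on $N^*$ that satisfy conditions (i), (ii), (iii) of Definition \ref{def:maxent1}. The point is that each of these conditions is expressible by finitely many homogeneous linear equalities in the coordinates $g(I)$, $I\subseteq N^*$: condition (iii) is the single linear equation $g(Y_1\cdots Y_n Z_1\cdots Z_m\|X) = \sum_i g(Y_i\|X) + \sum_j g(Z_j\|X)$ together with the Shannon inequalities that force the conditional mutual informations to be nonnegative (so that the vanishing of the sum forces each term to vanish), while conditions (i) and (ii) say that certain coordinates of $g$ indexed by subsets of $Y_iX$, respectively $XZ_j$, equal the corresponding coordinates under the fixed isomorphism — again linear equations relating the $g$-coordinates to each other and to the coordinates of the "$f$-part" sitting on $YXZ\subseteq N^*$ (after choosing $Y_1=Y$, $Z_1=Z$ say, or more symmetrically, identifying the common marginal). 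Adding the basic Shannon inequalities (\bref{B1}), (\bref{B2}) for $N^*$ makes $\widetilde\C$ a polyhedral cone.

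Next I would define the projection $\pi\colon \R^D \to \R^{2^{|N|}-1}$ that reads off, from a polymatroid $g$ on $N^*$, the sub-profile on the copy $Y_1 X Z_1$ (which by (i) and (ii) carries the full marginal structure of the original $f$ on $YX$ and on $XZ$, and whose value on all of $Y_1XZ_1$ is determined by $g$). This $\pi$ is a coordinate projection, hence linear. I claim $\C_{n,m} = \pi(\widetilde\C)$. The inclusion $\pi(\widetilde\C)\subseteq \C_{n,m}$ holds because if $g\in\widetilde\C$ then $g$ itself is an $n,m$-copy of $\pi(g)$: conditions (i)–(iii) are exactly what is required, once one checks that the restriction of $g$ to $Y_iX$ is isomorphic to $\pi(g)\restr YX$ — which follows since both are isomorphic to $g\restr Y_1X$ by (i). Conversely, if $f$ has an $n,m$-copy $f^*$, then $f^*\in\widetilde\C$ and $\pi(f^*) = f$ (the copy agrees with $f$ on the relevant marginals by (i)–(ii)), so $f\in\pi(\widetilde\C)$. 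Then the Minkowski–Weyl theorem (equivalently, Fourier–Motzkin elimination applied to the variables being projected out), cited e.g. via \cite{ziegler}, gives that $\pi(\widetilde\C)$ is a polyhedral cone.

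The main obstacle — really the only place requiring care — is the bookkeeping in setting up $\widetilde\C$ and $\pi$ correctly so that "isomorphic to $f\restr YX$" and "isomorphic to $f\restr XZ$" are captured by genuine linear constraints rather than merely an existential statement about an isomorphism. The resolution is that an $n,m$-copy comes with a \emph{fixed} choice of the bijections between each $Y_i$ and $Y$, and between each $Z_j$ and $Z$ (they are "copies"), so "isomorphic" here means equal under the canonical relabeling; thus for each $i$ and each subset $S\subseteq YX$ we get the linear equation $g(\sigma_i(S)) = f(S)$ where $\sigma_i$ is the canonical bijection, and we simply treat the $f(S)$ as further linear forms in the $g$-coordinates via the identification of $f$ with $g\restr Y_1XZ_1$. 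One should also note the cone is pointed and contains $\Ga{N}$ (take $g$ to be the copy obtained by setting all $Y_i=Y$, $Z_j=Z$, which is a polymatroid whenever $f$ is, though this is not needed for the claim itself), and that it is full-dimensional inside the Shannon cone, but those remarks are auxiliary. I would keep the write-up short: define $\widetilde\C$, observe it is cut out by finitely many homogeneous linear (in)equalities, define the coordinate projection $\pi$, verify the two inclusions in a sentence each, and conclude by Minkowski–Weyl.
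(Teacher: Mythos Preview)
Your approach has a genuine gap in the converse inclusion $\C_{n,m}\subseteq\pi(\widetilde\C)$. You assert that if $f$ has an $n,m$-copy $f^*$ then $\pi(f^*)=f$, but this is false: Definition~\ref{def:maxent1} only forces $f^*\restr Y_1X\cong f\restr YX$ and $f^*\restr XZ_1\cong f\restr XZ$; it says \emph{nothing} about the value of $f^*$ on subsets of $Y_1XZ_1$ that meet both $Y_1$ and $Z_1$. Worse, condition~(iii) forces $f^*(Y_1,Z_1\|X)=0$, so $\pi(f^*)$ always satisfies $(Y,Z\|X)=0$, whereas $f$ need not. Thus for any $f$ with $f(Y,Z\|X)>0$ (and such $f$ certainly lie in $\C_{n,m}$, e.g.\ any entropic one by Lemma~\ref{lemma:maxent1}) there is \emph{no} $g\in\widetilde\C$ with $\pi(g)=f$, and your claimed equality $\C_{n,m}=\pi(\widetilde\C)$ fails.

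The paper's proof avoids this by splitting the coordinates of $f$ into $(\mathbf x,\mathbf u)$, where $\mathbf x$ carries the values on subsets of $YX$ or $XZ$ and $\mathbf u$ carries the ``mixed'' values. The existence of an $n,m$-copy constrains only $\mathbf x$; the $\mathbf u$-part is free except for the Shannon inequalities of $\G N$. So $\C_{n,m}$ is the intersection of the polymatroid cone $\G N$ (in $(\mathbf x,\mathbf u)$) with the cylinder over the projection (to $\mathbf x$) of the copy-cone --- an intersection of two polyhedral cones. Your argument can be repaired along exactly these lines: project $\widetilde\C$ only to the $\mathbf x$-coordinates, then intersect with $\G N$.
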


\begin{proof}
Consider the polymatroid $f$ as a $(2^{|N|}\m-1)$-dimensional vector indexed
by the non-empty subsets of $N$. Write this vector as $(\mathbf x,\mathbf
u)$ where $\mathbf x$ of dimension $d_1$ contains those coordinates where
the index $I$ is a subset of either $\YX$ or $XZ$, and $\mathbf u$ of
dimension $d_2$ contains the rest, namely those subsets that intersect both
$Y$ and $Z$. Clearly, $d_1+d_2=2^{|N|}\m-1$. Similarly, let $\mathbf y$ be
the vector formed from the values of the $n,m$-copy polymatroid $f^*$ as
indexed by the subsets of $N^*$. The vector $\mathbf y$ has dimension
$d_3=2^{|N^*|}\m-1$. Now, $(f^*,N^*)$ is a polymatroid if the vector
$\mathbf y$ satisfies all linear inequality constraints imposed by the basic
Shannon inequalities in (\bref{B1}) and (\bref{B2}); and it is an $n,m$-copy
of $f$ if, additionally, the composed vector $(\mathbf x,\mathbf y)$
satisfies the equality constraints corresponding to conditions (i)---(iii)
in Definition \ref{def:maxent1}. Consequently, there exists a matrix $M$
with $d_1+d_3$ columns, depending only on the partition $\YXZ$ and the
numbers $n$ and $m$, so that $f$ has an $n,m$-copy if and only if there is a
vector $\mathbf y$ satisfying $M\m\cdot (\mathbf x,\mathbf y)^\T\ge 0$.
Similarly, $(f,N)$ is a polymatroid if, for another matrix $B$ with
$(d_1+d_2)$ columns expressing the basic Shannon inequalities for $\YXZ$, we
have $B\m\cdot(\mathbf x, \mathbf u)^\T\ge 0$. Thus the collection of
polymatroids on $N$ that have an $n,m$-copy is the set
\begin{align*}
    \mathcal Q = \big\{(\mathbf x,\mathbf u)\in\R^{d_1+d_2} :{}&
       B\m\cdot(\mathbf x,\mathbf u)^\T\ge 0, \mbox{ and}\\
       & M\m\cdot (\mathbf x,\mathbf y)^\T\ge 0
       \mbox{ for some } \mathbf y\in\R^{d_3}
    \big\}.
\end{align*}
Here $M$ and $B$ are matrices with integer entries; these matrices depend
only on $\YXZ$, $n$, and $m$. Since $\mathcal Q$ is the intersection of a
polyhedral cone and the projection of a polyhedral cone, it is also a
polyhedral cone, as claimed.
\end{proof}

From the proof it is clear that the $\mathbf u$-part of $\mathcal Q$ is
constrained only by the basic Shannon inequalities encoded in the matrix
$B$. Furthermore, constraints on $\mathbf x$ imposed by the first condition
are contained in the second one. Thus, it suffices to consider the bounding
facets of
\begin{equation}\label{eq:qstar}
   \mathcal Q^* = \big\{ \mathbf x\in\R^{d_1}:
     M\m\cdot(\mathbf x,\mathbf y)^\T\ge 0
     \mbox{ for some } \mathbf y\in\R^{d_3}
   \big\}
\end{equation}
for new entropy inequalities. This is because, due to the duality theorem
of linear programming \cite{ziegler}, facets of $\mathcal Q$ are
convex linear combinations of facets of $\mathcal Q^*$ and facets
corresponding to the basic Shannon inequalities for the base set $\YXZ$.

Coordinates in $\mathbf x$ are indexed by subsets of $\YX$ and $XZ$, so the
inequalities provided by the bounding facets of $\mathcal Q^*$ contain only
elements of the restrictions $f\restr \YX$ and $f\restr XZ$. We emphasize
that these restrictions are not arbitrary polymatroids on $\YX$ and $XZ$
with a common restriction on $X$, as they also have a common extension,
namely $f$. Conditions ensuring the existence of such a common extension are
assumed to hold, see \cite{Csirmaz.oneadhesive}, and they do not contribute
towards the non-Shannon entropy inequalities we are searching for.


\section{What to Compute? How to Compute?}\label{sec:prepare}

As discussed in Section \ref{sec:method}, the task of finding new
non-Shannon entropy inequalities implied by the existence of an $n,m$-copy
reduces to enumerating all facets of the polyhedral cone $\mathcal Q^*$
defined in (\ref{eq:qstar}). However, without further reduction, this
polyhedral computation is intractable even for small parameter values.
Therefore, in this Section we look at some general methods to reduce the
complexity of the computation, and then discuss how the number of elements
in the $\YXZ$ partition was chosen.

\subsection{Tight and Modular Parts}

Both the polyhedral region $\G N$ and the closure of the entropy region
$\clGa N$ decompose naturally into direct sums of \emph{modular} and
\emph{tight} parts, see \cite{entreg}. To discuss this result, let us first
introduce some notation. For $i\in N$ define the function $\r_i$ on the
non-empty subsets $A$ of $N$ as
\begin{equation}\label{eq:r}
   \r_i: A \mapsto \begin{cases}
      1 & \mbox{ if } i\in A,\\
      0 & \mbox{ otherwise. }
   \end{cases}
\end{equation}
Non-negative multiples of $\r_i$ are clearly entropic polymatroids;
\emph{modular polymatroids} are, by definition, the conic combinations of
the vectors $\{\r_i:i\in N\}$. For a polymatroid $(f,N)$, a singleton $i\in
N$ and a real number $\alpha\ge 0$, the function $f\dn^i_\alpha$ is defined
on the non-empty subsets of $N$ as follows:
$$
  f\dn^i_\alpha : A\mapsto \min\{f(Ai)\m-\alpha, ~f(A)\}.
$$
When $\alpha$ is set to $f(i\|N\sm i)$, $f\dn^i_\alpha$ is denoted simply by
$f\dn^i$. Note that for $i\notin A$ we have $f(Ai)-f(A)\ge f(N)-f(N\sm i)
=f(i\|N\sm i)$ by submodularity. Consequently, $f\dn^i$ can be written
explicitly as
$$
   f\dn^i (A) = \begin{cases}
      f(A)-f(i\|N\sm i) & \mbox{ if } i \in A, \\
      f(A)              & \mbox{ if } i\notin A.
   \end{cases}
$$
Therefore, $f= f\dn^i + f(i\|N\sm i)\tsp\r_i$, where $\r_i$ is the
polymatroid defined in (\ref{eq:r}). The result of \emph{tightening $f$ at
$i$} is the function $f\dn^i$. The \emph{tight part of $f$}, denoted by
$f\dn$, is the result of tightening $f$ at every element of its base set
$N=\{i_1,\dots,i_n\}$:
$$
    f\dn = (\cdots(f\dn^{i_1})\dn^{i_2} \cdots )\dn^{i_n}.
$$
This result is independent of the order in which the reductions are applied,
which is also shown by the decomposition formula
$$
   f = f\dn + \sum_{i\in N} f(i\|N\sm i)\tsp \r_i .
$$
The proof of the following lemma can be found in \cite{exploring} or
\cite{M.twocon}. In this paper only the first part of the lemma is needed,
which can be verified by direct computation.

\begin{lemma}\label{lemma:down}
Let $0\le \alpha\le f(i)$. 
If $f$ is a polymatroid, then $f\dn^i_\alpha$ is also a polymatroid.
If, in addition, $f$ is almost entropic, then so is $f\dn^i_\alpha$.
\qed
\end{lemma}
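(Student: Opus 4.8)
\emph{First part: the polymatroid axioms.} Treat $f$ as defined on all subsets of $N$ with $f(\emptyset)=0$ and set $f_i(S):=f(Si)\m-\alpha$, so that $f\dn^i_\alpha=\min\{f,f_i\}$. Then $(f\dn^i_\alpha)(\emptyset)=\min\{0,f(i)\m-\alpha\}=0$ since $\alpha\le f(i)$, and monotonicity is immediate from $f(S)\le f(T)$ together with $f(Si)\le f(Ti)$ for $S\subseteq T$; these also force non-negativity. The only axiom needing work is submodularity, which I would organize around the family
$$
   \mathcal U:=\{\,S\subseteq N : f(i\|S)<\alpha\,\}.
$$
Because $S\mapsto f(i\|S)=f(Si)\m-f(S)$ is non-increasing — this is submodularity of $f$ for the pair $Si,\,T$ whenever $S\subseteq T\not\ni i$, together with $f(i\|S)=0$ once $i\in S$ — the family $\mathcal U$ is upward closed and its complement downward closed; moreover the minimum defining $f\dn^i_\alpha$ is attained by $f_i$ on $\mathcal U$ and by $f$ off $\mathcal U$, and both $f$ and $f_i$ are submodular (for $f_i$ because $(S\cup i)\cup(T\cup i)=(S\cup T)\cup i$ and $(S\cup i)\cap(T\cup i)=(S\cap T)\cup i$). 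For $A,B\subseteq N$ one then splits into cases according to which of $A,B$ lie in $\mathcal U$; the up-set structure prunes the list, since $A\cap B\notin\mathcal U$ as soon as $A\notin\mathcal U$ or $B\notin\mathcal U$, $A\cup B\in\mathcal U$ as soon as $A\in\mathcal U$ or $B\in\mathcal U$, and — when $\alpha>0$ — $i\in S$ already forces $S\in\mathcal U$. In the two ``pure'' cases ($f\dn^i_\alpha$ equal to $f$ at all of $A,B,A\cup B,A\cap B$, or equal to $f_i$ at all four) the submodular inequality for $f$, respectively for $f_i$, closes the case. In the ``mixed'' cases one additionally uses that $f(Si)\ge f(S)+\alpha$ holds exactly off $\mathcal U$; for instance if $A\notin\mathcal U$ while $B\in\mathcal U$ then $i\notin A$, so $A\cap Bi=A\cap B$, and submodularity of $f$ at $A$ and $Bi$ finishes it. For $\alpha=0$ there is nothing to prove, as $f\dn^i_0=f$.

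\emph{Second part: reductions and the easy cases.} First I would reduce to $f$ entropic with $f(i)\ge\alpha$. An aent $f$ is a limit of entropic $f_n$; adjoining an independent uniform variable to the $i$-th coordinate adds a non-negative multiple of $\r_i$, so $f_n+\max\{0,\alpha\m-f_n(i)\}\,\r_i$ is still entropic and has $i$-value $\ge\alpha$, and since $f(i)\ge\alpha$ these corrections tend to $0$; as $g\mapsto g\dn^i_\alpha$ is continuous and $\clGa N$ is closed, it suffices to treat entropic $f$ with $f(i)\ge\alpha$. Next, for such an $f$ the map $\alpha\mapsto f\dn^i_\alpha$ is piecewise linear, with breakpoints among the numbers $f(i\|S)$; they partition $[0,f(i)]$ into finitely many segments, on each of which $f\dn^i_\alpha$ is a convex combination of the two endpoint polymatroids, so it is enough to prove aent-ness at $\alpha=0$ (which returns $f$), at $\alpha=f(i)$, and at the interior breakpoints $f(i\|S)$. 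The value $\alpha=f(i)$ yields the contraction $f\dn^i_{f(i)}(S)=f(Si)\m-f(i)$, which equals the average over $x$ of the entropy profiles of the conditioned vectors $(\xi_j\mid\xi_i=x)_{j\in N}$ — a convex combination of entropic polymatroids, hence aent.

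\emph{Main obstacle.} What remains, and is the genuinely hard step, is an interior breakpoint $\alpha$ with $f(i\|N\sm i)<\alpha<f(i)$: here one must destroy part of the information $\xi_i$ shares with the other variables, which cannot be done by merely replacing $\xi_i$ with a coarsening independent of everything else. I would handle this by a typical-sequence (AEP) argument on $n$ independent copies of a distribution realizing $f$, producing for each $n$ a genuine random vector whose entropy profile approximates $n\cdot f\dn^i_\alpha$, and letting $n\to\infty$ — this is the construction carried out in \cite{exploring} and \cite{M.twocon}. Its difficulty is exactly that one must exhibit actual random variables matching all $2^{|N|}\m-1$ target entropies simultaneously while ``cutting into'' the mutual information between $\xi_i$ and the remaining variables in a controlled way; in contrast the first part is, once the up-set picture is in place, a purely finite case check, which is why the present paper verifies only that part directly and defers the rest to the cited works.
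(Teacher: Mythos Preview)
Your proposal is correct and matches what the paper does: the paper gives no proof here, merely remarking that the first part ``can be verified by direct computation'' and citing \cite{exploring,M.twocon} for the full statement. Your up-set case analysis via $\mathcal U=\{S:f(i\|S)<\alpha\}$ is exactly such a direct computation, and for the aent part you defer the hard interior-breakpoint case to the same references the paper cites; your additional piecewise-linear/convexity reduction and the explicit handling of the endpoints $\alpha=0$ and $\alpha=f(i)$ go slightly beyond what the paper records, but are sound and in the same spirit.
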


Accordingly, $f\dn$ (the \emph{tight part} of $f$) is a polymatroid, and it
is also almost entropic (aent) whenever $f$ is aent. The difference $f-f\dn$
is the \emph{modular part}, and it is a modular polymatroid. This
decomposition of $f$ into a tight and a modular part is unique, and both
parts are aent if $f$ is aent.

The cone formed by the modular polymatroids over $N$ is $|N|$-dimensional,
and is generated by the linearly independent vectors $\{\r_i:i\in N\}$. The
cone of tight polymatroids is orthogonal to this (modular) cone, and so to
every vector $\r_i$; and is bounded by the hyperplanes corresponding to the
basic Shannon inequalities in (\bref{B2}). The cone of tight, almost
entropic polymatroids is similarly orthogonal to the modular cone. A
consequence of this decomposition is that linear bounds on the entropic cone
also decompose into bounds on the tight part and bounds on the modular
part---the latter being trivial, that is, a Shannon inequality. The normal
$\mathbf n$ of a supporting hyperplane of the tight part is necessarily
orthogonal to all vectors $\r_i$, that is, the scalar products $\mathbf
n\cdot \r_i$ are zero. Consequently, if the normal has the coordinates
$\mathbf n=\langle t_I:I\subseteq N\rangle$, then the sum $\sum \{ t_I: i\in
I\}$ is zero for every $i\in N$. For this reason, these hyperplanes are
called \emph{balanced}. The tight component of any entropy inequality is
balanced, and it is also an entropy inequality. This fact is equivalent to
saying that every entropy inequality can be strengthened to become a
balanced one, see \cite{balanced}.

\smallskip
From the above it follows that the facets of the cone $\mathcal
Q^*$ belong to two disjoint groups. There are $|N|$ (trivial, Shannon)
facets that bound the modular part of $\mathcal Q^*$, and the rest
bound the tight part. The normal vectors of the facets in the second
group are balanced, and only they can provide non-Shannon inequalities.
Therefore, it suffices to consider only the tight part of $\mathcal Q^*$.
This part is generated by a smaller collection of polymatroids, has
fewer dimensions, and so can be handled more efficiently.

\begin{claim}\label{claim:reduction1}
The tight part of $\mathcal Q^*$ is generated by the $n,m$-copies of the
polymatroids $f$ on $N=\YXZ$ that are
{\upshape(i)} tight;
{\upshape(ii)} satisfy $f(Y,Z\|X)=0$; and
{\upshape(iii)} for all $y\in Y$, $f(y\|\YX\sm y)=0$, and for all
$z\in Z$, $f(z\|XZ\sm z)=0$.
\end{claim}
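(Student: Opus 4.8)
Here is a plan. By Claim~\ref{claim:mempoly} and the discussion after it, $\mathcal Q^{*}$ is precisely the set of vectors $\mathbf x_{f}$ — the list of values of a polymatroid $f$ on the subsets of $\YX$ and of $XZ$ — as $f$ runs over $\C_{n,m}$; being a convex cone it is the conic hull of these vectors. So it is enough to show that the tight part of $\mathcal Q^{*}$ is the conic hull of those $\mathbf x_{f}$ for which $f$ already satisfies (i)--(iii). The plan is to impose the three conditions one family at a time, absorbing what is discarded into the modular cone.

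First, condition (ii). Given $f\in\C_{n,m}$ with an $n,m$-copy $f^{*}$ on $N^{*}$, I would restrict $f^{*}$ to $Y_{1}XZ_{1}$ and pull it back along the copy isomorphism ($Y\to Y_{1}$, $X\to X$, $Z\to Z_{1}$) to a polymatroid $g$ on $N$. Then $g\restr\YX=f\restr\YX$ and $g\restr XZ=f\restr XZ$, so $\mathbf x_{g}=\mathbf x_{f}$; the same $f^{*}$ is an $n,m$-copy of $g$, so $g\in\C_{n,m}$; and complete conditional independence over $X$ (Definition~\ref{def:maxent1}(iii)), applied to the single indices of $Y_{1}$ and $Z_{1}$, gives $g(Y,Z\|X)=f^{*}(Y_{1},Z_{1}\|X)=0$. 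Hence $\mathcal Q^{*}$ is already the conic hull of the $\mathbf x_{g}$ with $g\in\C_{n,m}$ and $g(Y,Z\|X)=0$.

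Next, conditions (i) and (iii). For such a $g$ I would split off the modular part, $g=g\dn+\sum_{i\in N}g(i\|N\sm i)\,\r_{i}$, and check that $g\dn$ is ``good''. It is tight by construction; it still has $g\dn(Y,Z\|X)=0$ because $\r_{i}(Y,Z\|X)=0$ for every $i\in N$ (a one-line computation); and it satisfies (iii) for the following reason. From $g(Y,Z\|X)=0$ one gets, by monotonicity of conditional mutual information in the first argument, that $Y\sm y$ is conditionally independent of $Z$ over $X$, whence $g(y\|N\sm y)=g(y\|\YX\sm y)$; therefore tightening removes exactly the $\YX$-slack at $y$, $g\dn(y\|\YX\sm y)=g(y\|\YX\sm y)-g(y\|N\sm y)=0$, and symmetrically $g\dn(z\|XZ\sm z)=0$. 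Assuming $\C_{n,m}$ is stable under tightening (see below), $g\dn\in\C_{n,m}$, and $\mathbf x_{f}=\mathbf x_{g}=\mathbf x_{g\dn}+\sum_{i}g(i\|N\sm i)\,\mathbf x_{\r_{i}}$. Since each $\r_{i}$ is entropic, each $\mathbf x_{\r_{i}}$ lies in $\mathcal Q^{*}$; so $\mathcal Q^{*}$ is the conic hull of $\{\mathbf x_{h}:h\in\C_{n,m}\text{ satisfying (i)--(iii)}\}$ together with the modular generators $\mathbf x_{\r_{i}}$. As every such $h$ is tight, the first family lies in the tight part $T^{*}$ of $\mathcal Q^{*}$ while the $\mathbf x_{\r_{i}}$ span the modular part $M^{*}$; using the decomposition $\mathcal Q^{*}=T^{*}\oplus M^{*}$, uniqueness forces $T^{*}$ to equal the conic hull of the $\mathbf x_{h}$, which is the claim.

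The step that really needs work is the stability of $\C_{n,m}$ under tightening. Tightening at a $y\in Y$ (and likewise at a $z\in Z$) I would do directly on the copy: complete conditional independence over $X$ yields $f^{*}(y^{(k)}\|N^{*}\sm y^{(k)})=f(y\|\YX\sm y)\ge f(y\|N\sm y)$ for every copy $y^{(k)}$ of $y$, so that $f^{*}-f(y\|N\sm y)\sum_{k}\r_{y^{(k)}}$ is still a polymatroid and is an $n,m$-copy of $f\dn^{y}$. Tightening at an $x\in X$ is the delicate case, since the copy construction does not replicate the variables of $X$ and the quantity $f(x\|N\sm x)$ is not controlled by the copy; here I would either invoke the tight/modular decomposition of the MEM cone established in~\cite{exploring} or argue it separately, and I expect this to be the main obstacle.
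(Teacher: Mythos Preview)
Your approach is essentially the paper's: impose (ii) by passing to $f^*\restr Y_1XZ_1$, then strip off modular pieces to achieve (i) and (iii), checking that an $n,m$-copy survives each reduction. The paper does (iii) before (i) and reduces by $\alpha=f(y\|\YX\sm y)$ rather than $f(y\|N\sm y)$, but once (ii) holds these two quantities agree, as you observed.

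The step you flag as the main obstacle---tightening at $x\in X$---is not one, and the paper handles it in the same breath as the $Y$ and $Z$ cases. You do not need $g(x\|N\sm x)$ to equal $g^*(x\|N^*\sm x)$; you only need $\alpha\eqdef g(x\|N\sm x)\le g^*(x)=g(x)$, which holds. Apply Lemma~\ref{lemma:down} to $g^*$ with this $\alpha$: the result $(g^*)\dn^x_\alpha$ is a polymatroid; its restriction to each $Y_iX$ is $(g\restr\YX)\dn^x_\alpha=g\dn^x\restr\YX$ since $\alpha\le g(x\|\YX\sm x)$ by submodularity, and likewise on each $XZ_j$; and complete conditional independence over $X$ is preserved because every set occurring in an expression $(A,B\|X)$ contains $x$ and hence shifts by exactly $\alpha$. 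That is precisely what the paper's ``and finally on elements of $X$'' means, and it closes your gap with no new ideas.
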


Observe that the tightness of $f$ at the elements of $Y$ and $Z$ follows
from condition (iii) and submodularity; thus (i) is relevant only for
elements of $X$.

\begin{proof}
Let $f^*$ be an $n,m$-copy of $f$. In the definition of $\mathcal Q^*$
only the values of $f\restr \YX$ and the values of $f\restr XZ$ are
used. Therefore, $f$ can be replaced with any other polymatroid that
has the same restrictions. Such a polymatroid is $f^*\restr Y_1XZ_1$
by part (i) of Definition \ref{def:maxent1}, which gives (ii). For (iii)
let $y\in Y$, and $\alpha=f(y\|\YX\sm y)$. Apply Lemma \ref{lemma:down}
to $f^*$ and all instances of $y$ in the copies $Y_i$ to get the new
polymatroid $g^*$. Denoting the instance of $y$ in $Y_1$ by $y_1$,
the lemma provides $g^*(y_1\|Y_1X\sm y_1)=0$. In addition, $g^*$ is an
$n,m$-copy of its restriction to $Y_1XZ_1$. Since this restriction
and the polymatroid $f$ differ only by a modular shift on subsets of
$\YX$ and $XZ$, their tight parts are the same. A similar reduction on
elements of $Z$, and finally on elements of $X$ provide the statement.
\end{proof}

Using Claim \ref{claim:reduction1}, the number of columns in the constraint
matrix $M$ in (\ref{eq:qstar}) can be significantly reduced. It is so since,
by the tightness of $f$, $f^*(Ai)=f^*(A)$ holds for many subsets $A$ of
$N^*$ with few elements, and this equality implies $f^*(Bi)= f^*(B)$ for
every $A\subseteq B\subset N^*$.

\subsection{Symmetry}

The inherent symmetry in the $n,m$-copy allows for another significant
complexity reduction. Let $\pi$ be one of the $(n!\tsp m!)$ permutations of
the base set $N^*$ that permutes the subsets $Y_i$ and the subsets $Z_j$
independently. This permutation naturally extends to the subsets of $N^*$,
and then to the polymatroids on $N^*$. The $n,m$-copy $f^*$ of $f$ is
\emph{symmetric} if it is invariant for each such permutation $\pi$, that
is, $f^*(A)=(\pi f^*)(A) = f^*(\pi A)$ for all $A\subseteq N^*$.

\begin{claim}\label{claim:reduction2}
$f$ has an $n,m$-copy if and only if it has a symmetric $n,m$-copy.
\end{claim}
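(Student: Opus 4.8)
The forward implication needs no argument: a symmetric $n,m$-copy is by definition an $n,m$-copy. For the converse the plan is a routine group‑averaging. Let $G$ be the group of the $n!\,m!$ permutations of $N^*$ that fix $X$ pointwise, permute the blocks $Y_1,\dots,Y_n$ among themselves and the blocks $Z_1,\dots,Z_m$ among themselves, and act compatibly with the canonical identifications $Y_i\cong Y$, $Z_j\cong Z$. Each $\pi\in G$ extends to subsets of $N^*$ and hence to functions on them, and carries polymatroids to polymatroids, since a relabelling of the base set maps every instance of (\bref{B1}) and (\bref{B2}) to another such instance. Given an $n,m$-copy $f^*$ of $f$, I would form the polymatroid
$$
   g^* \;=\; \frac1{|G|}\sum_{\pi\in G}\pi f^*
$$
and show that it is a symmetric $n,m$-copy of $f$.

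That $g^*$ is a polymatroid is immediate: the polymatroids on $N^*$ constitute the convex cone $\G{N^*}$, each summand lies in it, hence so does the average. The $G$-invariance of $g^*$ is the usual reindexing $\rho g^*=\frac1{|G|}\sum_{\pi\in G}(\rho\pi)f^*=g^*$ for $\rho\in G$. The substance of the argument is verifying conditions (i)--(iii) of Definition \ref{def:maxent1} for $g^*$. For (i): every $\pi\in G$ maps each block $Y_iX$ onto some block $Y_jX$ in a way compatible with the identifications, so if $f^*$ restricted to each $Y_jX$ agrees with $f\restr\YX$, then the same holds for $\pi f^*$ and every $i$; as this agreement is a system of linear equalities among the coordinates of the polymatroid, it is preserved by the convex combination defining $g^*$. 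Condition (ii) is identical with $Y$ and $Z$ interchanged.

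For (iii) I would invoke the reformulation recalled in Section~\ref{sec:prelim}: complete conditional independence of $Y_1,\dots,Y_n,Z_1,\dots,Z_m$ over $X$ in a polymatroid $h$ on $N^*$ is equivalent to the single linear identity
$$
   h\bigl(Y_1\cdots Y_nZ_1\cdots Z_m\,\|\,X\bigr)
     \;=\; \sum_{i=1}^{n} h(Y_i\|X) \;+\; \sum_{j=1}^{m} h(Z_j\|X).
$$
Since each $\pi\in G$ fixes $X$ and permutes $Y_1,\dots,Y_n,Z_1,\dots,Z_m$ among themselves, it stabilises the set $Y_1\cdots Y_nZ_1\cdots Z_m=N^*\sm X$; hence the left-hand side above is unchanged when $h$ is replaced by $\pi f^*$, while the right-hand side is merely reindexed. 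So each $\pi f^*$ satisfies the identity because $f^*$ does, and, the identity being linear, so does $g^*$. Thus $g^*$ meets (i)--(iii), that is, it is a symmetric $n,m$-copy of $f$.

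The only step that is not bookkeeping is (iii): complete conditional independence is in general destroyed by convex combinations of polymatroids, and the averaging goes through here precisely because $G$ stabilises the block $N^*\sm X$ and only permutes its sub-blocks, which makes the defining identity $\pi$-invariant for every $\pi\in G$ and hence preserved under averaging.
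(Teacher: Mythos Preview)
Your proof is correct and follows exactly the paper's approach: form the average $g^*=(n!\,m!)^{-1}\sum_\pi \pi f^*$, observe that each $\pi f^*$ is again an $n,m$-copy, and use that conditions (i)--(iii) are linear equalities and hence preserved under convex combination. Your treatment is more explicit than the paper's---in particular, your remark that complete conditional independence survives the averaging only because $G$ stabilises $N^*\sm X$ and merely permutes the summands of the defining identity---but the argument is the same.
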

\begin{proof}

If $f^*$ is an $n,m$-copy of $f$, then clearly so is $\pi f^*$. Since
conditions (ii) and (iii) in Definition \ref{def:maxent1} are linear,
they are also satisfied by the average of all such permutations of $f^*$,
that is, by the polymatroid $g^*=(n!\tsp m!)^{-1}\sum_\pi \pi f^*$.
Clearly, $g^*$ is a symmetric $n,m$-copy of $f$.
\end{proof}

Symmetry alone reduces the number of auxiliary variables in the definition
of $\mathcal Q^*$ from exponential in $n$ and $m$ to polynomial in these
parameters.

\subsection{No New Inequality}

In some cases, the computations required by the Maximum Entropy Method of
Section \ref{sec:method} can be simplified further, or even completely
avoided. The first claim of this subsection states that certain polymatroids
do not contribute to new entropy inequalities.

\begin{claim}\label{claim:modularX}
Suppose $f$ is a polymatroid on $N=\YXZ$, and $f$ restricted to $X$ is
modular. Then $f$ has an $n,m$-copy for every $n$ and $m$.
\end{claim}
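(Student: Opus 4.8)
The plan is to exhibit an explicit $n,m$-copy $f^*$ of $f$ on the base set $N^*=Y_1\dots Y_n\tsp X\tsp Z_1\dots Z_m$. Write $r=f\restr X$, and for $S\subseteq N^*$ let $S_X=S\cap X$ and let $S_i\subseteq Y$, $T_j\subseteq Z$ denote the images of $S\cap Y_i$ and $S\cap Z_j$ under the copy isomorphisms. The natural guess---the polymatroid in which the copies are conditionally independent over $S_X$ for each $S$, i.e.\ the function $(1-n-m)\tsp r(S_X)+\sum_i f(S_i\cup S_X)+\sum_j f(T_j\cup S_X)$---fails to be monotone, since adjoining to $S$ an element of $X$ already spanned inside one of the copies can lower its value. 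I would repair this by minimising over all intermediate sets:
\begin{equation*}
  f^*(S)\eqdef\min\Bigl\{\,(1-n-m)\tsp r(I)+\sum_{i=1}^{n}f(S_i\cup I)+\sum_{j=1}^{m}f(T_j\cup I)\ :\ S_X\subseteq I\subseteq X\,\Bigr\},
\end{equation*}
with the convention $f(\emptyset)=0$. The sets $S_i\cup I$ and $T_j\cup I$ lie in $\YX$ and $XZ$, so only the restrictions $f\restr\YX$ and $f\restr XZ$ enter, and the minimum runs over a non-empty finite family.

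Verifying Definition \ref{def:maxent1} is then bookkeeping. For $S\subseteq Y_1X$ every $S_i$ with $i\ge 2$ and every $T_j$ is empty, so each such $f(S_i\cup I)$ or $f(T_j\cup I)$ equals $f(I)=r(I)$; the coefficients of $r(I)$ sum to $(1-n-m)+(n-1)+m=0$, so by monotonicity of $f$ the minimum is $\min_{S_X\subseteq I}f(S_1\cup I)=f(S_1\cup S_X)$, which gives (i), and (ii) follows by symmetry. When $S\supseteq X$ the only admissible set is $I=X$, so $f^*$ is there given by the additive expression, from which $f^*(X)=r(X)$ and, for every $W\supseteq X$, $f^*(W)-f^*(X)=\sum_i f^*(W\cap Y_i\|X)+\sum_j f^*(W\cap Z_j\|X)$; taking $W=N^*$ this is the complete conditional independence of the copies over $X$ demanded in (iii).

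The substantive point is that $f^*$ is a polymatroid, and this is the only place the hypothesis is used. Monotonicity is immediate: enlarging $S$ outside $X$ only enlarges the arguments of $f$, while enlarging $S$ inside $X$ shrinks the feasible range of $I$. For submodularity fix $A,B\subseteq N^*$ and let $I_1\supseteq A\cap X$, $I_2\supseteq B\cap X$ be optimal for $f^*(A)$ and $f^*(B)$; then $I_1\cup I_2$ is feasible for $A\cup B$ and $I_1\cap I_2$ for $A\cap B$. For each copy $Y_i$, writing $A_i,B_i\subseteq Y$ for the images of $A\cap Y_i$, $B\cap Y_i$, submodularity of $f$ on $\YX$ gives
\begin{equation*}
  f\bigl((A_i\cup B_i)\cup(I_1\cup I_2)\bigr)+f\bigl((A_i\cap B_i)\cup(I_1\cap I_2)\bigr)\le f(A_i\cup I_1)+f(B_i\cup I_2),
\end{equation*}
the cross terms vanishing because $Y_i\cap X=\emptyset$; the $Z_j$ copies are treated identically via $f\restr XZ$. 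The $r$-terms contribute $(1-n-m)\bigl(r(I_1\cup I_2)+r(I_1\cap I_2)\bigr)$, which equals $(1-n-m)\bigl(r(I_1)+r(I_2)\bigr)$ \emph{exactly because} $f\restr X=r$ is modular; summing over the $n+m$ copies yields $f^*(A\cup B)+f^*(A\cap B)\le f^*(A)+f^*(B)$. I expect the real obstacle to be conceptual rather than computational: noticing that the naive conditionally-independent extension is not a polymatroid and finding the correct ``minimum over intermediate subsets of $X$'' formula; after that, modularity of $f\restr X$ is used only through the single exact identity $r(I_1\cup I_2)+r(I_1\cap I_2)=r(I_1)+r(I_2)$.
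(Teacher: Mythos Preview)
Your proof is correct and rests on the same idea as the paper's: the ``minimum over intermediate subsets of $X$'' construction, whose submodularity reduces to the single identity $r(I_1\cup I_2)+r(I_1\cap I_2)=r(I_1)+r(I_2)$ supplied by the modularity hypothesis. The paper packages this as Lemma~\ref{lemma:minconstr} (the two-block case, your formula with $n+m=2$) and then appeals to induction, adding one copy at a time; you instead write down the $n+m$-block generalisation in one shot and verify conditions (i)--(iii) and the polymatroid axioms directly. The inductive route saves the bookkeeping of summing $n+m$ submodularity inequalities but requires checking that the accumulated conditional independences combine correctly; your direct construction trades that for a longer but entirely explicit verification. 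Either way the content is the same.
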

\begin{proof}
The statement follows from the following lemma by induction.
\end{proof}

\begin{lemma}\label{lemma:minconstr}
Suppose that the polymatroids $(f_1,\YX)$ and $(f_2,XZ)$ have a common
restriction on $X$ which is modular. Then there is a polymatroid
$(g,\YXZ)$ that extends both $f_1$ and $f_2$ such that $g(Y,Z\|X)=0$.
\end{lemma}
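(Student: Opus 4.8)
The plan is to build the common extension $(g,\YXZ)$ explicitly as the "conditionally independent gluing" of $f_1$ and $f_2$ over their common modular restriction on $X$, and then verify directly that it is submodular. Write $\mu = f_1\restr X = f_2\restr X$; by assumption $\mu$ is modular, say $\mu(A) = \sum_{x\in A} c_x$ with each $c_x = \mu(x)\ge 0$, for $A\subseteq X$. For $A\subseteq\YXZ$ split it as $A = A_Y \cups A_X \cups A_Z$ with $A_Y\subseteq Y$, $A_X\subseteq X$, $A_Z\subseteq Z$, and define
\begin{equation*}
   g(A) \eqdef f_1(A_Y A_X) + f_2(A_X A_Z) - \mu(A_X)
        = f_1(A_Y A_X) + f_2(A_X A_Z) - \tsum_{x\in A_X} c_x .
\end{equation*}
This clearly restricts to $f_1$ on $\YX$ (take $A_Z=\emptyset$) and to $f_2$ on $XZ$ (take $A_Y=\emptyset$), using $f_1\restr X = f_2\restr X = \mu$. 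It also gives $g(Y,Z\|X) = g(\YX)+g(XZ)-g(\YXZ)-g(X) = f_1(\YX) + f_2(XZ) - \big(f_1(\YX)+f_2(XZ)-\mu(X)\big) - \mu(X) = 0$, as required. So the only real content is to check that $g$ is a polymatroid, i.e. monotone and submodular.

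For submodularity it suffices (by condition (B2)) to show $g(a,b\|K)\ge 0$ for all $K$ and all distinct $a,b\notin K$. Fix such $a,b,K$ and split $K = K_Y\cups K_X\cups K_Z$. The key observation is that the modular term $-\tsum_{x\in A_X}c_x$ is itself modular in $A$, hence additive, so it contributes exactly $0$ to any second-difference expression $g(a,b\|K)$; thus $g(a,b\|K) = f_1(\,\cdot\,) + f_2(\,\cdot\,)$ where each summand is a second difference of $f_1$ or of $f_2$ evaluated at appropriate arguments. Now do a short case analysis on where $a$ and $b$ live:
\begin{itemize}
\item If $a,b\in \YX$ (i.e.\ neither is in $Z$), then the relevant second difference involving $Z$-coordinates collapses: $g(a,b\|K)$ reduces to $f_1(a,b\|K_YK_X)$ plus a term from $f_2$ that, because $a$ and $b$ do not touch the second coordinate, is a difference of the form $f_2(K_X K_Z)-f_2(K_X K_Z)=0$ — or more carefully, the $f_2$-part is $f_2$ applied only to subsets of $XZ$ that do not change when $a,b$ are added, so it cancels. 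Hence $g(a,b\|K) = f_1(a,b\|K_YK_X)\ge 0$. Symmetrically if $a,b\in XZ$.
\item If $a\in Y$ and $b\in Z$ (the genuinely mixed case), then adding $a$ only affects the $f_1$-term and adding $b$ only affects the $f_2$-term, so the cross second-difference $g(ab K) - g(aK) - g(bK) + g(K)$ splits as $\big(f_1(aK_YK_X)-f_1(K_YK_X)\big) + \big(f_2(bK_XK_Z)-f_2(K_XK_Z)\big) - \big(f_1(aK_YK_X)-f_1(K_YK_X)\big) - \big(f_2(bK_XK_Z)-f_2(K_XK_Z)\big) = 0$, so $g(a,b\|K)=0\ge 0$.
\end{itemize}
Monotonicity follows from the same bookkeeping: $g(Ai)-g(A)$ equals the corresponding increment of $f_1$ (if $i\in Y$), of $f_2$ (if $i\in Z$), or of $f_1(\cdot)+f_2(\cdot)-c_i$ (if $i\in X$); in the last case $f_1(A_Y A_X i)-f_1(A_Y A_X)\ge f_1(x\|\,\cdot\,)\ge \mu$-increment $=c_i$ by submodularity of $f_1$ and the fact that $\mu$ is the restriction of $f_1$ to $X$ — actually one only needs $g(Ai)\ge g(A)\ge 0$, which reduces to (B1)-type inequalities for $f_1$ and $f_2$ and the nonnegativity $c_i\ge 0$.

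The step I expect to require the most care is the mixed case $a\in Y$, $b\in Z$ and, more generally, keeping the decomposition of every second difference into an $f_1$-part and an $f_2$-part honest when $K$ straddles all three blocks — one must be sure the modular term really does cancel and that the $f_1$-part never secretly references a coordinate in $Z$ (or vice versa). Once the formula $g(A) = f_1(A_YA_X) + f_2(A_XA_Z) - \mu(A_X)$ is written down, though, everything else is routine verification of (B1) and (B2), and Claim \ref{claim:modularX} then follows by inductively replacing $f$ with its $n,m$-copy one generation at a time: at each step one has two polymatroids (a copy $Y_iX$-part and the current $XZ$-part) whose common restriction to $X$ is still modular, so Lemma \ref{lemma:minconstr} applies and the complete conditional independence over $X$ is assembled copy by copy.
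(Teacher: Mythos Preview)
Your construction $g(A)=f_1(A_YA_X)+f_2(A_XA_Z)-\mu(A_X)$ is submodular (the (B2) case analysis is essentially right, although when $a,b\in X$ the $f_2$-contribution is $f_2(a,b\|K_XK_Z)\ge 0$, not zero), but it is \emph{not monotone}, so it is not a polymatroid. Your monotonicity argument for $i\in X$ reverses the direction of submodularity: you need $f_1(i\|A_YA_X)+f_2(i\|A_XA_Z)\ge c_i$, whereas submodularity only gives $f_1(i\|A_YA_X)\le f_1(i)=c_i$ and likewise for $f_2$, which points the wrong way. A concrete counterexample: take $X=\{x\}$, $Y=\{y\}$, $Z=\{z\}$ with $f_1(y)=f_1(x)=f_1(yx)=1$ and $f_2(z)=f_2(x)=f_2(xz)=1$. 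Then $\mu(x)=1$ is trivially modular, yet your formula gives $g(yz)=1+1-0=2>1=g(yxz)$, violating (\bref{B1}).

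The paper repairs exactly this defect by taking a minimum over all intermediate sets $L$ with $A_X\subseteq L\subseteq X$:
\[
   g(IJK)=\min_{J\subseteq L\subseteq X}\big\{f_1(IL)+f_2(LK)-f_1(L)\big\}.
\]
Your formula is the single term $L=J$ of this minimum. On the counterexample the paper's definition gives $g(yz)=\min\{2,1\}=1=g(yxz)$, restoring monotonicity; the modularity of $f_1\restr X$ is then used to check that this $g$ is still submodular and has the required restrictions. The inductive scheme you sketch for Claim~\ref{claim:modularX} is fine once the lemma is established with the corrected formula.
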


\begin{proof}
For $I\subseteq Y$, $J\subseteq X$ and $K\subseteq Z$ define
$$
  g(IJK) \eqdef \min_L ~ \{ 
      f_1(IL)+ f_2(LK) - f_1(L) : ~ J\subseteq L\subseteq X \}.
$$
Using the fact that $f_1\restr X$ and $f_2\restr X$ are isomorphic
and modular, a simple calculation shows that $g$ is a polymatroid
and satisfies the requirements. For details, consult
\cite{Csirmaz.oneadhesive,exploring} or \cite{M.fmadhe}.
\end{proof}

If either $Y$ or $Z$ has a single element, then one does not need to
look beyond $n,1$-copies.

\begin{claim}\label{claim:singleZ}
Suppose $|Z|=1$. Entropy inequalities generated by $n,m$-copies of
polymatroids on $\YXZ$ are also generated by $n,1$-copies.
\end{claim}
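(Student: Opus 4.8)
The plan is to reduce, via Claim~\ref{claim:reduction1}, to a class of polymatroids for which the single element $z$ is determined by $X$, and then to add the extra copies of $z$ as parallel duplicates. By Claim~\ref{claim:reduction1} the tight part of $\mathcal Q^*$ --- whose facets carry all the non-Shannon inequalities --- is generated by the $n,m$-copies of the polymatroids $f$ on $\YXZ$ satisfying conditions (i)--(iii) of that claim, and the coordinates of such a copy that are relevant to $\mathcal Q^*$ are merely the values of $f$ on subsets of $\YX$ and $XZ$, which do not involve $m$ at all. Hence it suffices to show that, for each $f$ satisfying (i)--(iii) of Claim~\ref{claim:reduction1} (with $|Z|=1$), $f$ has an $n,1$-copy if and only if it has an $n,m$-copy: the generating sets of the tight parts of $\mathcal Q^*$ for the two parameter choices then coincide, and therefore so do their facets and the non-Shannon inequalities they produce.

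The ``only if'' direction is immediate, since restricting an $n,m$-copy to $Y_1\cdots Y_nXZ_1$ yields an $n,1$-copy (conditions (i) and (ii) of Definition~\ref{def:maxent1} are inherited, and complete conditional independence of a family over $X$ passes to any subfamily). For the converse, the key observation is that when $Z=\{z\}$, condition (iii) of Claim~\ref{claim:reduction1} reads $f(z\|X)=0$, i.e.\ $f(Xz)=f(X)$; so in $f$, and hence in any $n,1$-copy $g$ on $N^{(1)}=Y_1\cdots Y_nXz_1$ by condition (ii) of Definition~\ref{def:maxent1}, the element $z$ is \emph{determined by $X$}, and therefore $g(Bz_1)=g(B)$ for every $B\supseteq X$ by submodularity. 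Given such a $g$, I would adjoin $m-1$ parallel duplicates $z_2,\dots,z_m$ of $z_1$: let $\rho$ be the map on subsets of $N^{(1)}\cup\{z_2,\dots,z_m\}$ induced by sending every $z_j$ with $j\ge 2$ to $z_1$ and fixing everything else, and put $g^*(A)\eqdef g(\rho(A))$. Since $\rho$ preserves unions, $\rho(S\cap T)\subseteq\rho(S)\cap\rho(T)$, and $g$ is monotone and submodular, a short computation shows $g^*$ is a polymatroid, and clearly $g^*\restr N^{(1)}=g$.

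It then remains to verify that $g^*$ is an $n,m$-copy of $f$. The restrictions of $g^*$ to each $Y_iX$ and to $XZ_1$ agree with those of $g$, and for $j\ge 2$ one has $g^*(Iz_j)=g(Iz_1)$ for $I\subseteq X$, so $g^*\restr Xz_j\cong g\restr Xz_1\cong f\restr XZ$; this gives (i) and (ii) of Definition~\ref{def:maxent1}. For (iii): $g^*(Xz_j)=g(Xz_1)=g(X)=g^*(X)$, so every $z_j$ is determined by $X$ in $g^*$, whence $g^*(z_j\|X)=0$ and $g^*(Y_1\cdots Y_nz_1\cdots z_mX)=g(Y_1\cdots Y_nX)$; combined with the complete conditional independence of $Y_1,\dots,Y_n$ over $X$ inherited from $g$, this makes the additivity identity $g^*(Y_1\cdots Y_nz_1\cdots z_m\|X)=\sum_{i\le n}g^*(Y_i\|X)+\sum_{j\le m}g^*(z_j\|X)$ hold, which is exactly the complete conditional independence required by Definition~\ref{def:maxent1}(iii).

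I expect the genuinely nontrivial step to be identifying this reduction: it is precisely the balanced/tight normalization of Claim~\ref{claim:reduction1} that forces $z$ to be determined by $X$, after which the additional copies of $z$ cost nothing. Attempting the statement without that reduction would require showing directly that an $n,1$-copy can always be extended by a fresh copy of the single element $z$ over $X$ --- a restricted instance of the Copy Lemma at the level of polymatroids --- and checking submodularity of that extension from its minimum-formula is considerably more delicate.
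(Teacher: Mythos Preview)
Your proof is correct and follows essentially the same approach as the paper: reduce via Claim~\ref{claim:reduction1} so that $f(z\|X)=0$, then adjoin $m{-}1$ parallel duplicates of $z$ to an $n,1$-copy and check that the result is an $n,m$-copy. Your account is more detailed than the paper's --- you spell out the construction via the collapsing map $\rho$, verify submodularity of $g^*$ explicitly, and handle the trivial restriction direction --- but the strategy and the key observation (that the tight normalization forces $z$ to be determined by $X$, making the extra copies free) are identical.
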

\begin{proof}
We claim that the cone generated by the tight part of $n,m$-copies is the
same as the cone generated by the $n,1$-copies. To prove this, let $f$ be a
polymatroid on $\YXZ$ that satisfies the conditions of Claim
\ref{claim:reduction1}, and let $f^*$ be an $n,1$-copy of $f$ so that $f$ is
identified with $f^*\restr Y_1XZ_1$ where $Z_1$ has a single element $z$.
Let $g^*$ be the polymatroid when $m\m-1$ identical copies of $z$ are added
to $f^*$. We claim that $g^*$ is an $n,m$-copy. The only non-trivially
satisfied condition is that the copies of $z$, $z_1$ and $z_2$, are
independent over $X$. Since $f(z\|X)=0$ by (iii) of Claim
\ref{claim:reduction1}, we have $g^*(Xz_1)=g^*(Xz_2)=g^*(Xz_1z_2)= g^*(X)$,
thus $g^*(z_1,z_2\|X)=0$. Since $g^*\restr Y_1XZ_1$ and $f^*\restr Y_1XZ_1$
are the same polymatroids, the $n,m$-cone is part of the $n,1$-cone, as
claimed.
\end{proof}

\subsection{Problem Parameters}

By Claim \ref{claim:modularX}, the Maximum Entropy Method does not yield new
inequalities when $f\restr X$ is modular. This is certainly the case when
$|X|=1$, so we must have $|X|\ge 2$. By Claim \ref{claim:singleZ}, if
$|Y|=|Z|=1$, then beyond the $1,1$-copy, no additional inequalities are
generated. The smallest parameter setting when new entropy inequalities are
expected as the number of copies grows is $|Y|=2$, $|X|=2$, and $|Z|=1$. We
fix these sizes, as well as the labels of the members of each set as
$$
   X=\{a,b\}, ~~ Y=\{c,d\}, ~\mbox{ and  } Z=\{z\}.
$$
Since $|Z|=1$, according to Claim \ref{claim:singleZ}, it suffices to
consider $n,1$-copies only. To simplify the notation, the extra $1$ will be
dropped and we write $n$-copy instead. We also explicitly state the
definition of the $n$-copy for this particular partition.

\begin{definition}\label{def:n-copy}
Let $f$ be a polymatroid on $N=\{abcdz\}$, and let $n\ge 1$. The
polymatroid $f^*$ on the base set $N^*=abz\cup\{c_id_i:1\le i\le n\}$
is an \emph{$n$-copy of $f$}, if
\begin{itemize}
\item[(i)] $f^*\restr abz$ is isomorphic to $ f\restr abz$,
 and, for each $i\le n$, with the $c_i\leftrightarrow c$, $d_i
 \leftrightarrow d$ correspondences,  $f^*\restr abc_id_i$ is
 isomorphic to $f\restr abcd$; 
\item[(ii)]
   $\{c_id_i:i\le n\}$ and $z$ are completely conditionally 
   independent over $ab$.
\end{itemize}
\end{definition}

This special case of the Maximum Entropy Method provides new non-Shannon
entropy inequalities based on the fact that entropic polymatroids on the
$5$-element base set $abcdz$ have an $n$-copy for each $n\ge 1$. The steps
we will follow are as below:
\begin{enumerate}
\item 
Fix the number of copies $n$, called a \emph{generation}. Determine the
generating matrix
$M$ of the cone $\mathcal Q^*$ as specified in Claim
\ref{claim:mempoly} using only polymatroids
that satisfy the conditions of Claim \ref{claim:reduction1}, and are
symmetric by Claim \ref{claim:reduction2}.
\item\blabel{(2)}
The new inequalities are provided by the non-Shannon facets of the
tight part of $\mathcal Q^*$; these facets can be computed using
some polyhedral algorithm from the generating matrix $M$.
\end{enumerate}


\section{Computation}\label{sec:compresults}

The cone $\mathcal Q^*$ whose non-Shannon bounding facets provide the new
entropy inequalities sits in the $d_1=19$-dimensional Euclidean space with
coordinates indexed by the non-empty subsets of $\YX=\{abcd\}$ and
$XZ=\{abz\}$. Fix the number of copies to $n\ge 1$. This choice also fixes
the dimension of the vector $\mathbf y$ to $d_3=8\m\cdot 4^n-1$. The
generating matrix $M$ of the polyhedral cone $\mathcal Q^*$ from
(\ref{eq:qstar}) is repeated here:
\begin{equation}\label{eq:Qrep}
   \mathcal Q^* = \big\{ \mathbf x\in\R^{d_1}:
     M\m\cdot(\mathbf x,\mathbf y)^\T\ge 0
     \mbox{ for some } \mathbf y\in\R^{d_3}
   \big\}.
\end{equation}
The modular part of $\mathcal Q^*$ is five-dimensional, and so its tight
part sits in a $14$-di\-men\-sional subspace of $\R^{19}$, meaning that the
dimension of $\mathbf x$ is reduced to $d_1=14$. By Claim
\ref{claim:reduction2}, the polymatroid $f^*$ can be assumed to be symmetric
for all permutations of the copies $Y_i=c_id_i$. This means that the value
$f^*(J)$ depends only on the cardinality of the sets
$$
   \big\{i\le n: J\cap Y_i=\{c_i\}\big\}, ~~~
   \big\{i\le n: J\cap Y_i=\{d_i\}\big\}, ~~~
   \big\{i\le n: J\cap Y_i=\{c_id_i\}\big\},
$$
and on the set $J\cap XZ$. Since $|XZ|=3$, this property alone reduces $d_3$
to $8{n+3\choose 3}-1$. Additional conditional independences provided by
Claim \ref{claim:reduction1} reduce this number further to
\begin{equation}\label{eq:d3}
   d_3 = 4{n\m+3\choose 3}+2{n\m+2\choose 2} + 1,
\end{equation}
which number was verified empirically up to $n=10$.

A structural property of the polymatroid region $\G{abcd}$ on the
four-element set $abcd$ allows us to further reduce the complexity of the
polyhedral computation required in step \bref{(2)} above. The region
$\G{abcd}$ has a central part and six permutationally equivalent
``protrusions,'' depending on the signs of the Ingleton expressions
$$
  f\abcd,~ f\acbd,~ f[adbc], ~ f[bcad\tsp], ~f[bdac], \mbox{ and } f\cdab.
$$
If all of them are non-negative, then the restriction $f\restr abcd$ is a
linear polymatroid; otherwise exactly one of these Ingleton expressions is
negative, see, e.g., \cite{Ma.Stud}. Accordingly, the cone $\mathcal Q^*$ is
cut into seven parts by these Ingleton hyperplanes: the central part where
all Ingleton values are non-negative, and six other parts where exactly one
of the expressions is negative. The facets of each part of $\mathcal Q^*$
can be computed separately.

Parts of $\mathcal Q^*$ on the negative side of $\acbd$, $[adbc]$,
$[bcad\tsp]$, and $[bdac]$ are isomorphic because swapping $a\leftrightarrow
b$ and swapping $c\leftrightarrow d$ are symmetries of $\mathcal Q^*$.
Therefore, it suffices to consider only one of them. The central part, where
every Ingleton expression is non-negative, does not yield new inequalities.
This follows from Lemma \ref{lemma:no-linear} below, as the elements of the
central part are linear. Note that the proof uses the additional fact that
$XZ$ has exactly three elements.

\begin{lemma}\label{lemma:no-linear}
If $f$ restricted to $abcd$ is linear, then $f$ has an $n$-copy for all
$n\ge 1$.
\end{lemma}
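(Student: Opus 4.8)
The plan is to build the required $n$-copy explicitly by exploiting linearity and the freedom we already have from the earlier claims. Since $f\restr abcd$ is linear, it lies in the closed cone of $\mathbb F$-linear polymatroids (for $|N|\le 5$ this holds for any field $\mathbb F$, by the discussion after the Ingleton characterization), so it suffices to treat the case where $f\restr abcd$ is an $\mathbb F$-representable polymatroid and then pass to limits; the cone $\mathcal C_{n,1}$ of polymatroids on $N$ with an $n$-copy is closed by Claim \ref{claim:mempoly}, so a limit argument is legitimate. First I would reduce, using Claim \ref{claim:reduction1}, to a polymatroid $f$ that is tight, satisfies $f(Y,Z\|X)=f(cd,z\|ab)=0$, and is tight at every $c,d,z$; the residual data is then essentially a linear polymatroid on $abcd$ glued to the polymatroid $f\restr abz$ along the common restriction to $ab$, with $cd$ and $z$ conditionally independent over $ab$.

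The key construction is to represent $f\restr abcd$ by subspaces $V_a,V_b,V_c,V_d$ of some finite-dimensional $V$ over $\mathbb F$, take $n$ disjoint copies $V_{c_i},V_{d_i}$ of the pair $(V_c,V_d)$ lying in disjoint complements "above" $V_a+V_b$ (formally: work in a direct sum $V\oplus W_2\oplus\cdots\oplus W_n$ where each $W_i$ carries a fresh isomorphic copy of the part of $V_c+V_d$ transverse to $V_a+V_b$, while $V_a,V_b$ are shared), and then adjoin a further independent-over-$ab$ piece realizing $f\restr abz$. Condition (i) of Definition \ref{def:n-copy} is immediate because each $abc_id_i$-block is an isomorphic relabelled copy of the original representation; condition (ii), complete conditional independence of $\{c_id_i:i\le n\}$ and $z$ over $ab$, follows because the transverse copies were placed in independent summands and the $z$-part was adjoined independently over $V_a+V_b$. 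Concretely one checks $f^*(c_1d_1\cdots c_nd_n\,z\|ab)=\sum_i f^*(c_id_i\|ab)+f^*(z\|ab)$ directly from dimension counts in the direct sum. Equivalently, and perhaps more cleanly, I would phrase the whole thing via the "adhesive" / common-extension machinery: since $f\restr abcd$ is linear and all Ingleton expressions are non-negative, one can iterate Lemma \ref{lemma:minconstr}-type gluings $n$ times over the modular-on-$ab$-after-tightening structure; but linear polymatroids are precisely the ones for which such repeated amalgamation over $ab$ is unobstructed, which is the crux.

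The main obstacle is condition (ii): merely amalgamating $n$ copies of $f\restr abX$ over $X=ab$ does not by itself force the copies to be \emph{completely} conditionally independent over $ab$ — one needs the joint extension in which all $\binom{n}{2}$ (and higher) interactions vanish conditionally on $ab$, and a generic polymatroid has no such extension. Linearity is exactly what rescues this: in a vector-space representation one can always realise disjoint copies in internally-direct transversal complements over $V_a+V_b$, which makes \emph{all} the higher-order conditional independencies hold simultaneously, not just pairwise. So the heart of the argument is to turn the linear representation of $f\restr abcd$ into this transversal direct-sum configuration and verify the dimension identities $f^*(\bigcup_{i\in I}c_id_i \,\|\, ab)=\sum_{i\in I} f^*(c_id_i\|ab)$ for every index set $I$, together with the analogous identity when $z$ is thrown in; once those hold, conditions (i)–(ii) of Definition \ref{def:n-copy} are verified, the limit over $\mathbb F$-linear approximants and over the earlier reductions gives the general $f$, and the lemma follows.
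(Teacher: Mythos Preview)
Your approach is correct but takes a more hands-on route than the paper. You build the $n$-copy $f^*$ explicitly on the enlarged base set $N^*$ as a linear polymatroid: copy the transversal part of $(V_c,V_d)$ over $V_a+V_b$ into $n$ independent summands, glue in a representation of $f\restr abz$ along the shared $(V_a,V_b)$ arrangement, and verify the conditional-independence identities by dimension count. The paper instead stays on five elements: it glues a linear representation of $f\restr abcd$ to one of $f\restr abz$ along the isomorphic $(V_a,V_b)$ arrangements to obtain a single $\mathbb F$-representable---hence entropic---polymatroid $g$ on $abcdz$ that matches $f$ on both restrictions, and then simply invokes Lemma~\ref{lemma:maxent1} (entropic polymatroids have $n$-copies). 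The advantage of the paper's shortcut is that the complete conditional independence in condition~(ii) of Definition~\ref{def:n-copy} comes for free from the maximum-entropy machinery already in place, so no direct-sum bookkeeping on $N^*$ is needed; your route is more self-contained but must check all the higher-order independencies by hand.

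Two minor points on your write-up. First, the appeal to Claim~\ref{claim:reduction1} is unnecessary: by Definition~\ref{def:n-copy} the existence of an $n$-copy depends only on $f\restr abcd$ and $f\restr abz$, so once any polymatroid with those restrictions has an $n$-copy, $f$ does too. Second, in the limit argument you should approximate \emph{both} restrictions simultaneously by $\mathbb F$-representable polymatroids over the same finite field with a common restriction to $ab$ (this is possible since every polymatroid on three elements is linear over any field); the paper makes this explicit, and it is what makes the gluing along $(V_a,V_b)$ go through.
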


\begin{proof}
Since every polymatroid on three elements is linear, and linearly
representable polymatroids on three or four elements are representable over
any field, we can assume, after scaling and using continuity, that both
$f\restr abcd$ and $f\restr abz$ are $\mathbb F$-linearly representable over
the same finite field $\mathbb F$. Denote the two representing vector spaces
by $V^1$ and $V^2$, and consider the subspace arrangements $(V^1_a,V^1_b)$
and $(V^2_a,V^2_b)$ in the two vector spaces. Now $V^i_a$ and $V^i_b$ have
dimensions $f(a)$ and $f(b)$, respectively, and their linear span has
dimension $f(ab)$. Therefore, these arrangements are isomorphic, and $V^1$
and $V^2$ can be glued along the linear span of $(V^1_a,V^1_b)$ and
$(V^2_a,V^2_b)$. This gluing yields an $\mathbb F$-linear polymatroid $g$
that has the same restrictions on $abcd$ and on $abz$ as $f$ does. Since
this $g$ is entropic, it has an $n$-copy for every $n\ge 1$. This $n$-copy
is also an $n$-copy of $f$, as required.
\end{proof}

Consequently, up to the $a\leftrightarrow b$ and $c\leftrightarrow d$
symmetries, three mutually exclusive cases are left: $f\abcd <0$,
$f\acbd<0$, and $f\cdab<0$. Using the homogeneity of $\mathcal Q^*$, the
Ingleton value can be set to $-1$, in effect taking a cross-section of
$\mathcal Q^*$ that has one fewer dimension. Facets of the part of $\mathcal
Q^*$ we are considering are also facets of these cross-sections;
consequently, only facets of the cross-sections need to be computed. We
consider these three cases separately in the subsections below.

The definition (\ref{eq:Qrep}) of the cone $\mathcal Q^*$ uses the
$19$-dimensional coordinate system where the coordinates of the vector
$\mathbf x$ are labeled by the non-empty subsets of $abcd$ and $abz$. In all
three cases we perform calculations in different coordinate systems that are
chosen so that
\begin{itemize}
\item the first coordinate is the Ingleton expression defining the
cross-section;
\item the tight and modular parts of the cross-section have disjoint
coordinates;
\item polymatroids have non-negative coordinates apart from the 
Ingleton coordinate.
\end{itemize}
The first property allows setting the Ingleton value explicitly. Based on
the second property, the tight part of the cross-section can be separated by
dropping some coordinates; and the third property potentially reduces the
complexity of the polyhedral enumeration algorithm.

\subsection{Case I}\label{subsec:c1}\blabel{subsec:c1}

The cone $\mathcal Q^*$ is intersected with the hyperplane $\abcd=-1$.
In this case we use the coordinate system
$$\mathcode`\:="3A\setlength\arraycolsep{2.5pt}%
\begin{array}{llllll}
 \Co_1: & \abcd, \\[3pt]
 \Co_2\mbox{--}\Co_4: & (a,b\|c), & (a,c\|b), & (b,c\|a), \\[3pt]
 \Co_5\mbox{--}\Co_7: & (a,b\|d), & (a,d\|b), & (b,d\|a), \\[3pt]
 \Co_8\mbox{--}\Co_{11}: & (c,d\|a), & (c,d|b),& (c,d), & (a,b\|cd), \\[3pt]
 \Co_{12}\mbox{--}\Co_{14}:& (a,b\|z), & (a,z\|b), & (b,z\|a), \\[3pt]
 \Co_{15}\mbox{--}\Co_{19}: & (a\|bcd), & (b\|acd), & (c\|abd), 
   & (d\|abc), & (z\|ab).
\end{array}$$
Coordinates $\Co_{15}\mbox{--}\Co_{19}$ cover the modular part of $\mathcal
Q^*$. The tight part is spanned by the coordinate vectors
$\Co_1\mbox{--}\Co_{14}$, and each of these vectors is orthogonal to the
modular part. Let $\tilde P_1$ be the inverse of the matrix of this
coordinate transformation, and the vector $\mathbf p_1$ be the first row of
$\tilde P_1$. Let $P_1$ be the submatrix formed from rows $2$ to $14$ of
$\tilde P$. Coordinates of the vector $\mathbf x\in\R^{19}$ in this
coordinate system are $\tilde P_1\tsp \mathbf x^\T$, and, in particular, the
Ingleton value $f\abcd$ is the scalar product $\mathbf p_1\m\cdot \mathbf
x$. Consequently, the tight part of the intersection of $\mathcal Q^*$ and
the hyperplane $\abcd=-1$ in this coordinate system is
\begin{equation}\label{eq:Q1}
   \mathcal Q^*_1 = \big\{ P_1\mathbf x^\T:
            \mathbf p_1\m\cdot\mathbf x=-1, \mbox{ and }
            M\m\cdot (\mathbf x,\mathbf y)^\T \ge 0
            \mbox{ for some }\mathbf y\in\R^{d_3} \big\}.
\end{equation}
Finding all facets of $\mathcal Q^*_1$ determined by the matrices $M$ and
$\tilde P$ is closely related to linear multiobjective optimization
\cite{impossible}, and can benefit significantly by working in the
$13$-dimensional \emph{target} space \cite{ehrgott} instead of the
significantly larger, $d_3$-dimensional \emph{problem} space. We have
developed a variant of Benson's inner approximation algorithm
\cite{inner,bensolve} which takes advantage of the additional special
property that $\mathcal Q^*_1$ is in the non-negative orthant of the target
space. The program is available on GitHub as
\url{https://github.com/csirmaz/information-inequalities-5}.

\begin{table}[h!bt]
\def\bb{\kern -5pt}
\begin{tabular}{crrrrr}
$n$ & Rows & Columns\bb & ~~~Facets\bb & Vertices\bb & Time (s)\bb  \\
\hline
\rule{0pt}{11pt}%
1   & 76   &  23     &  16    &    19    &   0.01 \\
2   & 284  &  53     &  21    &    43    &   0.03 \\
3   & 706  & 101     &  34    &   155    &   0.35 \\
4   & 1416 & 171     &  63    &   675    &   3.54 \\
5   & 2488 & 267     & 120    &  2171    &  38.25 \\
6   & 3996 & 393     & 221    &  6275    &   5:24~\;~{\space} \\
7   & 6014 & 533     & 386    & 14523    &  36:45~\;~{\space} \\
8   & 8616 & 751     & 635    & 31379    & 2:59:17~\;~{\space} \\
9   & 11876 & 991    & 1000   & 61627    & 13:13:45~\;~{\space} \\
\hline
\end{tabular}

\vskip 5pt

\caption{Results for the case $\abcd<0$.}\label{table:2}%
\end{table}

Table \ref{table:2} shows the sizes of the generating matrix $M$, the total
number of facets and vertices (including extremal directions) of the
cross-section $\mathcal Q^*_1$, and the running time of the vertex
enumeration algorithm on a single-core desktop computer with an
Intel\textsuperscript{\textregistered} Core{\texttrademark} i5-4590 CPU @
3.30GHz processor and 8 GB of memory. The running time was taken up almost
exclusively by the underlying LP solver. The number of columns is the
dimension $d_3$ from formula (\ref{eq:d3}). While the number of facets grows
quite moderately with $n$, the number of vertices more than doubles at each
generation. The matrix $M$, despite numerous improvements, is highly
degenerate, and numerical instability, originating from both the LP solver
and the applied polyhedral algorithm, prevented the completion of the
computation for larger values of $n$. The results of the computation are
presented in Section \ref{sec:ineq}.

\subsection{Case II}

The cone $\mathcal Q^*$ is intersected with the hyperplane $\acbd=-1$. The
coordinate system is similar to the one used in Section \ref{subsec:c1}.
Base elements $b$ and $c$ are swapped in coordinates $\Co_{1}$ to
$\Co_{11}$, while the other coordinates remain unchanged. The tight part of
the intersection, denoted by $\mathcal Q^*_2$, is defined similarly with the
same matrix $M$ but a different coordinate transformation matrix $\tilde
P_2$, vector $\mathbf p_2$, and submatrix $P_2$ as
\begin{equation}\label{eq:Q2}
   \mathcal Q^*_2 = \big\{ P_2\mathbf x^\T:
            \mathbf p_2\m\cdot\mathbf x=-1, \mbox{ and }
            M\m\cdot (\mathbf x,\mathbf y)^\T \ge 0
            \mbox{ for some }\mathbf y\in\R^{d_3} \big\}.
\end{equation}
The problem size, number of facets and vertices, and the running time in
seconds are summarized in Table \ref{table:9}.
\begin{table}[h!bt]
\def\bb{\kern -5pt}
\begin{tabular}{crrrrr}
$n$ & Rows & Columns\bb & ~~~Facets\bb & Vertices\bb & Time (s)\bb  \\
\hline
\rule{0pt}{11pt}%
1   & 76   &  23     &  16    &    19    &  0.00 \\
2   & 284  &  53     &  18    &    25    &  0.03 \\
3   & 706  & 101     &  20    &    35    &  0.14 \\
4   & 1416 & 171     &  22    &    49    &  0.65 \\
5   & 2488 & 267     &  24    &    67    &  2.36 \\
6   & 3996 & 393     &  26    &    89    &  7.37 \\
7   & 6014 & 533     &  28    &   115    & 32.21 \\
8   & 8616 & 751     &  30    &   145    & 1:12~\;~{\space} \\
9   & 11876 & 991    &  32    &   179    & 5:01~\;~{\space} \\
\hline
\end{tabular}

\vskip 5pt

\caption{Results for the case $\acbd<0$.}\label{table:9}
\end{table}
Both the number of facets and the number of vertices grow moderately. A
plausible conjecture is that, in general, the number of facets is $2n+14$,
and the number of vertices is $2n^2+17$.

The running time is significantly shorter than in Section \ref{subsec:c1}.
It is explained by the fact that the polyhedral algorithm requires solving
an LP instance for each vertex and each facet in the result, and those
numbers are significantly smaller here. The generating matrix $M$ is the
same in both cases, implying that the problem size is the same. Numerical
instability prevented completing the computation for $n=10$ even in this
case.

\subsection{Case III}

No new inequality is generated when the cone $\mathcal Q^*$ is intersected
with the hyperplane $\cdab=-1$. This can be proved as follows. Since this
intersection, denoted by $\mathcal Q^*_3$, is an (unbounded) polyhedron,
every polymatroid in $\mathcal Q^*_3$ is a conic combination of its vertices
and extremal directions. These vertices and extremal directions can be
represented by certain extremal polymatroids. Conic combinations of
polymatroids that have an $n$-copy also have an $n$-copy. Consequently, it
suffices to show that these extremal polymatroids have an $n$-copy for all
$n\ge 1$.

Changing the first $11$ coordinates of the coordinate system used in Section
\ref{subsec:c1} to
$$\mathcode`\:="3A\setlength\arraycolsep{2.5pt}%
\begin{array}{llllll}
 \Co_1: & \cdab, \\[3pt]
 \Co_2\mbox{--}\Co_4: & (c,d\|a, & (a,c\|d), & (a,d\|c), \\[3pt]
 \Co_5\mbox{--}\Co_7: & (b,c\|d), & (c,d\|b), & (b,d\|c), \\[3pt]
 \Co_8\mbox{--}\Co_{11}: & (a,b\|c), & (a,b|d),& (a,b), & (c,d\|ab),
\end{array}$$
and keeping the rest, the vertex enumeration algorithm used in the previous
cases generated the vertices and extremal directions of the $13$-dimensional
tight part of $\mathcal Q^*_3$. The computation showed that it is a pointed
cone with a single vertex that has coordinates $\Co_2\mbox{--}\Co_{14}$
equal to zero (while $\Co_1=-1$) and has $14$ extremal directions, $12$ of
which are coordinate axes. Polymatroids representing the extremal directions
are linear when restricted to the base set $abcd$ (they satisfy $f\cdab=0$;
therefore, the other Ingleton values are also non-negative). Consequently,
these polymatroids have an $n$-copy for all $n\ge 1$. Finally, the remaining
polymatroid at the single vertex has $f(a,b)=0$ (as the coordinate
$\Co_{10}$ is zero), which means that $f\restr ab$ is modular. By Claim
\ref{claim:modularX} it also has an $n$-copy for all $n\ge 1$. This
concludes the proof that no non-Shannon inequality is generated in this
case.


\section{Experimental Information Inequalities}\label{sec:ineq}

For a fixed $n\ge 1$, the problem of extracting the set of non-Shannon
inequalities that form the necessary and sufficient conditions for the
existence of an $n$-copy of a polymatroid on the base set $abcdz$ was shown
to be equivalent to determining all facets of a $14$-dimensional polyhedral
cone. The cone was cut into several pieces and the facets of each piece were
computed for $n\le 9$. In this section we take a quick look at the
computational results. In the description the symbols $\Z$, $\C$, $\D$
denote the following entropy expressions:
\begin{align*}
\Z &\eqdef (a,z\|b)+(b,z\|a), \\
\C &\eqdef (a,c\|b)+(b,c\|a), \\
\D &\eqdef (a,d\|b)+(b,d\|a).
\end{align*}

\subsection{Case I}\label{subsec:resc1}\blabel{subsec:resc1}

In the $\abcd<0$ case, facets of the polyhedron $\mathcal Q^*_1$ from
(\ref{eq:Q1}) include all $13$ coordinate planes orthogonal to the
coordinate axes $\Co_2$--$\Co_{14}$. These facets correspond to the
non-negativity of the expression defining that coordinate. $\mathcal Q^*_1$
has two additional Shannon facets, corresponding to the Shannon inequalities
$(a,z)\ge 0$ and $(b,z)\ge 0$. The remaining facets determine the
non-Shannon inequalities we are interested in. They come in three flavors:
\begin{align}
     (a,b\|z) + \alpha_s \abcd+\alpha_s\Z + \beta_s\C + \gamma_s\D &\ge 0,
     \label{eq:fl1}\\[2pt]
    (a,b\|c) + \alpha_s \abcd+ \,(\alpha_s\m+\beta_s)\C \tsp\,+ \gamma_s\D &\ge 0,
    \label{eq:fl2}\\[2pt]
   (a,b\|d)+  \alpha_s \abcd+\, \beta_s\C \tsp\,+ (\alpha_s\m+\gamma_s)\D &\ge 0,
   \label{eq:fl3}
\end{align}
where $\<\alpha_s,\beta_s,\gamma_s\>$ are certain triplets of non-negative
integers. For illustration, we consider the $n=3$ case. As reported in Table
\ref{table:2}, for $n=3$ the polyhedron $Q^*_1$ has $34$ facets. These
facets determine $15$ Shannon inequalities (among which there are $13$
coordinate planes), $11$ inequalities of the form (\ref{eq:fl1}), and
$4$--$4$ inequalities of the form (\ref{eq:fl2}) and (\ref{eq:fl3}). The
$\<\alpha,\beta,\gamma\>$ triplets appearing in (\ref{eq:fl1}) are listed in
three columns in Table \ref{table:sample3}. Inequalities in (\ref{eq:fl2})
and (\ref{eq:fl3}) use triplets from the first column only; these are the
triplets that also appear in the $n=2$ generation.

\begin{table}[h!tb]\centering\begin{tabular}{c@{\qquad}c@{\qquad}c}
$\alpha,\beta,\gamma$ &
$\alpha,\beta,\gamma$ &
$\alpha,\beta,\gamma$ \\
\hline\rule{0pt}{10.5pt}%
$\<1,0,0\>$  & $\<3,0,3\>$ & $\<6,3,5\>$ \\
$\<2,0,1\>$  & $\<3,3,0\>$ & $\<6,5,3\>$ \\
$\<2,1,0\>$  & $\<4,1,3\>$ & $\<7,5,5\>$ \\
$\<3,1,1\>$  & $\<4,3.1\>$ &           \\
\hline
\end{tabular}
\vskip 5pt
\caption{The $\alpha,\beta,\gamma$ coefficient values for $n=3$}\label{table:sample3}
\end{table}

In general, inequalities in (\ref{eq:fl2}) and (\ref{eq:fl3}) are
consequences of (\ref{eq:fl1}) by replacing $z$ with $c$ and $d$,
respectively. Since the copy $c_n$ of $c$ in an $n$-copy polymatroid $f^*$
can be considered to be the variable $z$ in the $n\m-1$-copy when $f^*$ is
restricted to $N^*\sm\{d_nz\}$, inequalities valid for $n\m-1$-copy
instances must hold in an $n$-copy with $z$ replaced by $c$, and, similarly,
when $z$ is replaced by $d$. This property is confirmed by the computational
results. Additionally, all inequalities not containing the variable $z$
proved to be derivatives from the previous generation via the above
substitutions. The main goal of Sections \ref{sec:all} and
\ref{sec:reducedset} is to obtain a general description of the triplets
$\<\alpha_s,\beta_s,\gamma_s\>$ occurring in (\ref{eq:fl1}).

\subsection{Case II}\label{subsec:resc2}

Inequalities in the $\acbd<0$ case have a similar but significantly simpler
structure. Facets of the $n$-copy cone $\mathcal Q^*_2$ from (\ref{eq:Q2})
include $13$ coordinate planes, the two Shannon facets $(a,z)\ge0$ and
$(b,z)\ge 0$ as above, and additional facets corresponding to the
inequalities
\begin{align*}
   (a,b\|z) +  k\tsp\acbd + k\tsp\Z + \frac{k(k\m-1)} 2 \C &\ge 0
    ~~~~~\mbox{ where } 1\le k \le n, \\[2pt]
 (a,b\|c) + k\tsp\acbd +  \frac{(k\m+1)k} 2 \C &\ge 0
    ~~~~~\mbox { where } 1\le k < n.
\end{align*}
As noted in the $\abcd<0$ case, inequalities in the second set are instances
of ones from the first from the previous generation when $z$ is replaced by
$c$. When $z$ is replaced by $d$, the resulting inequality
$$
   (a,b\|d)+k\tsp\acbd+k\tsp\D + \frac{k(k\m-1)}2 \C \ge 0
$$
is Shannon as $\acbd + \D= (a,d\|c)+2(b,d\|a)+(a,c\|b)\ge 0$ holds in every
polymatroid.


\section{New Inequalities}\label{sec:all}

In this section we define a set of $\<\alpha,\beta,\gamma\>$ triplets, and
\emph{prove} that each of them gives rise to a non-Shannon inequality that
must hold in polymatroids having an $n$-copy. These inequalities cover those
that were discovered experimentally for $n\le 9$. We \emph{conjecture} this
set to be complete, that is, the applied MEM method yields no additional
non-Shannon inequalities; or in other words, if a polymatroid on $5$
elements satisfies all these inequalities, then it has an $n$-copy for all
$n$.

\subsection{Case I}

For notational convenience $\b(x,y)$, for \emph{binomial}, denotes
the function defined on $\N\times\N$ that satisfies the following
recurrent definition for positive integers $x$ and $y$:
\begin{align*}
  \b(0,0)&=\b(x,0)=\b(0,y)=1, \mbox{ and}\\
  \b(x,y)&=\b(x\m-1,y)+\b(x,y\m-1).
\end{align*}
Clearly, $\b(x,y)=\b(y,x)={x+y\choose x}$. Table \ref{table:b} illustrates
the initial values of $\b(x,y)$.

\begin{table}[ht]
\begin{tikzpicture}[scale=0.6]
\foreach\x in {0,2,4,6}{
  \fill[color=gray!12,opacity=0.5] (\x,-1) rectangle (\x+1,4.3);
}
\foreach\x in {0,2}{
  \fill[color=gray!12,opacity=0.5] (-1,\x) rectangle (8.4,\x+1);
}
\fill[color=gray!12,opacity=0.5] (-1,4) rectangle (8.2,4.2);
\fill[color=gray!12,opacity=0.5] (8,-1) rectangle (8.2,4.2);
\fill[color=blue!17] (4.2,-0.8) rectangle (4.8,-0.1);
\fill[color=blue!17] (-0.8,2.2) rectangle (-0.1,2.9);
\fill[color=blue!17] (4.1,2.2) rectangle (4.88,2.9);
\draw[->] (-1,0)--(8.4,0);
\draw[->] (0,-1)--(0,4.4);
\foreach\x in {0,1,2,3}{
 \draw (-0.5,\x+0.55) node{$\scriptstyle\x$};
}
\foreach\x in {0,1,2,3,4,5,6,7}{
 \draw (\x+0.5,-0.5) node{$\scriptstyle\x$};
}
\foreach\x/\y in {0/1,1/4,2/10,3/20,4/35,5/56,6/84,7/120}{
 \draw (\x+0.5,3.55) node{\small$\y$};
}
\foreach\x/\y in {0/1,1/3,2/6,3/10,4/15,5/21,6/28,7/36}{
 \draw (\x+0.5,2.55) node{\small$\y$};
}
\foreach\x/\y in {0/1,1/2,2/3,3/4,4/5,5/6,6/7,7/8}{
 \draw (\x+0.5,1.55) node{\small$\y$};
}
\foreach\x/\y in {0/1,1/1,2/1,3/1,4/1,5/1,6/1,7/1}{
 \draw (\x+0.5,0.55) node{\small$\y$};
}
\draw (-0.3,4.25) node{$y$};
\draw (8.5,-0.3) node{$x$};
\end{tikzpicture}
\caption{The function $\b(x,y)$}\label{table:b}
\end{table}

The following summation formulas will be used later.

\begin{lemma}\label{lemma:bsum}
For $x, y\in\N$ the following summation formulas hold:
\begin{align*}
  \sum_{i\le x,\;j\le y} \b(i,j) & = \b(x\m+1,y\m+1)-1, \\
  \sum_{i\le x,\;j\le y} i\tsp\b(i,j) &= x\tsp\b(x\m+1,y\m+1)
            -\b(x,y\m+2)+1.
\end{align*}
\end{lemma}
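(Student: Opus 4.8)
The plan is to prove both identities by double induction, or more efficiently by reducing the second to the first. Since $\b$ satisfies Pascal's recurrence $\b(x,y)=\b(x\m-1,y)+\b(x,y\m-1)$ with $\b(x,0)=\b(0,y)=1$, the first identity is the ``hockey-stick''-type summation for the two-dimensional binomial array. I would first establish the one-dimensional telescoping fact that $\sum_{i\le x}\b(i,j) = \b(x\m+1,j\m+1)-\b(0,j\m+1)+\b(0,j) = \b(x\m+1,j\m+1)$ when $j\ge1$, wait---more carefully: from $\b(i,j\m+1)-\b(i\m-1,j\m+1)=\b(i,j)$ we get $\sum_{i=1}^{x}\b(i,j)=\b(x,j\m+1)-\b(0,j\m+1)=\b(x,j\m+1)-1$, hence $\sum_{i=0}^{x}\b(i,j)=\b(x,j\m+1)$ for $j\ge1$, and $=x+1=\b(x,1)$ for $j=0$, so in all cases $\sum_{i\le x}\b(i,j)=\b(x,j\m+1)$ provided $j\ge1$, with the $j=0$ case matching since $\b(x,1)=x+1$. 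Then summing over $j\le y$ and telescoping again in the $j$ direction gives $\sum_{i\le x,\,j\le y}\b(i,j)=\sum_{j\le y}\b(x,j\m+1) = \b(x\m+1,y\m+1)-1$, using the same telescoping identity (now in the first argument, legitimate since $x\ge 1$; the degenerate cases $x=0$ or $y=0$ are checked directly against $\b(1,y\m+1)-1=y$ and $\b(x\m+1,1)-1=x$).

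For the second identity, the cleanest route is to write $i\tsp\b(i,j)$ in a telescoping-friendly form. I would use the first identity as a black box: summing it over a shifted range, or equivalently writing $\sum_{i\le x} i\tsp\b(i,j) = \sum_{i\le x}\sum_{k=1}^{i}\b(i,j) = \sum_{k=1}^{x}\sum_{i=k}^{x}\b(i,j)$, and then $\sum_{i=k}^{x}\b(i,j) = \sum_{i\le x}\b(i,j) - \sum_{i\le k\m-1}\b(i,j) = \b(x,j\m+1)-\b(k\m-1,j\m+1)$ by the one-dimensional telescoping above. Summing this over $k$ from $1$ to $x$ gives $x\tsp\b(x,j\m+1) - \sum_{k=0}^{x\m-1}\b(k,j\m+1) = x\tsp\b(x,j\m+1) - \b(x\m-1,j\m+2)$ by the same telescoping; then summing over $j\le y$ and telescoping once more yields $x\tsp\b(x\m+1,y\m+1) - \big(\b(x,y\m+2)-1\big) - \big(\text{correction}\big)$. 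I expect the bookkeeping here to produce exactly $x\tsp\b(x\m+1,y\m+1)-\b(x,y\m+1)+1$ after collecting the telescoped terms, but the constants need care.

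The main obstacle is purely the boundary bookkeeping: the telescoping identities are clean only when all indices are $\ge 1$, and the stated formulas must be verified to hold for $x=0$ or $y=0$ as well, where several terms collapse. I would handle this by either (a) proving the identities first for $x,y\ge1$ by the telescoping argument above and then separately checking $x=0$ (reduces to $\sum_{j\le y}\b(0,j)=y\m+1=\b(1,y\m+1)-1$, and $\sum_{j\le y}0\cdot\b(0,j)=0 = 0\cdot\b(1,y\m+1)-\b(0,y\m+1)+1$) and $y=0$ symmetrically, or (b) doing a single induction on $x$ with the recurrence $\b(x,y)=\b(x\m-1,y)+\b(x,y\m-1)$, peeling off the $i=x$ row and invoking the first identity for the remainder. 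Either way the argument is elementary; the only thing to get right is that no off-by-one creeps into the $-1$, $-\b(x,y\m+1)$, and $+1$ constants.
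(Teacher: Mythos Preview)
Your approach is essentially the paper's own: first establish the single-variable identities $\sum_{i\le x}\b(i,j)=\b(x,j{+}1)$ and $\sum_{i\le x}i\,\b(i,j)=x\,\b(x,j{+}1)-\b(x{-}1,j{+}2)$ (your telescoping argument and the paper's ``induction on $x$'' are the same thing, and your partial-summation trick $i=\sum_{k=1}^i 1$ lands on exactly the paper's intermediate formula), then sum over $j$. One warning on the bookkeeping you flagged: carrying your computation through gives $x\,\b(x{+}1,y{+}1)-\b(x,y{+}2)+1$, not the printed $-\b(x,y{+}1)+1$; the statement as written has a typo (e.g.\ at $x=y=1$ the sum is $3$ while the printed formula gives $4$), so do not try to force your answer to match it.
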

\begin{proof}
Induction on $x$ shows that $\sum_{i\le x} \b(i,y)=
\b(x,y\m+1)$, and also that
$$
    \tsum_{i\le x} i\tsp\b(i,y) = x\tsp\b(x,y\m+1)-\b(x\m-1,y\m+2).
$$
Following this, induction on $y$ gives the desired results.
\end{proof}

\begin{definition}
The set $s\subset \N\times \N$ of pairs of non-negative integers is
\emph{downward closed} if $(i,j)\in s$ implies $(i',j')\in s$ for
every non-negative $i'\le i$ and $j'\le j$. For $n\ge 1$ the
\emph{diagonal set $D_n$} is
$$
    D_n \eqdef \{ (i,j)\in\N\times\N: i+j< n \}.
$$
\end{definition}
Clearly, $D_n$ is downward closed.

\begin{definition}\label{def:triplets}
For a finite, downward closed set $s\subset\N\times\N$, define the
three-dimensional vector $\mathbf v_s$ as
$$
   \mathbf v_s =\< \alpha_s,\beta_s,\gamma_s\>
     = \sum_{(i,j)\in s} \b(i,j)\,\< 1,i,j\>,
$$
When $s$ is the empty set, define $\mathbf v_\emptyset =\< 0,0,0\>$.
\end{definition}

For illustration, the $13$ downward-closed subsets of the diagonal $D_3$, as
well as the corresponding $\<\alpha_s,\beta_s,\gamma_s\>$ triplets are
presented in Table \ref{table:dcsD3}.
\begin{table}[t]
$$
\begin{tikzpicture}[scale=1.3]
\def\DDo#1{\draw[color=gray] (#1) circle(1.5pt);}
\def\DDi#1{\draw[fill=orange] (#1) circle(1.5pt); }
\def\pxxts#1#2#3#4#5#6#7#8{%
\csname DD#1\endcsname{#7,0}%
\csname DD#2\endcsname{#7,5pt}%
\csname DD#3\endcsname{#7,10pt}%
\csname DD#4\endcsname{#7+5pt,0}%
\csname DD#5\endcsname{#7+5pt,5pt}%
\csname DD#6\endcsname{#7+10pt,0}%
\draw[color=black!30!blue] (#7-5pt,21pt) node[right] {$\scriptstyle\##8$};
}
\foreach\x in {0,2,4,6,8,10,12}{
  \filldraw[color=black!5] (\x*20pt-4pt,-6pt) rectangle (\x*20pt+13pt,28pt);
}
\draw[line width=2pt,color=black!5](-4pt,-6pt)--(240pt,-6pt)
                                    (-4pt,28pt)--(240pt,28pt);
\pxxts iooooo{0pt}1
\pxxts iooioo{20pt}2
\pxxts iooioi{40pt}3
\pxxts iioooo{60pt}4
\pxxts iioioo{80pt}5
\pxxts iioioi{100pt}6
\pxxts iioiio{120pt}{7}
\pxxts iioiii{140pt}8
\pxxts iiiooo{160pt}9
\pxxts iiiioo{180pt}{10}
\pxxts iiiioi{200pt}{11}
\pxxts iiiiio{220pt}{12}
\pxxts iiiiii{240pt}{13}
\end{tikzpicture}
$$
$$
\newcommand\bl[1]{{\color{black!30!blue}\small$\##1$}}
\begin{tabular}[t]{rc}
\bl 1 & $\<1,0,0\>$\\
\bl2 & $\<2,1,0\>$\\
\bl3 & $\<3,3,0\>$
\end{tabular}\quad\begin{tabular}[t]{rc}
\bl4 & $\<2,0,1\>$ \\
\bl5 & $\<3,1,1\>$ \\
\bl6 & $\<4,3,1\>$
\end{tabular}\quad\begin{tabular}[t]{rc}
\bl7 & $\<5,3,3\>$\\
\bl8 & $\<6,5,3\>$ \\
\bl9 & $\<3,0,3\>$
\end{tabular}\quad\begin{tabular}[t]{rc}
\bl{10} & $\<4,1,3\>$\\
\bl{11} & $\<5,3,3\>$\\
\bl{12} & $\<6,3,5\>$ \\
\bl{13} & $\<7,5,5\>$
\end{tabular}
$$
\caption{Downward closed subsets of $D_3$ and the corresponding triplets}%
\label{table:dcsD3}
\end{table}
For example, the set $\#6$ has elements $(0,0)$, $(1,0)$, $(2,0)$, and
$(0,1)$; the associated triplet is
$$
  \b(0,0)\<1,0,0\> +
\b(1,0)\<1,1,0\>+\b(2,0)\<1,2,0\>+\b(0,1)\<1,0,1\> =\<4,3,1\>.
$$
The diagonal $D_1$, which is $\#1$ in the list, has a single point, the
origin, and the corresponding vector is $\mathbf v_{D_1}=\<1,0,0\>$. In
general, the diagonal $D_n$ has $n(n+1)/2$ points, and the vector associated
with $D_n$ is
$$
\mathbf v_{D_n} =  \< 2^n\m-1,\,(n\m-2)\tsp 2^{n-1}\m+1,\, (n\m-2)\tsp 2^{n-1}\m+1\>.
$$
The following theorem provides a family of non-Shannon inequalities that
covers all inequalities that were found experimentally in Section
\ref{subsec:resc1}.

\begin{theorem}\label{thm:mainI}
Let $f$ be a polymatroid on the base set $abcdz$ that has an $n$-copy over
the $\{cd\}\{ab\}\{z\}$ partition. Then, for every downward closed set
$s\subseteq D_n$, $f$ satisfies the inequality
\begin{equation}\label{eq:main}
  (a,b\|z) + \alpha_s\big(\abcd + \Z\big) +
           \beta_s\tsp\C + \gamma_s\D \ge 0.
\end{equation}
\end{theorem}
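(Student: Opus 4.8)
The plan is to prove the inequality by exhibiting an explicit derivation from the basic Shannon inequalities, using only the structure of an $n$-copy. The key objects are the $n$ copies $c_1d_1,\dots,c_nd_n$ of $Y=\{cd\}$, each with the same joint distribution with $ab$ as $cd$ has, plus the variable $z$, with the crucial property (iii) that $c_1d_1,\dots,c_nd_n,z$ are completely conditionally independent over $ab=X$. First I would index the copies by the points of $S$: since $S$ is downward closed with $S\subseteq D_n$, we have $|S|\le n(n+1)/2$, but more to the point the weight $\b(i,j)$ attached to the point $(i,j)\in S$ suggests using $\b(i,j)$ identical ``virtual'' copies for that lattice point, or better, re-deriving the weights combinatorially. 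I expect the right move is to sum instances of the Shannon inequality $(c_k d_k, z \| ab) \ge 0$ — or rather $(Y_k, z\| X)=0$, which holds with equality — together with instances of the Ingleton-type inequality $\abcd + \Z \ge 0$ transported to various copies, and conditional mutual informations $\C$, $\D$ transported between copies, in such a way that the telescoping leaves exactly $(a,b\|z)$ plus the claimed combination on the original variables.

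The second, more structural, step is to recognize the role of the recurrence $\b(x,y)=\b(x\m-1,y)+\b(x,y\m-1)$. Since a downward closed $S$ is built up by adding lattice points one at a time while staying downward closed, and $\mathbf v_S = \sum_{(i,j)\in S}\b(i,j)\<1,i,j\>$, I would prove the theorem by induction on $|S|$: the base case $S=\emptyset$ gives $\mathbf v_\emptyset=\<0,0,0\>$ and the inequality reduces to the Shannon inequality $(a,b\|z)\ge 0$. For the inductive step, write $S = S' \cup \{(i_0,j_0)\}$ with $S'$ downward closed and $(i_0,j_0)$ maximal in $S$; the difference $\mathbf v_S - \mathbf v_{S'} = \b(i_0,j_0)\<1,i_0,j_0\>$. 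The task is to show that adding the maximal point corresponds to a valid single-step entropy manipulation in the $n$-copy: peeling off one more copy $Y_k$, applying $\abcd+\Z\ge 0$ (transported to the copy whose ``$z$-role'' is played by the previous generation's variable), and using the copy-substitution idea described in Section~\ref{sec:ineq} (the copy $c_n$ of $c$ acts as $z$ in an $(n\m-1)$-copy restricted appropriately). The coefficient $\b(i_0,j_0)$ should emerge as the number of monotone lattice paths from the origin to $(i_0,j_0)$, each path tracking one chain of successive substitutions of $z$ by $c$ or by $d$, which is exactly what the recurrence for $\b$ encodes.

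The main obstacle I anticipate is bookkeeping the simultaneous use of all $n$ copies coherently: the inequality mixes $\C$ and $\D$ (mutual informations of $c$, resp.\ $d$, with $a,b$) with asymmetric coefficients $\beta_S$ and $\gamma_S$, and these have to be produced by transporting Shannon inequalities across different copies while the conditional independence over $ab$ keeps the ``cross terms'' between distinct copies from leaking into the final expression. Concretely, the delicate point is verifying that when one uses property (iii) to split $f^*(Y_{k_1}\cdots Y_{k_r} z \| ab)$ into a sum of per-copy terms, every auxiliary quantity introduced cancels, leaving only expressions in $f\restr abcd$ and $f\restr abz$ — i.e.\ in the $19$ coordinates of $\mathbf x$ — as guaranteed abstractly by the discussion around \eqref{eq:qstar}. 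I would handle this by first proving a clean ``one-step lemma'' that, given a polymatroid $g$ on $abcdz$ satisfying \eqref{eq:main} for a downward closed $S'\subseteq D_{n-1}$, constructs from its $n$-copy a proof of \eqref{eq:main} for $S = S'\cup\{(i_0,j_0)\}\subseteq D_n$; iterating this lemma along any enumeration of $S$ compatible with the downward closed order then yields the theorem, and the fact that the result is independent of the enumeration follows from the symmetry of $\b$ and the commutativity of the underlying Shannon manipulations.
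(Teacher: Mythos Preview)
Your proposal has the right high-level ingredients---the $n$-copy, Pascal's recurrence, telescoping---but the concrete plan contains a genuine gap. An induction on $|S|$ with the $5$-variable inequality \eqref{eq:main} for $S'$ as inductive hypothesis cannot work as stated: the increment when adjoining $(i_0,j_0)$ is $\b(i_0,j_0)\big(\abcd+\Z+i_0\C+j_0\D\big)$, which is not a Shannon quantity and can be negative, so the inequality for $S$ does not follow from the one for $S'$ alone. Your fallback, the ``one-step lemma'' that passes from $\mathcal E_{n-1}$ to $\mathcal E_n$ via the $z\to c$ / $z\to d$ substitution, is precisely the mechanism of Lemma~\ref{lemma:Jalphabeta} (Proof~II of Theorem~\ref{thm:mainII}); the paper remarks explicitly, right after that lemma, that for the $\abcd$ version this inductive route recovers \emph{some but not all} of the inequalities in Theorem~\ref{thm:mainI}. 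So iterating that step along lattice paths will miss triplets $\mathbf v_S$.

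What your plan lacks is the auxiliary family $(a,b\|c^kd^\ell z)$ living in the \emph{symmetrized} $n$-copy. The paper first invokes Claim~\ref{claim:reduction2} to make $f^*$ invariant under permuting the pairs $c_id_i$, so that $c^kd^\ell$ is well-defined; it then proves a single pointwise inequality $\I(k,\ell)$ (Lemma~\ref{lemma:indhyp}),
\[
  \abcd+\Z+k\C+\ell\D \;\ge\; -(a,b\|c^kd^\ell z)+(a,b\|c^{k+1}d^\ell z)+(a,b\|c^kd^{\ell+1}z),
\]
valid for $k+\ell<n$. Summing $\b(k,\ell)\,\I(k,\ell)$ over $(k,\ell)\in S$ is where Pascal's identity and downward closedness do their work: the auxiliary terms telescope, the only surviving negative term is $-(a,b\|z)$, and the left-hand coefficients are exactly $\alpha_S,\beta_S,\gamma_S$. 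If you insist on phrasing this as an induction on $|S|$, the inductive hypothesis must carry the boundary residue $\sum_{(k,\ell)\in\partial S'}\b(k,\ell)(a,b\|c^kd^\ell z)$, not just the $5$-variable inequality; once you do that, you are reproducing the paper's argument. The substantive step you have not addressed is proving $\I(k,\ell)$ itself, which requires the estimates of Lemmas~\ref{lemma:abdz}--\ref{lemma:upb}.
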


\begin{proof}
Let $f^*$ be an $n$-copy of $f$ on the base set $N^*=\{abz\}\cup
\{c_id_i:i\le n\}$. Using Claims \ref{claim:reduction1} and
\ref{claim:reduction2}, we can assume that
\begin{itemize}
\item[(i)] $f$ is isomorphic to $f^*\restr abc_1d_1z$, and
\item[(ii)] $f^*$ is symmetric for all $n!$ permutations of the pairs
$c_id_i$.
\end{itemize}
By (i) it suffices to show that $f^*$ satisfies all inequalities in
(\ref{eq:main}). By (ii), permutationally equivalent subsets of $N^*$ have
the same $f^*$-value. Below, $c^k$, etc., stands for $k$ elements chosen
from $c_1, \ldots, c_n$. Occasionally, $c_1$, $d_1$ will also be denoted by
$c$ and $d$, and $c^{k+1}$ will be written as $cc^k$, letting $c=c_1$ be one
of the chosen elements.

A representative element for the subset $A\subseteq N^*$ will be written as
$$
    Bc^k d^\ell (cd)^m,
$$
with $k+\ell+m\le n$, where $B$ is a (possibly empty) subset of $abz$; and
from the $c_id_i$ pairs there are $k$ that intersect $A$ in $c_i$, there are
$\ell$ pairs that intersect $A$ in $d_i$, and there are $m$ pairs that
intersect $A$ in $c_id_i$. Only non-zero exponents will be presented.

The following inequality is denoted by $\I(k,\ell)$:
\begin{align*}
   \abcd &+ \Z+k\tsp\C +\ell D \ge {}\\
    &{-(a,b\|c^kd^\ell z)} + (a,b\|c^{k+1}d^\ell z)
       + (a,b\|c^kd^{\ell+1}z).
\end{align*}
By Lemma \ref{lemma:indhyp} below this inequality holds for $f^*$ when
$k+\ell<n$. Let $s\subseteq D_n$ be a downward closed set, and consider the
following combination of the inequalities $\I(k,\ell)$:
$$
   \sum_{(k,\ell)\in s} \b(k,\ell)\,\I(k,\ell).
$$
On the left hand side we have $\alpha_s$ many copies of $\abcd$ and $\Z$,
$\beta_s$ many copies of $\C$, and $\gamma_s$ copies of $\D$. On the right
hand side the only remaining negative term is $(a,b\|z)$, all others cancel
out as $\b(k\m-1,\ell)+ \b(k,\ell\m-1)=\b(k,\ell)$. Consequently, inequality
(\ref{eq:main}) holds in $f^*$, as claimed.
\end{proof}

The rest of this section is devoted to the proof of Lemma \ref{lemma:indhyp}
stating that $\I(k,\ell)$ holds in $f^*$. We start with some simple
inequalities about the copy polymatroid $f^*$. For ease of reading, we omit
the parentheses in addition to the function $f^*$.

\begin{lemma}\label{lemma:abdz}
For non-negative integers $k$, $\ell$ with $k+\ell<n$ we have
\begin{align*}
  ac^kd^{\ell+1}z &\le adz + k(ac-a) + \ell(ad-a), \\
  bc^{k+1}d^\ell z &\le bcz + k(bc-b) + \ell(bd-b).
\end{align*}
\end{lemma}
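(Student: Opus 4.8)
The plan is to derive both inequalities from submodularity alone, via a telescoping chain‑rule expansion, invoking the definition of the $n$-copy only to guarantee that enough copies are available and that permutationally equivalent subsets have equal $f^*$-value (Claim \ref{claim:reduction2}).

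For the first inequality I would begin by choosing representatives: since $f^*$ is symmetric, I may take the $k$ pairs that contribute a lone $c_i$ to be pairs $1,\dots,k$, the $\ell+1$ pairs that contribute a lone $d_i$ to be pairs $k+1,\dots,k+\ell+1$, and these are genuinely distinct pairs precisely because $k+\ell+1\le n$, which is the hypothesis $k+\ell<n$. Then $f^*(ac^kd^{\ell+1}z)=f^*\bigl(\{a,d_{k+1},z\}\cup\{c_1,\dots,c_k\}\cup\{d_{k+2},\dots,d_{k+\ell+1}\}\bigr)$. Writing $f^*(ac^kd^{\ell+1}z)-f^*(\{a,d_{k+1},z\})$ as a telescoping sum obtained by deleting $c_1,\dots,c_k$ and then $d_{k+2},\dots,d_{k+\ell+1}$ one at a time, each term has the form $f^*(A)-f^*(A\setminus x)$ with $a\in A\setminus x$ and $x\notin A\setminus x$, hence is at most $f^*(\{a,x\})-f^*(\{a\})$ by the diminishing‑returns form of submodularity (apply it with $T=\{a\}$). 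By symmetry this bound equals $ac-a$ when $x$ is some $c_i$ and $ad-a$ when $x$ is some $d_i$; summing the $k$ terms of the first type and the $\ell$ of the second, and rewriting $f^*(\{a,d_{k+1},z\})$ as $adz$ (again by symmetry), gives exactly the first inequality.

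The second inequality is obtained by the same argument with $a$ replaced by $b$: one peels $bc^{k+1}d^\ell z$ down to a set $\{b,c_i,z\}$ retaining a single copy $c_i$, bounds each increment by $bc-b$ (for a removed $c_j$) or $bd-b$ (for a removed $d_j$), and identifies $f^*(\{b,c_i,z\})$ with $bcz$; here too the existence of $k+1$ pairs supplying a lone $c_i$ together with $\ell$ pairs supplying a lone $d_i$ requires exactly $k+\ell<n$. I expect no genuine obstacle: the only care needed is the bookkeeping of which copies play which role, and checking at each telescoping step that the current set still contains $a$ (respectively $b$) and does not yet contain the element being removed, so that the submodular inequality may be applied with $T=\{a\}$ (respectively $T=\{b\}$).
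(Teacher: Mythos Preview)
Your proposal is correct and takes essentially the same approach as the paper: both arguments peel off the extra $c_i$'s and $d_j$'s one at a time and bound each increment by $ac-a$, $ad-a$ (resp.\ $bc-b$, $bd-b$) via submodularity, with symmetry identifying the base term with $adz$ (resp.\ $bcz$). The only cosmetic difference is that the paper phrases this as an induction on $k$ and $\ell$ whereas you write out the equivalent telescoping sum explicitly.
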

\begin{proof}
The claims are clearly true for $k=\ell=0$. Otherwise, use induction on $k$
and $\ell$ using
\begin{align*}
   ac^{k+1}d^{\ell+1}z - ac^kd^\ell z &= acX - aX \le ac -a, \\
   ac^kd^{\ell+2}z - ac^kd^{\ell+1}z &= adY - aY \le ad -a,
\end{align*}
for some subsets $X$ and $Y$ of $N^*$. The second inequality can
be proved similarly.
\end{proof}

\begin{lemma}\label{lemma:abxx}%
{\upshape(i)}
If $k+\ell\le n$ then $ abc^kd^\ell z = abz + k(abc-ab) +
\ell(abd-ab)$.

\titem{(ii)} If $k+\ell<n$ then $ab(cd) c^kd^\ell z= abcd +
(abz-ab)+k(abc-ab)+\ell(abd-ab)$.
\end{lemma}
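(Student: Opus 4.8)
The plan is to derive both identities from a single source: the complete conditional independence over $ab$ of the $n\m+1$ blocks $c_1d_1,\dots,c_nd_n$ and $z$ guaranteed by Definition~\ref{def:n-copy}(ii), which makes the conditional entropy $f^*(\,\cdot\,\|ab)$ additive over these blocks. First I would record the atomic values. By Definition~\ref{def:n-copy}(i) we have $f^*(ab)=ab$, $f^*(abc_i)=abc$, $f^*(abd_i)=abd$, $f^*(abc_id_i)=abcd$ for every $i\le n$, and $f^*(abz)=abz$; hence $f^*(c_i\|ab)=abc-ab$, $f^*(d_i\|ab)=abd-ab$, $f^*(c_id_i\|ab)=abcd-ab$, and $f^*(z\|ab)=abz-ab$.

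Second, for a set $A$ of either of the two shapes I would split off $ab$ and write $f^*(A)=ab+f^*(A\sm ab\|ab)$. In case (i), $A\sm ab=c^kd^\ell z$ is a disjoint union of $k$ singletons $c_i$, $\ell$ singletons $d_j$ (taken from $k\m+\ell$ distinct pairs, which exist since $k+\ell\le n$) and the block $z$; in case (ii), $A\sm ab=(cd)\,c^kd^\ell z$ is a disjoint union of one full pair $c_1d_1$, $k$ singletons $c_i$, $\ell$ singletons $d_j$ (from $1\m+k\m+\ell$ distinct pairs, which is exactly the hypothesis $k+\ell<n$) and the block $z$. In either case every piece lies in a different block, so additivity of $f^*(\,\cdot\,\|ab)$ over the pieces, together with the atomic values, gives $f^*(A\sm ab\|ab)=k(abc-ab)+\ell(abd-ab)+(abz-ab)$ in case (i) and $f^*(A\sm ab\|ab)=(abcd-ab)+k(abc-ab)+\ell(abd-ab)+(abz-ab)$ in case (ii). Adding $ab$ and simplifying, using $ab+(abz-ab)=abz$ respectively $ab+(abcd-ab)=abcd$, yields exactly the claimed identities.

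The one step that requires justification — and the only real obstacle — is that \emph{complete conditional independence of the blocks over $ab$ descends to complete conditional independence of the chosen sub-pieces over $ab$}, which is what licenses the additivity used above. I would prove this by a short induction that introduces the pieces one at a time: if a piece $W$ lies in a block $Y$ disjoint from the union $U$ of the blocks already used, then $Y$ and $U$ are conditionally independent over $ab$, so $f^*(Y,U\|ab)=0$; since conditional mutual information is monotone, $0\le f^*(W,A\|ab)\le f^*(Y,U\|ab)=0$ for $W\subseteq Y$ and $A\subseteq U$, hence $f^*(W\|abA)=f^*(W\|ab)$, which is precisely the additive increment needed for the next step. (Alternatively, one may simply invoke that intersecting a completely conditionally independent family with arbitrary subsets leaves it completely conditionally independent over the same conditioning set.) The remaining computations are the same routine bookkeeping as in Lemma~\ref{lemma:abdz}.
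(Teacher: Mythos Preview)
Your proposal is correct and follows exactly the same idea as the paper: both statements are immediate from the complete conditional independence of $cd$, $c^k$, $d^\ell$, and $z$ over $ab$, which makes $f^*(\,\cdot\,\|ab)$ additive on these pieces. The paper states this in a single sentence, while you spell out the bookkeeping and the (valid) observation that independence of the blocks $c_id_i$ descends to the sub-singletons $c_i$, $d_i$; this is a welcome clarification but not a different approach.
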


\begin{proof}
Both statements follow from the fact that under the given conditions
$cd$, $c^k$, $d^\ell$ and $z$ are conditionally independent over $ab$.
\end{proof}

\begin{lemma}\label{lemma:upb}%
{\upshape(i)} $((cd)c^kd^\ell z - (cd) \ge (abz-ab)+k(abc-ab)+\ell(abd-ab)$.

\titem{(ii)}
$bdc^kz -bd \ge (abz-ab) + k(abc-ab)$.

\titem{(iii)}
$a(cd)c^kz - a(cd) \ge (abz-ab) + k(abc-ab)$.
\end{lemma}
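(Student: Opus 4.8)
The plan is to establish all three inequalities by the same mechanism: each one asserts that some conditional entropy, measured relative to the ``core'' $cd$ (resp.\ $bd$, $a(cd)$), dominates the sum of the conditional-independence increments contributed by $z$ and by the $k$ copies $c^k$. These increments are exactly the quantities $abz-ab$, $abc-ab$, $abd-ab$ that already appeared in Lemma~\ref{lemma:abxx}, and the natural tool is submodularity applied repeatedly, peeling off one copy of $c$ (or $z$) at a time. Concretely, for part~(i) I would set $g(\cdot)=f^*(\cdot\,cd)-f^*(cd)$, which is again a polymatroid on the remaining ground set, and write the telescoping difference
\begin{align*}
  (cd)c^kd^\ell z - (cd) &= \sum_{\text{steps}} \big(f^*(S\cup\{\text{new elt}\}) - f^*(S)\big),
\end{align*}
bounding each increment from below. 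The increment when $z$ is adjoined is at least $f^*(abz)-f^*(ab)=(abz-ab)$ because adding $z$ on top of a superset of $ab$ gains at least what it gains on top of $ab$ exactly (here one wants the submodularity inequality in the direction $f^*(Sz)-f^*(S)\ge f^*(Tz)-f^*(T)$ for $S\subseteq T$, i.e.\ the conditional mutual information $f^*(z,T\setminus S\|S)\ge 0$); similarly each copy $c_i$ contributes at least $(abc-ab)$, and each copy $d_i$ at least $(abd-ab)$. Summing the $k$ contributions of $c$-type, the $\ell$ of $d$-type, and the one of $z$-type yields exactly the claimed lower bound. I would present this as a short induction on $k+\ell$, with the inductive step being a single application of submodularity as in Lemma~\ref{lemma:abdz}.

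Parts~(ii) and~(iii) are the special cases obtained by dropping the $d^\ell$ block and replacing the core $cd$ by $bd$ and by $a(cd)$ respectively; here only $z$ and the $k$ copies of $c$ are adjoined, so the target lower bound loses the $\ell(abd-ab)$ term. The argument is identical: telescope, and at each step use $f^*(Sc)-f^*(S)\ge f^*(abc)-f^*(ab)=(abc-ab)$ and $f^*(Sz)-f^*(S)\ge f^*(abz)-f^*(ab)=(abz-ab)$, valid because the relevant $S$ always contains $ab$ (for~(ii), $S\supseteq bd\supseteq b$ and one also needs $a$ present when adjoining the last element — this is where I would be slightly careful: the bound $f^*(Sc)-f^*(S)\ge abc-ab$ requires $ab\subseteq S$, so one must order the adjunctions so that $a$ enters early, or equivalently start the telescope from $abd$ rather than $bd$ and absorb the extra term $f^*(abd)-f^*(bd)\ge0$; for~(iii) the core already contains $a$, so this is automatic). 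The constraint $k+\ell<n$ (resp.\ the implicit $k<n$) is what guarantees there are genuinely $k$ distinct copies $c_1,\dots,c_k$ available together with a further index, so all the conditional-independence identities of Lemma~\ref{lemma:abxx} and the submodularity steps make sense on the $n$-copy ground set.

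The main obstacle, such as it is, is bookkeeping rather than depth: one must make sure that in the telescoping sum every intermediate set $S$ over which a new $c_i$ or $z$ is adjoined actually contains $ab$, so that the one-line submodularity bound $f^*(S\cup\{x\})-f^*(S)\ge f^*(abx)-f^*(ab)$ applies with the right constant. For~(ii) this forces a specific order of adjunction (bring in $a$ right after $bd$, or start from $abd$), and I expect the cleanest write-up to mirror exactly the inductive format of Lemma~\ref{lemma:abdz}: base case $k=\ell=0$ is the trivial $f^*(\text{core})-f^*(\text{core})\ge\cdots$ after checking the single $z$-step, and the inductive step is one submodular inequality. No non-Shannon input is needed anywhere — this lemma is purely a polymatroid (Shannon) fact about the copy $f^*$, exploiting only the complete conditional independence over $ab$ from Definition~\ref{def:n-copy}(ii).
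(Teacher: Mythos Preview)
Your telescoping approach has the submodularity inequality running in the wrong direction. You correctly state that $f^*(Sz)-f^*(S)\ge f^*(Tz)-f^*(T)$ for $S\subseteq T$ (diminishing marginal returns), but then assert that ``adding $z$ on top of a \emph{superset} of $ab$ gains at least what it gains on top of $ab$''. That would need the superset of $ab$ to play the role of the \emph{smaller} set $S$, which it cannot; for $ab\subseteq S$ submodularity gives $f^*(Sz)-f^*(S)\le f^*(abz)-f^*(ab)$, the reverse of what you want. The same reversal occurs in your claimed bound $f^*(Sc)-f^*(S)\ge abc-ab$ for $ab\subseteq S$. Consequently the per-element telescope, in which each adjoined $c_i$ or $z$ contributes at least $(abc-ab)$ or $(abz-ab)$, does not go through.

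The paper's proof uses submodularity once, in the correct direction, and then an exact identity. For part~(i), with $A=(cd)c^kd^\ell z$ and $B=abcd$ one has $A\cap B=cd$ and $A\cup B=ab(cd)c^kd^\ell z$, so submodularity gives
\[
  (cd)c^kd^\ell z - cd \;\ge\; ab(cd)c^kd^\ell z - abcd,
\]
and the right-hand side is computed \emph{exactly} from Lemma~\ref{lemma:abxx}(ii). Parts~(ii) and~(iii) are handled identically with $B=abd$ and $B=abcd$, invoking Lemma~\ref{lemma:abxx}(i) (with $\ell=1$) and~(ii) (with $\ell=0$). Your proposed ``fix'' for part~(ii)---start from $abd$ and absorb $abd-bd$---is, when unpacked, precisely this single submodularity step rather than a telescope; but the surrounding narrative about bounding each increment from below is incorrect as written.
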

\begin{proof}
For the first inequality, $(cd)c^kd^\ell z - (cd) \ge ab(cd)c^kd^\ell z -
abcd$ by submodularity. From here, apply Lemma \ref{lemma:abxx} to get the
required inequality. The other two inequalities can be proved in a similar
way.
\end{proof}

\begin{lemma}\label{lemma:indhyp}
For non-negative integers $k$, $\ell$ with $k+\ell<n$ the inequality
$\I(k,\ell)$ holds in $f^*$.
\end{lemma}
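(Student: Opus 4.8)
The plan is to establish $\I(k,\ell)$ by expanding the conditional mutual information terms on the right-hand side into entropy differences, and then bounding everything via the submodularity inequalities assembled in Lemmas \ref{lemma:abdz}--\ref{lemma:upb}. Recall that the right-hand side of $\I(k,\ell)$ is
$-(a,b\|c^kd^\ell z)+(a,b\|c^{k+1}d^\ell z)+(a,b\|c^kd^{\ell+1}z)$,
which unfolds (using the definition $f(A,B\|C)=f(AC)+f(BC)-f(ABC)-f(C)$ and part (i) of Lemma \ref{lemma:abxx} to rewrite the ``$ab\cdots z$'' terms) into a combination of single-side terms such as $ac^kd^\ell z$, $bc^kd^\ell z$, $ac^{k+1}d^\ell z$, $bc^{k+1}d^\ell z$, $ac^kd^{\ell+1}z$, $bc^kd^{\ell+1}z$, $c^kd^\ell z$, $c^{k+1}d^\ell z$, $c^kd^{\ell+1}z$, and a handful of $ab$-type constants. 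The key cancellation is that the ``$ab$-heavy'' pieces from the three terms combine telescopically via Lemma \ref{lemma:abxx}(i), leaving a relatively clean expression purely in terms of one-variable-augmented entropies plus the constants $abc-ab$ and $abd-ab$.

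Next I would rewrite the left-hand side. By definition $\Z=(a,z\|b)+(b,z\|a)$, $\C=(a,c\|b)+(b,c\|a)$, $\D=(a,d\|b)+(b,d\|a)$, and $\abcd=-(a,b)+(a,b\|c)+(a,b\|d)+(c,d)$. Expanding each of these into entropy differences, the target inequality $\I(k,\ell)$ becomes a single linear inequality among entropy values of explicitly named subsets of $N^*$. The strategy is then to exhibit it as a non-negative combination of: (a) the bounds of Lemma \ref{lemma:abdz}, which control $ac^kd^{\ell+1}z$ and $bc^{k+1}d^\ell z$ from above by $adz$, $bcz$ plus multiples of $ac-a$, $ad-a$, $bc-b$, $bd-b$; (b) the lower bounds of Lemma \ref{lemma:upb} on $(cd)c^kd^\ell z-(cd)$, $bdc^kz-bd$, and $a(cd)c^kz-a(cd)$ in terms of $(abz-ab)$, $(abc-ab)$, $(abd-ab)$; and (c) plain submodularity inequalities of the form $f(AX)-f(A)\le f(c)-f(\emptyset)$ etc. to absorb leftover ``$c^k\to c^{k+1}$'' or ``$d^\ell\to d^{\ell+1}$'' increments. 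The coefficients $k$ and $\ell$ on $\C$ and $\D$ in $\I(k,\ell)$ should match exactly the $k$ applications of the ``$ac-a$''/``$bc-b$'' bounds and $\ell$ applications of the ``$ad-a$''/``$bd-b$'' bounds coming from Lemma \ref{lemma:abdz}, while the single copies of $\abcd$ and $\Z$ absorb the $(cd)$-augmentation and the $z$-related terms.

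I expect the main obstacle to be bookkeeping: verifying that the chosen non-negative combination of the auxiliary lemmas reproduces $\I(k,\ell)$ \emph{exactly}, with no slack and no unaccounted terms, especially keeping straight the asymmetry between the $c^{k+1}$ term (which gets the ``$b$-side'' treatment in Lemma \ref{lemma:abdz}, via $bc^{k+1}d^\ell z$) and the $d^{\ell+1}$ term (which gets the ``$a$-side'' treatment, via $ac^kd^{\ell+1}z$). A careful choice of which of $a,b$ plays which role in each term is what makes the $\C$ and $\D$ coefficients come out as $k$ and $\ell$ rather than something symmetric. Once the right pairing is fixed, each individual step is a one-line submodularity or an instance of a previously proved lemma, and the boundary cases $k=0$ or $\ell=0$ reduce to the base cases already noted in those lemmas. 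The condition $k+\ell<n$ is used precisely to guarantee that all the subsets named above genuinely fit inside $N^*$ (there are enough $c_id_i$ pairs), so that Lemmas \ref{lemma:abdz}--\ref{lemma:upb} are applicable.
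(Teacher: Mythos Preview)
Your plan is essentially the paper's proof: expand the three conditional mutual informations on the right-hand side, group the resulting terms by whether they carry $ab$, $a$ only, $b$ only, or neither, and bound each group using Lemma~\ref{lemma:abxx}(i), Lemma~\ref{lemma:abdz}, submodularity, and Lemma~\ref{lemma:upb}(i) respectively. Two small corrections: only part~(i) of Lemma~\ref{lemma:upb} is needed here (parts~(ii) and~(iii) are for the $\acbd$ case later), and the combination does \emph{not} reproduce $\I(k,\ell)$ exactly---there is slack $(c,z\|b)+(d,z\|a)\ge 0$ left over, which is simply discarded.
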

\begin{proof}
Recall that the inequality $\I(k,\ell)$ is
\begin{align*}
   \abcd &+ \Z+k\tsp\C +\ell D \ge {}\\
    &{-(a,b\|c^kd^\ell z)} + (a,b\|c^{k+1}d^\ell z)
       + (a,b\|c^kd^{\ell+1}z).
\end{align*}
Write the right hand side as the sum $T_1+T_2+T_3+T_4$ where the four terms are
\begin{align}
  T_1 &= c^kd^\ell z - c^{k+1}d^\ell z - c^kd^{\ell+1}z, \label{eq:l1} \\
  T_2 &= abc^kd^\ell z - abc^{k+1}d^\ell z - abc^kd^{\ell+1}z, \label{eq:l2}\\
  T_3 &= ac^{k+1}d^\ell z - ac^kd^\ell z + ac^kd^{\ell+1}z, \label{eq:l3}\\
  T_4 &= bc^{k+1}d^\ell z - bc^kd^\ell z + bc^kd^{\ell+1}z. \label{eq:l4}
\end{align}
We estimate each line separately. For (\ref{eq:l1}) we have
$$
    T_1 = -(c,d\|c^kd^\ell z) - (cd) c^k d^\ell z.
$$
Here the first term is ${}\le 0$, and the second term can be bounded using
part (i) of Lemma \ref{lemma:upb}. Therefore,
\begin{equation}\label{est:l1}
    T_1 \le -cd -(abz-ab) - k(abc-ab)-\ell(abd-ab).
\end{equation}
Using part (i) of Lemma \ref{lemma:abxx}, the exact value of (\ref{eq:l2})
can be computed as
\begin{equation}\label{est:l2}
   T_2 = -abz - (k+1)(abc-ab) - (\ell+1)(abd-ab).
\end{equation}
For (\ref{eq:l3}) use $ac^{k+1}d^\ell z -
ac^kd^\ell z = acX-aX \le ac-a$ and Lemma \ref{lemma:abdz} to get
\begin{equation}\label{est:l3}
   T_3\le adz+(k+1)(ac-a)+\ell(ad-a).
\end{equation}
Finally, to estimate (\ref{eq:l4}) use the similar inequality
$bc^kd^{\ell+1} z - bc^kd^\ell z \le
bd-b$ and the second statement of Lemma
\ref{lemma:abdz} to get
\begin{equation}\label{est:l4}
   T_4\le bcz + k(bc-b)+(\ell+1)(bd-b).
\end{equation}
The sum of the right hand sides in the estimates (\ref{est:l1})--(\ref{est:l4}) is
$$
   \abcd+\Z + k\C + \ell\D -(c,z|b)-(d,z|a).
$$
This amount is $\le$ than the left hand side of $\I(k,\ell)$,
proving the lemma.
\end{proof}

\subsection{Case II}

The following theorem claims that in the case of $\acbd<0$ inequalities
experimentally found in Section \ref{subsec:resc2} indeed hold for every
$n$.

\begin{theorem}\label{thm:mainII}
Let $f$ be a polymatroid on $abcdz$ that has an $n$-copy for the
partition $\{cd\}\{ab\}\{z\}$. Then $f$ satisfies the following
inequality for every $1\le k\le n$:
$$
  (a,b\|z)+ k\tsp\acbd+k\tsp\Z+\frac{k(k-1)}2\C \ge 0.
$$
\end{theorem}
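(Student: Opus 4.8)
The plan is to reproduce the mechanism of the proof of Theorem~\ref{thm:mainI}, but now with a one-parameter family of building blocks combined with unit weights. Fix an $n$-copy $f^*$ of $f$ on $N^*=abz\cup\{c_id_i:1\le i\le n\}$; by Claims~\ref{claim:reduction1} and~\ref{claim:reduction2} we may assume $f^*$ is symmetric in the pairs $c_id_i$ and that $f$ is identified with $f^*\restr abc_1d_1z$, so it suffices to verify the inequality for $f^*$. As in Theorem~\ref{thm:mainI}, $c^j$ denotes $j$ of the copy-elements $c_1,\dots,c_n$ --- a quantity depending only on $j$, by symmetry --- with $c=c_1$ one of the chosen ones and $d=d_1$ its partner. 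For $1\le j\le n$ I introduce the inequality $\J(j)$:
\[
   \acbd+\Z+(j-1)\tsp\C\;\ge\;(a,b\|c^jz)-(a,b\|c^{j-1}z),
\]
and add the first $k$ of them. Since $\sum_{j=1}^k 1=k$ and $\sum_{j=1}^k(j-1)=\frac{k(k-1)}2$, the left-hand sides sum to $k\tsp\acbd+k\tsp\Z+\frac{k(k-1)}2\C$, while the right-hand sides telescope to $(a,b\|c^kz)-(a,b\|z)$. Rearranging and using $(a,b\|c^kz)\ge0$ gives precisely the asserted inequality for $f^*$, hence for $f$; the hypothesis $k\le n$ is exactly what makes $c^k$ available.

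Everything thus reduces to proving $\J(j)$ for $f^*$, which is the Case~II analogue of Lemma~\ref{lemma:indhyp}. Following that proof I would split the right-hand side $(a,b\|c^jz)-(a,b\|c^{j-1}z)$ into its four natural pieces --- the differences of $f^*$-values on $Wc^jz$ versus $Wc^{j-1}z$ for $W$ equal to $ab$, $a$, $b$ and $\emptyset$ --- and estimate each. The $ab$-piece equals $abc-ab$ \emph{exactly}, because $\{c_i:i\le n\}$ and $z$ are completely conditionally independent over $ab$, so $f^*$ is affine in $j$ along the $c^j$-direction on sets containing $ab$; the same fact, with submodularity, provides sharp lower bounds on $f^*(ac^{j-1}z)$ and $f^*(bc^{j-1}z)$ carrying $(j-1)$ copies of $abc-ab$. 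On the other side the $a$- and $b$-pieces have telescoped submodularity upper bounds of the kind in Lemmas~\ref{lemma:abdz}--\ref{lemma:upb}: adjoining one more copy increases $f^*(ac^iz)$ by at most $f^*(c_{i+1}\|a)=ac-a$, so $f^*(ac^jz)\le f^*(acz)+(j-1)(ac-a)$, and symmetrically for $b$. When the four estimates are assembled, the $j$-dependent parts reduce, via the identity $(ac-a)+(bc-b)-2(abc-ab)=\C$, to a multiple of $\C$ which must be arranged to match the $(j-1)\C$ on the left of $\J(j)$; after this cancellation $\J(j)$ collapses, for every $j$, to a single $j$-independent five-variable inequality --- essentially the base case $\J(1)$, namely $\acbd+\Z+(a,b\|z)\ge(a,b\|cz)$. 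The only place the copy partner $d=d_1$ enters is the Ingleton term $\acbd=-(a,c)+(a,c\|b)+(a,c\|d)+(b,d)$ carried on the left.

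The hard part will be getting this accounting exactly right. A crude split loses too much --- for instance the naive bound $f^*(c^jz)-f^*(c^{j-1}z)\ge abc-ab$ on the $\emptyset$-piece already wastes more than the $(j-1)\C$ margin can cover --- so the estimate has to be organized along the lines of the four-line argument in Lemma~\ref{lemma:indhyp}, possibly ending with a surplus of non-negative Shannon terms as that lemma does; this bookkeeping is the Case~II counterpart of the careful choice of the binomial weights $\b(k,\ell)$ in Theorem~\ref{thm:mainI}. The remaining obstacle is the residual base case $\J(1)$: it is \emph{not} a Shannon inequality, and its proof genuinely uses the $1$-copy hypothesis $f^*(c_1d_1,z\|ab)=0$; I expect it to follow by a short direct computation in the spirit of the sample inequality~(\ref{eq:sampleineq}).
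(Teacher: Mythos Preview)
Your telescoping structure is exactly Proof~I in the paper: the paper's inequality~(\ref{eq:case2}), namely $\acbd+\Z+k\tsp\C\ge -(a,b\|c^kz)+(a,b\|c^{k+1}z)$, is your $\J(k\m+1)$, and summing it for $0\le k<k_0$ is your sum $\sum_{j\le k_0}\J(j)$.

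The gap is in the proof of $\J(j)$ itself, and specifically in your claim that ``the only place the copy partner $d=d_1$ enters is the Ingleton term carried on the left.'' The paper's proof of~(\ref{eq:case2}) does \emph{not} reduce to a base case; it notes explicitly that the inductive approach fails because $\C\ge(a,b\|c^kz)-2(a,b\|c^{k+1}z)+(a,b\|c^{k+2}z)$ is false in general. Instead, the paper adds the two non-negative Shannon terms $(b,d\|c^kz)+(a,c\|c^kdz)$ to the right-hand side --- terms that deliberately pull $d$, the partner of the newly added $c$, into the estimates --- and rearranges into seven pieces, each bounded via Lemmas~\ref{lemma:abdz} and~\ref{lemma:upb}. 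The bounds sum to $\acbd+\Z+k\tsp\C-(c,z\|b)-(d,z\|a)$, uniformly in $k$. Your four-piece split, keeping $d$ out of the right-hand side, leaves you with a residual that is not $\J(1)$ but an inequality differing from it by $(c\|z)-(c\|ab)$, whose sign you cannot control; and $\J(1)$ itself is already (\ref{eq:case2}) at $k=0$, which still needs the $d$-trick.

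For comparison, the paper also gives a second, genuinely inductive proof (Lemma~\ref{lemma:Jalphabeta}): the pairs $\{(c_iz,d_iz)\}$ form an $(n\m-1)$-copy over the base $(az,bz)$, so the inequality at level $k$ for that base, combined with four Shannon inequalities and the constraint $(cd,z\|ab)=0$, yields the level-$(k\m+1)$ inequality for $a,b,c,d,z$. That argument does start from a trivial base case $k=0$, but the induction step is a change of base, not a split of $(a,b\|c^jz)$.
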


We give two proofs. The first one is similar to the proof of Theorem
\ref{thm:mainI} and uses an inequality mimicking Lemma \ref{lemma:indhyp}.
The second proof is by induction and uses a technique that also recovers
some of the inequalities covered in Theorem \ref{thm:mainI}.

\begin{proof}[Proof 1 of Theorem \ref{thm:mainII}]
The following inequality holds in $f^*$ for every $0\le k<n$:
\begin{equation}\label{eq:case2}
    \acbd + \Z + k\tsp\C \ge -(a,b\|c^kz)+(a,b\|c^{k+1}z).
\end{equation}
Summing up this inequality from zero to $k\m-1$ gives the claim of Theorem
\ref{thm:mainII}, thus it suffices to prove (\ref{eq:case2}). The natural
approach of using induction on $k$ does not work. The reason for this is
that the inequality
$$
    \C \ge (a,b\|c^kz)-2\tsp(a,b\|c^{k+1}z)+ (a,b\|c^{k+2}z),
$$
required by the induction does not hold in general. Instead, we give a more
involved reasoning, resembling the proof of Lemma \ref{lemma:indhyp}. In
(\ref{eq:case2}) write $c^{k+1}$ as $cc^k$, and let $d$ be the element that
forms a pair with this $c$. Adding $(b,d\|c^kz)+(a,c\|c^kdz)$ to the
right-hand side of (\ref{eq:case2}) and rearranging, we obtain the upper
bound
$$\begin{array}{c}
  (cdc^kz-cc^kz) +(acc^kz-ac^kz) -(abcc^kz - abc^kz) +{}\\[3pt]
  {} +bcc^kz + adc^kz   - bdc^kz-acdc^kz.
\end{array}$$
Each of the seven terms is estimated separately as follows.
\begin{align}
 cdc^kz-cc^kz &\le cd-c, \label{ee1}\\
 acc^kz-ac^kz&\le ac-a, \label{ee2}\\
 abcc^kz - abc^kz &= abc-ab,  \label{ee3} \\
 bcc^kz           &\le bcz + k(bc-b),\label{ee4} \\
 adc^kz           &\le adz + k(ac-a), \label{ee5}\\
 -bdc^kz  &\le -bd - k(abc-ab)-(abz-ab), \label{ee6}\\
 -acdc^kz &\le -acd-k(abc-ab)-(abz-ab). \label{ee7}
\end{align}
Equations (\ref{ee1}) and (\ref{ee2}) follow from submodularity. Equation
(\ref{ee3}) expresses that $c$ and $c^kz$ are independent over $ab$.
Equations (\ref{ee4}) and (\ref{ee5}) are in Lemma \ref{lemma:abdz}, while
(\ref{ee6}) and (\ref{ee7}) are from Lemma \ref{lemma:upb}. The sum of the
right-hand sides of (\ref{ee1})--(\ref{ee7}) is
$$
   \acbd+\Z+k\tsp\C-(c,z\|b)-(d,z\|a),
$$
proving (\ref{eq:case2}).
\end{proof}

To describe the technique used in the second proof of Theorem
\ref{thm:mainII}, let $\mathcal E_n$ denote the collection of all linear
five-variable inequalities that are valid in every polymatroid on $abcdz$
that has an $n$-copy. Let $f^*$ be such an $n$-copy. Having $n$ instances of
$cd$, one of the $c_id_i$ pairs can be singled out, and one of its elements
can be renamed $z'$. Restricting $f^*$ to these elements is an $n\m-1$-copy
of $abcdz'$ since $z'$ and the remaining $n-1$ pairs are independent over
$ab$. Therefore, $abcdz'$ satisfies the inequalities in $\mathcal E_{n-1}$.
Let $E(a,b,c,d,z)\in\mathcal E_{n-1}$ be such an inequality, marking the
base elements explicitly. Then we have $E(a,b,c,d,c)\in \mathcal E_n$, and
also $E(a,b,c,d,d) \in \mathcal E_n$. This fact has been observed and used
in Section \ref{sec:ineq} to explain the coefficients in the obtained
inequalities that do not contain the variable $z$.

Similarly to the above, the pairs $\{(c_iz,d_iz):i\le n\}$ are isomorphic
and are independent over the pair $(az,bz)$. Therefore, they form an
$n\m-1$-copy of the polymatroid with base elements $\{az,bz,cz,dz,cz\}$.
This means that the inequality $E(az,bz,cz,dz,cz)$ is also in $\mathcal
E_n$, and so is the inequality with $dz$ in the last position.
For non-negative integers $\alpha$ and $\beta$ denote the following
inequality by $\J(\alpha,\beta)$:
$$
   \J(\alpha,\beta) \eqdef (a,b\|z) +\alpha\tsp\acbd + \alpha\Z + \beta\tsp\C \ge 0.
$$

\begin{lemma}\label{lemma:Jalphabeta}
Suppose $\J(\alpha,\beta)\in\mathcal E_{n-1}$. Then
$\J(\alpha\m+1,\alpha\m+\beta)\in\mathcal E_n$.
\end{lemma}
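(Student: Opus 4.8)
The plan is to exploit the two "substitution" mechanisms described just above the statement of Lemma~\ref{lemma:Jalphabeta}. We start from an inequality $\J(\alpha,\beta)\in\mathcal E_{n-1}$, namely
$$
   (a,b\|z) + \alpha\tsp\acbd + \alpha\tsp\Z + \beta\tsp\C \ge 0,
$$
and we want to produce $\J(\alpha\m+1,\alpha\m+\beta)\in\mathcal E_n$. First I would apply the observation that the pairs $\{(c_iz,d_iz):i\le n\}$ are isomorphic copies, completely conditionally independent over the pair $(az,bz)$, so that (relabelling $az\mapsto a$, $bz\mapsto b$, $c_iz\mapsto c_i$, $d_iz\mapsto d_i$, and designating one $c_nz$ as the new "$z$") the polymatroid $f^*$ restricted to $\{az,bz,cz,dz\}$ together with one extra copy $cz$ forms an $(n\m-1)$-copy. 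Therefore $E(az,bz,cz,dz,cz)\in\mathcal E_n$ for any $E(a,b,c,d,z)\in\mathcal E_{n-1}$; applying this to $E=\J(\alpha,\beta)$ yields the inequality
$$
   (az,bz\|cz) + \alpha\,[az\,bz\,cz\,dz] + \alpha\big((az,cz\|bz)+(bz,cz\|az)\big) + \beta\,\C' \ge 0,
$$
where $\C'=(az,cz\|bz)+(bz,cz\|az)$ too (so the $\alpha\Z$ and $\beta\C$ terms collapse onto the same expression, as in the $\acbd$-case of Section~\ref{subsec:resc2}). Notice that in this "shifted" inequality both the $\acbd$-type term and the $\Z$-type term and the $\C$-type term are now conditioned on an extra $z$.

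Next I would rewrite each conditioned-on-$z$ expression in terms of unconditioned ones plus $z$-corrections. Concretely, for a polymatroid $f^*$ and its base elements, expansions such as $(az,bz\|cz) = (a,b\|cz) + (a,b\|z)$-type identities, and the fact that $[az\,bz\,cz\,dz]$ expands into $\acbd$ plus further mutual-information terms involving $z$, will be used; here I would lean on the submodularity estimates already proved in Lemmas~\ref{lemma:abdz}, \ref{lemma:abxx}, \ref{lemma:upb}. Equivalently — and this is probably cleaner — I would instead proceed by induction exactly as in the second half of Section~\ref{subsec:resc2}: combine the shifted inequality $E(az,bz,cz,dz,cz)$ with Shannon inequalities and the "basic" conditional independences of the $n$-copy (the independence of $c$ and $c^{k}z$ over $ab$, etc.) so as to absorb the $z$-corrections, the net effect being that one copy of $\acbd$, one copy of $\Z$, and $\alpha$ copies of $\C$ are added, turning $\J(\alpha,\beta)$ into $\J(\alpha\m+1,\alpha\m+\beta)$. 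The coefficient bookkeeping must match: $\alpha\mapsto\alpha+1$ in the first two slots, and $\beta\mapsto\alpha+\beta$ in the $\C$-slot, which is exactly what one expects since passing from $cz$-labels back to $c$-labels contributes an extra $\alpha\tsp\C$ (the $\Z\to\C$ "merge" under the substitution $z\mapsto c$) on top of the existing $\beta\tsp\C$.

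The main obstacle is the precise expansion of the conditioned Ingleton and mutual-information terms and verifying that the $z$-corrections cancel with a nonnegative combination of Shannon inequalities and of the $n$-copy's conditional independences — in other words, showing the "$\le$" in the final step rather than just "$=$". This is the same kind of estimate as in the proof of Lemma~\ref{lemma:indhyp} and Proof~I of Theorem~\ref{thm:mainII}: one writes the right-hand side as a sum of a few grouped terms, bounds each group using submodularity (and Lemmas~\ref{lemma:abdz}--\ref{lemma:upb}), and checks that the sum of the upper bounds equals the left-hand side minus a manifestly nonnegative quantity such as $(c,z\|b)+(d,z\|a)$. I expect the computation to go through with coefficients tracking cleanly, so the only real care needed is choosing the right grouping of terms and the right designated pair $c_nd_n$ to single out.
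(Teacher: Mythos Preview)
Your overall plan is the paper's own: apply the substitution $a\mapsto az$, $b\mapsto bz$, $c\mapsto cz$, $d\mapsto dz$, $z\mapsto cz$ to $\J(\alpha,\beta)\in\mathcal E_{n-1}$, then bound each ``$z$-shifted'' term from above by the corresponding unshifted term plus a correction that vanishes in the $n$-copy. Where your proposal stays vague, the paper is completely explicit: it simply writes down four Shannon inequalities
\[
\begin{aligned}
  (a,b\|z)+\acbd + \Z &\ge (az,bz\|cz)-3(cd,z\|ab),\\
  \acbd + \Z &\ge [az,cz,bz,dz] -3(cd,z\|ab),\\
  (a,c\|b) &\ge (az,cz\|bz) - (c,z\|ab),\\
  (b,c\|a) &\ge (bz,cz\|az) - (c,z\|ab),
\end{aligned}
\]
observes that the trailing terms vanish because $(cd,z\|ab)=0$ in an $n$-copy, and sums them with multiplicities $1,\alpha,\alpha+\beta,\alpha+\beta$. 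The left side is then exactly $\J(\alpha+1,\alpha+\beta)$ and the right side is the shifted $\J(\alpha,\beta)$, which is $\ge 0$ by hypothesis. No induction on $k$, no grouping into blocks, and nothing from Lemmas~\ref{lemma:abdz}--\ref{lemma:upb} is used.

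A few slips in your write-up: the substituted Ingleton term is $[az,cz,bz,dz]$, not $[az,bz,cz,dz]$; the suggested ``identity'' $(az,bz\|cz)=(a,b\|cz)+(a,b\|z)$ is false (in fact $(az,bz\|cz)=(a,b\|cz)$ on the nose); and Lemmas~\ref{lemma:abdz}--\ref{lemma:upb} concern $c^kd^\ell$-type estimates and play no role here. The real content you were missing is just the four displayed Shannon inequalities above --- once you have them, the proof is a two-line linear combination, with the only $n$-copy input being $(cd,z\|ab)=0$.
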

\begin{proof}
The following Shannon inequalities hold in every polymatroid:
\begin{align*}
  (a,b|z)+\acbd + \Z &\ge (az,bz\|cz)-3(cd,z\|ab),\\
  \acbd + \Z &\ge [az,cz,bz,dz] -3(cd,z\|ab),\\
    (a,c\|b) &\ge (az,cz\|bz) - (c,z\|ab),\\
    (b,c\|a) &\ge (bz,cz\|az) -(c,z\|ab).
\end{align*}
Since $cd$ and $z$ are conditionally independent over $ab$ in $f^*$, the
last terms are zero. Taking the first inequality once, the second one
$\alpha$ times, and the last two $(\alpha+\beta)$ times, the sum of the
left-hand sides is $\J(\alpha\m+1,\alpha\m+\beta)$, while the right-hand
side is just $\J(\alpha,\beta)$ for the $(az,bz,cz,dz,cz)$ base. Since this
inequality is in $\mathcal E_{n-1}$ by assumption, we have
$\J(\alpha\m+1,\alpha\m+\beta)\in\mathcal E_n$, as claimed.
\end{proof}

\begin{proof}[Proof 2 of Theorem \ref{thm:mainII}]
The inequality to be proved is $\J(k,{k\choose 2})$. Use induction on $k$.
For $k=0$ it is a Shannon inequality, thus it holds in every polymatroid.
For other values of $k$, Lemma \ref{lemma:Jalphabeta} says that
$\J(k,{k\choose 2})\in\mathcal E_{n-1}$ implies $\J(k\m+1,{k+1\choose
2})\in\mathcal E_n$, concluding the induction step.
\end{proof}

Lemma \ref{lemma:Jalphabeta} remains valid if, in the definition of
$\J(\alpha,\beta)$, the Ingleton expression $\acbd$ is replaced by $\abcd$.
Consequently, the inequality
$$
   (a,b\|z)+k\tsp\abcd+k\tsp\Z+\frac{k(k-1)}2\C \ge 0
$$
is also in $\mathcal E_n$ for all $k\le n$. This inequality is covered by
Theorem \ref{thm:mainI} when the downward closed set $s$ consists of the
lattice points $(0,i)$ for $0\le i\le k$.


\section{The Minimal Set of Inequalities}\label{sec:reducedset}

Experimental results reported in Section \ref{sec:compresults} and discussed
in Section \ref{sec:ineq} provided the complete list of five-variable
non-Shannon entropy inequalities implied by the existence of an $n$-copy for
$n\le 9$. Two families of non-Shannon inequalities, generalizing the ones
found experimentally, were proven, in Theorem \ref{thm:mainI} and Theorem
\ref{thm:mainII}, respectively, to hold in every polymatroid with an
$n$-copy. We \emph{conjecture} that these families actually characterize
those five-variable polymatroids that have an $n$-copy, so no further
non-Shannon inequalities can be discovered by the version of the Maximum
Entropy Method utilized in this paper.

In the $\acbd<0$ case, the family of non-Shannon inequalities provided by
Theorem \ref{thm:mainII} matches exactly the inequalities obtained
experimentally for $n\le 9$.

In the $\abcd<0$ case, the family provided by Theorem \ref{thm:mainI} is
parametrized by the downward closed subsets $s$ of the diagonal set
$D_n\subset \N\times\N$. Unfortunately, not all of the obtained inequalities
correspond to facets of the cone $\mathcal Q^*_1$. The reason is that, while
all of them are valid entropy inequalities, some are consequences of the
others. Recall that the inequality corresponding to the triplet
$\<\alpha,\beta,\gamma\>$ is
\begin{equation}\label{eq:abc}
    (a,b\|z)+\alpha\big(\abcd + \Z\big) +\beta\tsp\C+\gamma\D \ge 0.
\end{equation}
As an example, the triplet $\<5,3,3\>$, obtained from the downward closed
subsets numbered $\#7$ and $\#11$ in Table \ref{table:dcsD3}, does not
appear in Table \ref{table:sample3}, the list of computational results. The
corresponding inequality is indeed a consequence of those obtained from the
triplets numbered $\#6$, $\#10$, and $\#13$. This is so, as $\<5,3,3\>$ is
their average:
$$
  \<5,3,3\> = \frac13\big(\<4,3,1\>+\<4,1,3\>+\<7,5,5\>\big),
$$
and the same holds for the corresponding inequalities. The main goal of this
section is to obtain a description of those downward closed subsets of $D_n$
that generate facets of $\mathcal Q^*_1$, that is, inequalities that are not
consequences of the others.

Since the inequality (\ref{eq:abc}) contains the fixed term $(a,b\|z)$, and
trivially holds true when $\abcd+\Z\ge 0$, it is a consequence of the
inequalities obtained from the triplets $\{\<\alpha_i,\beta_i,\gamma_i\>:
i\in I\}$ if and only if there is a convex linear combination
$$
    \<\alpha',\beta',\gamma'\>=\tsum_{i\in I} \lambda_i\tsp
    \<\alpha_i,\beta_i,\gamma_i\>,
    ~~~\mbox{ with } \lambda_i\ge 0, \mbox{ and } \tsum_{i\in I}\lambda_i=1,
$$
such that $\alpha\le\alpha'$, $\beta\ge\beta'$, and $\gamma \ge\gamma'$. In
this case we say that $\<\alpha,\beta,\gamma\>$ is \emph{superseded} by the
set $\{\<\alpha_i,\beta_i,\gamma_i\>:i\in I\}$. If $\mathbf v_S=\<\alpha_S,
\beta_S,\gamma_S\>$ is not superseded by other elements of this family, then
$\mathbf v_S$ is called \emph{extremal}. Actually, by the above observation,
extremal vectors are the vertices of the convex hull of the set of triplets
$\mathbf v_S$ as $S$ runs over the downward closed subset of $D_n$. By
Carath\'eodory's theorem, see \cite{ziegler}, $\mathbf v_S$ is superseded if
and only if it is (also) superseded by a set with at most three elements.

Lemma \ref{lemma:convex} below gives a necessary and sufficient condition
for the vector $\mathbf
v_S$ to be superseded by a special three-element set. For a subset $S$ of
$\N\times\N$ we write $S\m+(i,j)$ for adding the point $(i,j)$ to $S$, and
$S\m-(i,j)$ to remove $(i,j)$ from $S$. In the first case it is tacitly
assumed that $(i,j)$ is not in $S$, and in the second case that $(i,j)\in
S$.

\begin{lemma}\label{lemma:convex}
Let $i_1<i_2<i_3$, and $j_1>j_2>j_3$.
\begin{itemize}
\item[(i)] $\mathbf v_S$ is
superseded by the vectors $\{\mathbf v_{S+(i_1,j_1)},
\mathbf v_{S-(i_2,j_2)}, \mathbf v_{S+(i_3,j_3)} \}$ if and only if
$$
    \frac{~j_2-j_3~}{i_3-i_2} \ge \frac{~j_1-j_3~}{i_3-i_1}.
$$
\item[(ii)] $\mathbf v_S$ is superseded by
$\{\mathbf v_{S-(i_1,j_1)}, \mathbf v_{S+(i_2,j_2)}, \mathbf
v_{S-(i_3,j_3)} \}$ if and only if
$$
    \frac{~j_2-j_3~}{i_3-i_2} \le \frac{~j_1-j_3~}{i_3-i_1}.
$$
\end{itemize}
\end{lemma}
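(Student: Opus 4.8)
The plan is to reduce the claim to an elementary two‑dimensional statement about the vectors $\mathbf v_S$. The key observation is that, by the definition of $\mathbf v_S$ and the recurrence for $\b$, adding a single admissible lattice point $(i,j)$ to $S$ changes $\mathbf v_S$ by exactly $\b(i,j)\<1,i,j\>$, and removing one changes it by $-\b(i,j)\<1,i,j\>$. Thus the three candidate superseding vectors in part (i) are
$$
   \mathbf v_S + \b(i_1,j_1)\<1,i_1,j_1\>,\quad
   \mathbf v_S - \b(i_2,j_2)\<1,i_2,j_2\>,\quad
   \mathbf v_S + \b(i_3,j_3)\<1,i_3,j_3\>,
$$
and similarly in part (ii) with the signs flipped. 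I would first recall (from the discussion preceding the lemma) that $\mathbf v_S$ is superseded by a set $\{\mathbf w_1,\mathbf w_2,\mathbf w_3\}$ precisely when there is a convex combination $\mathbf w=\sum\lambda_t\mathbf w_t$ with $\alpha$‑coordinate $\ge \alpha_S$ and $\beta$‑, $\gamma$‑coordinates $\le \beta_S,\gamma_S$. Writing $\mathbf w_t = \mathbf v_S + \varepsilon_t\b_t\<1,i_t,j_t\>$ with $\varepsilon_t\in\{+1,-1\}$ and $\b_t=\b(i_t,j_t)$, this says the vector $\sum_t\lambda_t\varepsilon_t\b_t\<1,i_t,j_t\>$ must have non‑negative first coordinate and non‑positive second and third coordinates.

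Next I would set $\mu_t = \lambda_t\b_t \ge 0$ and note the constraint $\sum\lambda_t = 1$ becomes a single normalization that can be rescaled away for the feasibility question (the target cone conditions are homogeneous), so the problem becomes: do there exist $\mu_t\ge 0$, not all zero, with
$$
   \tsum_t \varepsilon_t\mu_t \ge 0,\qquad
   \tsum_t \varepsilon_t\mu_t\, i_t \le 0,\qquad
   \tsum_t \varepsilon_t\mu_t\, j_t \le 0,
$$
\emph{and} with $\mu_t/\b_t$ summing to a positive number (automatic once some $\mu_t>0$). In case (i) the sign pattern is $\varepsilon=(+,-,+)$; in case (ii) it is $(-,+,-)$. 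In both cases the natural choice is to take the two same‑sign contributions with weights that exactly cancel the middle one in the first coordinate, i.e. $\mu_1+\mu_3 = \mu_2$ (case (i)) or $\mu_2 = \mu_1+\mu_3$ (case (ii)), making $\tsum\varepsilon_t\mu_t = 0$. Then the $i$‑ and $j$‑coordinate inequalities reduce, after eliminating one parameter, to a single scalar inequality; choosing $\mu_1,\mu_3$ so that the $i$‑coordinate also vanishes forces $\mu_1(i_2-i_1) = \mu_3(i_3-i_2)$, and substituting into the $j$‑coordinate condition yields exactly the stated slope inequality $\frac{j_2-j_3}{i_3-i_2}\ge\frac{j_1-j_3}{i_3-i_1}$ (resp.\ $\le$). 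For the converse direction I would argue that if the slope inequality holds, these explicit weights furnish the required convex combination, and if it fails, a separating‑hyperplane / LP‑duality argument on the three half‑space conditions shows no admissible $\mu_t$ can exist — concretely, one exhibits a linear functional (a non‑negative combination of the three constraint normals) that is strictly violated by every non‑trivial choice of signs and weights.

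The routine part is the bookkeeping with $\b(i,j)$ and the sign patterns; the one place demanding care is the \textbf{main obstacle}: verifying that the slope condition is not merely \emph{sufficient} but \emph{necessary}, i.e.\ that when it fails there is genuinely no convex combination — including ones that do \emph{not} make the first coordinate vanish and that might use all three weights asymmetrically. I expect to handle this by observing that the feasible region in $(\mu_1,\mu_2,\mu_3)$‑space is a polyhedral cone cut out by the three inequalities above together with $\mu_t\ge 0$, and that its non‑triviality is equivalent (by Farkas' lemma) to the inconsistency of a small dual system; the dual system is two‑dimensional once one fixes the trivial first‑coordinate slack, and its solvability is exactly governed by the relative order of the three slopes $(j_s-j_t)/(i_t-i_s)$. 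Because $i_1<i_2<i_3$ and $j_1>j_2>j_3$ are assumed, all denominators are positive and the geometry is that of deciding whether the middle point $(i_2,j_2)$ lies above or below the chord joining $(i_1,j_1)$ and $(i_3,j_3)$ — which is precisely the dichotomy the lemma records. I would then simply read off parts (i) and (ii) as the two sides of this dichotomy, matching signs to the two sign patterns.
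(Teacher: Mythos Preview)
Your proposal is correct and follows essentially the same route as the paper: both reduce the superseding condition to the same homogeneous system in the weights $\mu_t=\lambda_t\b(i_t,j_t)$, and both recognize the geometric content as whether $(i_2,j_2)$ lies above or below the chord through $(i_1,j_1)$ and $(i_3,j_3)$. The one noteworthy difference is in the necessity direction: you propose a Farkas/LP-duality argument, whereas the paper avoids this by first dividing through by $\lambda_2\b_2$ (noting $\lambda_2>0$ is forced) to get two free parameters $\mu_1,\mu_3\ge 0$ subject to $\mu_1+\mu_3\ge 1$, $i_2\ge i_1\mu_1+i_3\mu_3$, $j_2\ge j_1\mu_1+j_3\mu_3$, observes that one may take $\mu_1+\mu_3=1$ without loss of generality (scaling down only helps the other two inequalities), and then reads off feasibility as a direct comparison of where $i_2$ and $-j_2$ fall in the intervals $[i_1,i_3]$ and $[-j_1,-j_3]$. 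This elementary interval argument gives both directions at once and sidesteps the ``main obstacle'' you flagged; your Farkas route would work but is heavier than needed.
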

\begin{proof}
We prove (i) only, (ii) is similar. Let $\b_1=\b(i_1,j_1)$,
$\b_2=\b(i_2,j_2)$ and $\b_3=\b(i_3,j_3)$. Then, according to Definition
\ref{def:triplets},
\begin{align*}
  \mathbf v_{S+(i_1,j_1)} &= \<\alpha_S+\b_1,\beta_S+i_1\b_1,\gamma_S+j_1\b_1\>, \\
  \mathbf v_{S-(i_2,j_2)} &= \<\alpha_S-\b_2,\beta_S-i_2\b_2,\gamma_S-j_2\b_2\>, \\
  \mathbf v_{S+(i_3,j_3)} &= \<\alpha_S+\b_3,\beta_S+i_3\b_3,\gamma_S+j_3\b_3\>.
\end{align*}
$\mathbf v_S=\<\alpha_S,\beta_S,\gamma_S\>$ is superseded by these vectors
if there are non-negative numbers $\lambda_1$,
$\lambda_2$, $\lambda_3$ with $\lambda_1+\lambda_2+\lambda_3=1$ such that
\begin{align*}
   \alpha_S &\le \lambda_1(\alpha_S+\b_1) + \lambda_2(\alpha_S-\b_2) + \lambda_3(\alpha_S+\b_3), \\
   \beta_S &\ge \lambda_1(\beta_S+i_1\b_1)+\lambda_2(\beta_S-i_2\b_2)+\lambda_3(\beta_S+i_3\b_3), \\
   \gamma_S &\ge \lambda_1(\gamma_S+j_1\b_1)+\lambda_2(\gamma_S-j_2\b_2)+\lambda_3(\gamma_S+j_3\b_3).
\end{align*}
Since the sum of the $\lambda_i$'s is $1$, this system is equivalent to
\begin{align*}
   \lambda_2\b_2 &\le \lambda_1\b_1+\lambda_3\b_3, \\
   i_2\lambda_2\b_2 &\ge i_1\lambda_1\b_1 + i_3\lambda_3\b_3, \\
   j_2\lambda_2\b_2 &\ge j_1\lambda_1\b_1 + j_3\lambda_3\b_3.
\end{align*}
Clearly, $\lambda_2$ must be strictly positive as $\b_1$, $\b_2$, and $\b_3$ are
positive. Introducing $\mu_1=(\lambda_1\b_1)/(\lambda_2\b_2)$ and
$\mu_3=(\lambda_3\b_3)/(\lambda_2\b_2)$, this system is equivalent to
\begin{align*}
    1 & \le \mu_1+\mu_3, \\
  i_2 & \ge i_1\mu_1 + i_3\mu_3 \\
  j_2 & \ge j_1\mu_1 + j_3\mu_3.
\end{align*}
One can assume that the first inequality holds with an equality.
Since $i_1<i_2<i_3$, the second inequality holds when $i_2$ is \emph{above} the
point which splits the interval $[i_1,i_3]$ in ratio $\mu_3$ to $\mu_1$.
Similarly, $-j_1<-j_2<-j_3$ implies that the third inequality holds when
$-j_2$ is \emph{below} the point that splits $[-j_1,-j_3]$ in the same
ratio. Thus, non-negative numbers $\mu_1$ and $\mu_3$ satisfying these three
inequalities exist if and only if the proportion of $[i_2,i_3]$ in
$[i_1,i_3]$ is not larger than the proportion of $[-j_2,-j_3]$ in the
interval $[-j_1,-j_3]$, that is,
$$
    \frac{(-j_3)-(-j_2)}{(-j_3)-(-j_1)} \ge \frac{i_3-i_2}{i_3-i_1}.
$$
This condition is equivalent to the one given in the claim.
\end{proof}

\begin{corollary}\label{corr:slope}
Let $i_1<i_2<i_3$, and $j_1>j_2>j_3$. Assume both $(i_k,j_k)$ and
$(i_k+1,j_k-1)$ are in $S$, while $(i_k,j_k+1)\notin S$ and
$(i_k+1,j_k)\notin S$ for $k=1,2,3$. When $j_3=0$ the condition with negative
values is assumed to hold. If $(j_1-j_3)/(i_3-i_1)$ is not in the open
interval
$$
    \bigg( \frac{j_2-j_3}{(i_3\m-i_2)+1} ,~
         \frac{j_2-j_3}{(i_3\m-i_2)-1} \bigg),
$$
then $\mathbf v_S$ is superseded by vectors generated by one of
the following two triplets:
$$\begin{array}{l@{~}l@{~}l}
\{S+(i_1\m+1,j_1), &S-(i_2,j_2), &S+(i_3\m+1,j_3)\}, \\[3pt]
\{S-(i_1,j_1), &S+(i_2\m+1,j_2), &S-(i_3,j_3)\}.
\end{array}$$
\end{corollary}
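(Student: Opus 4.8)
The plan is to obtain the corollary by applying Lemma~\ref{lemma:convex}, once in each of its two variants, to the three corner points shifted one step to the right in the ``add'' positions. First I would check that the corner hypotheses guarantee the shifted sets are again downward closed: since $(i_k,j_k)\in S$ and $(i_k+1,j_k-1)\in S$, adjoining the point $(i_k+1,j_k)$ keeps the set downward closed, and since $(i_k+1,j_k)\notin S$ and $(i_k,j_k+1)\notin S$, deleting $(i_k,j_k)$ keeps it downward closed (when $j_3=0$ the condition with negative values is assumed, as in the statement). Hence $S+(i_1+1,j_1)$, $S-(i_2,j_2)$, $S+(i_3+1,j_3)$ and $S-(i_1,j_1)$, $S+(i_2+1,j_2)$, $S-(i_3,j_3)$ are all legitimate members of the family, so the two displayed triplets make sense and the superseding relation is the one from Lemma~\ref{lemma:convex}.

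Next comes the bookkeeping. Lemma~\ref{lemma:convex}(i), applied with the points $(i_1+1,j_1)$, $(i_2,j_2)$, $(i_3+1,j_3)$ in place of $(i_1,j_1)$, $(i_2,j_2)$, $(i_3,j_3)$, says that $\mathbf v_S$ is superseded by the first triplet exactly when
$$
  \frac{j_2-j_3}{(i_3+1)-i_2}\;\ge\;\frac{j_1-j_3}{(i_3+1)-(i_1+1)},
$$
which is precisely $(j_1-j_3)/(i_3-i_1)\le(j_2-j_3)/\big((i_3-i_2)+1\big)$: the slope lies at or below the left endpoint of the forbidden interval. Symmetrically, Lemma~\ref{lemma:convex}(ii), applied with $(i_1,j_1)$, $(i_2+1,j_2)$, $(i_3,j_3)$, says $\mathbf v_S$ is superseded by the second triplet exactly when $(j_1-j_3)/(i_3-i_1)\ge(j_2-j_3)/\big((i_3-i_2)-1\big)$, the right endpoint (with the usual convention that this endpoint is $+\infty$ when $i_3=i_2+1$). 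The union of these two half-lines is exactly the complement of the open interval in the statement, so ``the slope is not in the open interval'' is equivalent to ``at least one of the two triplets supersedes $\mathbf v_S$'', which is the claim.

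The one delicate point — and the step I expect to be the real obstacle — is that Lemma~\ref{lemma:convex} requires strictly increasing first coordinates, whereas after the rightward shift this can fail: the triple $i_1+1,\,i_2,\,i_3+1$ is strictly increasing only when $i_2\ge i_1+2$, and $i_1,\,i_2+1,\,i_3$ only when $i_3\ge i_2+2$. I would dispose of the exceptional cases by observing that a collision collapses the corresponding endpoint so that the hypothesis automatically selects the other triplet. If $i_2=i_1+1$ then $(i_3-i_2)+1=i_3-i_1$, whence the left endpoint equals $(j_2-j_3)/(i_3-i_1)$, strictly less than the slope $(j_1-j_3)/(i_3-i_1)$ because $j_1>j_2$; so ``slope $\le$ left endpoint'' is impossible and the hypothesis forces ``slope $\ge$ right endpoint'', handled by the second triplet whenever $i_3\ge i_2+2$, while if in addition $i_3=i_2+1$ the right endpoint is $+\infty$ and the hypothesis is vacuous. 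The case $i_3=i_2+1$ with $i_2\ge i_1+2$ is symmetric, with the first triplet doing the work. With these boundary cases dispatched, the reduction to Lemma~\ref{lemma:convex} is complete.
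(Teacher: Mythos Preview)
Your proof is correct and follows essentially the same approach as the paper's: apply Lemma~\ref{lemma:convex}(i) to the first triplet when the slope is at or below the lower endpoint, and Lemma~\ref{lemma:convex}(ii) to the second triplet when it is at or above the upper endpoint. Your treatment is in fact more careful than the paper's brief proof, which does not explicitly verify that the shifted sets remain downward closed or address the boundary cases $i_2=i_1+1$ and $i_3=i_2+1$ where the strict-ordering hypotheses of Lemma~\ref{lemma:convex} may fail.
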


\begin{proof}
If the slope $(j_1-j_3)/(i_3-i_1)$ is less than, or equal to the lower limit,
then part (i) of Lemma \ref{lemma:convex} applies to the first triplet. When
the slope is at or above the upper limit, then part (ii) of that Lemma applies
to the second triplet.
\end{proof}

A downward closed set $S\subset\N\times\N$ can be specified in two ways.
Either by a non-increasing sequence $S^{\mathrm{col}} =(c_0,c_1,\dots,c_k)$
specifying the maximal values in columns $0$, \dots, $k$ (column sequence),
or by a non-increasing sequence $S^{\mathrm{row}}=(r_0,r_1,\dots,r_\ell)$
specifying the maximal values in rows $0$, \dots, $\ell$ (row sequence). It
is easy to see that
$$
    (x,y)\in S ~\Leftrightarrow~ 0\le y\le c_x ~\Leftrightarrow~ 0\le x\le r_y.
$$
For the downward closed subsets of $D_3$ the corresponding column and row
sequences are presented in Table \ref{table:colseq}.

\begin{table}[thb]
$$
\begin{tikzpicture}[scale=1.3]
\def\DDo#1{\draw[color=gray] (#1) circle(1.5pt);}
\def\DDi#1{\draw[fill=orange] (#1) circle(1.5pt); }
\def\,{\mkern1mu}
\def\pxxts#1#2#3#4#5#6#7#8#9{%
\csname DD#1\endcsname{-20pt+20*#8pt,0}%
\csname DD#2\endcsname{-20pt+20*#8pt,5pt}%
\csname DD#3\endcsname{-20pt+20*#8pt,10pt}%
\csname DD#4\endcsname{-20pt+20*#8pt+5pt,0}%
\csname DD#5\endcsname{-20pt+20*#8pt+5pt,5pt}%
\csname DD#6\endcsname{-20pt+20*#8pt+10pt,0}%
\draw (-20pt+20*#8pt-5pt,-10pt) node[right,color=black!60!green]{\small$#9$};
\draw (-20pt+20*#8pt-5pt,-19pt) node[right,color=purple!30!black]{\small$#7$};
\draw[color=black!30!blue] (-20pt+20*#8pt-5pt,21pt) node[right] {$\scriptstyle\##8$};
}
\foreach\x in {0,2,4,6,8,10,12}{
  \filldraw[color=black!5] (\x*20pt-4pt,-24pt) rectangle (\x*20pt+13pt,28pt);
}
\draw[line width=2pt,color=black!5](-4pt,-24pt)--(240pt,-24pt)
                                    (-4pt,28pt)--(240pt,28pt);
\pxxts iooooo{0}1{0}
\pxxts iooioo{1}2{0\,0}
\pxxts iooioi{2}3{0\,0\,0}
\pxxts iioooo{0\,0}4{1}
\pxxts iioioo{1\,0}5{1\,0}
\pxxts iioioi{2\,0}6{1\,0\,0}
\pxxts iioiio{1\,1}7{1\,1}
\pxxts iioiii{2\,1}8{1\,1\,0}
\pxxts iiiooo{0\,0\,0}9{2}
\pxxts iiiioo{1\,0\,0}{10}{2\,0}
\pxxts iiiioi{2\,0\,0}{11}{2\,0\,0}
\pxxts iiiiio{1\,1\,0}{12}{2\,1}
\pxxts iiiiii{2\,1\,0}{13}{2\,1\,0}
\draw (-3pt,-28pt)[color=black!60!green] node[right]{\small column sequences};
\draw (252pt,-28pt) node[left,color=purple!50!black]{\small row sequences};
\draw[color=black!40!green,->] (-2pt,-28pt) to[out=180,in=180] (-3pt,-10pt);
\draw[color=purple,->] (252pt,-28pt) to[out=0,in=0] (252pt,-18pt);

\end{tikzpicture}
$$
\caption{Column and row sequences for subsets of $D_3$.}%
\label{table:colseq}
\end{table}

\begin{corollary}\label{corr1}
If for some $S\subseteq D_n$ the vector $\mathbf v_S$ is not superseded by
other vectors generated from subsets of $D_n$, then either the sequence
$S^{\mathrm {col}}$ is strictly decreasing, or the sequence
$S^{\mathrm{row}}$ is strictly decreasing.
\end{corollary}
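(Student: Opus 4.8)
The plan is to prove the contrapositive: assuming that \emph{neither} $S^{\mathrm{col}}$ \emph{nor} $S^{\mathrm{row}}$ is strictly decreasing, I exhibit a convex combination of vectors $\mathbf v_{S'}$, with $S'$ a downward closed subset of $D_n$ different from $S$, that supersedes $\mathbf v_S$; by Lemma \ref{lemma:convex} and Corollary \ref{corr:slope} it is enough to produce one suitable triple of lattice points near the boundary of $S$. First restate the hypothesis. If $S^{\mathrm{col}}$ is not strictly decreasing, there are adjacent columns $p, p+1$ with $c_p = c_{p+1}$; writing $h$ for this common value, downward closedness gives $(p+1, h) \in S$ but $(p, h+1), (p+1, h+1) \notin S$, which in row terms reads $r_h \ge p+1$ while $r_{h+1} \le p-1$, i.e.\ the row-width sequence drops by at least $2$ at row $h$. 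Dually, $S^{\mathrm{row}}$ not strictly decreasing forces the column-height sequence to drop by at least $2$ somewhere. So the boundary staircase of $S$ carries a horizontal edge of length $\ge 2$ and a vertical edge of length $\ge 2$.

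Each long edge supplies a matched pair of lattice points in $D_n$. A height drop $c_w \ge c_{w+1}+2$ between columns $w, w+1$ gives the removable top corner $(w, c_w)$ of column $w$ together with the addable point $(w+1, c_{w+1}+1)$ tucked into the notch; a width drop $r_h \ge r_{h+1}+2$ gives the removable corner $(r_h, h)$ and the addable point $(r_{h+1}+1, h+1)$. Since each drop is at least $2$, all four points satisfy $i+j < n$, hence lie in $D_n$. Using that removable (convex) and addable (concave) corners alternate along the boundary of $S$, I would then pick from these points --- drawing on further corners of $S$ where a branch needs them --- an addable--removable--addable triple with strictly increasing first coordinates, or, in the complementary situation, a removable--addable--removable triple; verify the relevant slope inequality of Lemma \ref{lemma:convex}(i) or (ii), which for such a ``spread out'' triple holds, often with equality, because its three points sit in convex position; and conclude that $\mathbf v_S$ is superseded. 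For the smallest shapes --- for instance the rectangles $\{0, \dots, a\} \times \{0, \dots, b\}$ with $a, b \ge 1$ --- the triple $\{(0, b+1), (a, b), (a+1, 0)\}$ already works; here the slope condition is just $ab \ge 1$.

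The step I expect to be the real work is organisational: making the case split on the relative positions of the horizontal and the vertical edge of the staircase exhaustive, and, in each branch, checking both that the chosen triple is correctly interleaved and that the slope test comes out the right way. The only delicate point inside a branch is membership in $D_n$ --- a column one wishes to ``raise'' might already reach the diagonal $i+j = n-1$ --- but, as observed above, a drop of size $\ge 2$ in the relevant sequence always leaves exactly the room required, so this never obstructs the argument; each branch then reduces, via the recursion $\b(i,j) = \b(i-1,j) + \b(i,j-1)$ and the identities of Lemma \ref{lemma:bsum}, to a short computation with the binomial weights.
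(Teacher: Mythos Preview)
Your outline matches the paper's approach in spirit---contrapositive, locate a long horizontal and a long vertical edge on the boundary, manufacture a triple of add/remove points, invoke Lemma~\ref{lemma:convex}---but it is missing the one idea that makes the slope check go through. The paper does not take an arbitrary horizontal/vertical pair: it takes a pair whose \emph{distance along the boundary is minimal}. Minimality forces the stretch of boundary between the two long edges to be a pure diagonal staircase (every step of size~$1$), and this relation $c-c_2 = r-r_2$ is exactly what makes the inequality of Lemma~\ref{lemma:convex} collapse to something like $A H \ge B$ with $A\ge B$ and $H\ge 1$. Two short cases (horizontal edge to the left of the vertical one, or to the right) then cover everything.

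Without the minimality choice your slope claim can simply fail. Take $S$ with column heights $(5,5,4,4,3,1,0)$: the horizontal edge at row~$5$ and the vertical edge at column~$4$ are both long but not closest (the horizontal at row~$4$ sits between them), and your natural PNP triple $(0,6),(1,5),(5,2)$ gives slopes $3/4$ versus $4/5$, the wrong way round. Or take column heights $(10,10,1)$ and pair the horizontal at row~$10$ with the vertical at column~$2$: the PNP triple $(0,11),(2,1),(3,0)$ yields $1$ versus $11/3$. In each case a \emph{different} triple does work---precisely the one coming from the closest pair---so your remark that you would ``draw on further corners where a branch needs them'' is pointing in the right direction, but as written the proposal gives no rule for selecting the triple and no argument for why the slope inequality holds. ``Convex position'' and ``often with equality'' are not substitutes for the diagonal relation.

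Two smaller remarks. The appeal to Corollary~\ref{corr:slope} is misplaced: that corollary assumes all three $(i_k,j_k)$ are step edges of $S$ in a very specific sense, which is not the configuration you produce. And the closing reference to Lemma~\ref{lemma:bsum} and ``a short computation with the binomial weights'' is unnecessary---Lemma~\ref{lemma:convex} has already absorbed all the binomial arithmetic, so the only thing left to verify is the bare slope inequality in the $(i,j)$ coordinates.
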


\begin{proof}
If $S^{\mathrm{col}}$ is not strictly decreasing, then the upper bound of
$S$ contains a horizontal segment of length at least $2$. Similarly, if
$S^{\mathrm{row}}$ is not strictly decreasing, then the right bound of $S$
contains a vertical segment of length at least $2$, see Figure
\ref{fig:corr1}. Take such a horizontal and a vertical segments whose
distance is minimal. Let the horizontal segment be in row $r$ between
columns $c_1$ and $c_2$, and the vertical segment be in column $c$ between
rows $r_1$ and $r_2$. The horizontal and vertical segments are connected by
(a possibly empty) diagonal staircase. Depending on which segment comes
first, there are two possible arrangements as depicted on Figure
\ref{fig:corr1}.
\begin{figure}[h!tb]
\begin{tikzpicture}[scale=1.1]
\def\dd(#1,#2){\filldraw[black] (\x+0.3*#1,\y+0.3*#2) circle (0.5pt);}
\def\cc(#1,#2){\draw[red] (\x+0.3*#1,\y+0.3*#2) circle (0.13cm);}
\def\ss#1{#1}
\draw[<->] (0.42,2.25) -- (0.18+0.9,2.25);
\draw[<->] (2.55,0.15+0.64) -- (2.55,0.15+0.24);
\fill[gray!35] (-0.2,-0.2)--(-0.2,2.1)--(0.3,2.1)--(0.3,1.8)--(1.2,1.8)
      --(1.2,1.5)--(1.5,1.5)--(1.5,1.2)--(1.8,1.2)--(1.8,0.9)--(2.1,0.9)
      --(2.1,0.3)
      --(2.4,0.3)--(2.4,-0.2)--cycle;
\draw (-0.4,1.65) node {$\ss r$};
\draw (-0.4,0.45) node {$\ss r_1$};
\draw (-0.4,0.75) node {$\ss r_2$};
\draw (0.45,-0.35) node{$\ss c_1$};
\draw (1.05,-0.35) node{$\ss c_2$};
\draw (1.95,-0.35) node{$\ss c$};
\draw[step=0.3cm,help lines] (-0.1,-0.1) grid (2.5,2.2);
\def\x{0.15}\def\y{0.15}
\foreach\j/\k in {0/6,1/5,2/5,3/5,4/4,5/3}{
\foreach\i in {0,...,\k}{\dd(\j,\i)}}
\foreach\i in {0,1,2}{\dd(6,\i)} \dd(7,0)
           \cc(1,6)\cc(3,5)\cc(7,1)
\draw[<->] (4.03+0.9,2.25)--(4.07+1.8,2.25);
\draw[<->] (6.65,0.15+1.54)--(6.65,0.15+0.84);
\fill[gray!35] (3.7,-0.2)--(3.7,2.1)--(4.2,2.1)--(4.2,1.8)--(4.5,1.8)--(4.5,0.9)
      --(4.8,0.9)--(4.8,0.6)--(6.0,0.6)--(6.0,0.3)--(6.3,0.3)--(6.3,-0.2)
      --cycle;
\draw[step=0.3cm,help lines] (3.8,-0.1) grid (6.5,2.2);
\def\x{4.05}
\foreach\i in{0,...,6}{\dd(0,\i)}
\foreach\i in{0,...,5}{\dd(1,\i)}
\foreach\i in{0,1,2}{\dd(2,\i)}
\foreach\i in{0,1}{\dd(3,\i)\dd(4,\i)\dd(5,\i)\dd(6,\i)} \dd(7,0)
     \cc(1,5)\cc(2,3)\cc(6,1)
\draw (3.5,0.45) node{$\ss r$};
\draw (3.5,1.05) node{$\ss r_1$};
\draw (3.5,1.65) node{$\ss r_2$};
\draw (4.35,-0.35) node{$\ss c$};
\draw (4.95,-0.35) node{$\ss c_1$};
\draw (5.85,-0.35) node{$\ss c_2$};
\end{tikzpicture}
\caption{Nearest horizontal and vertical boundary segments of the gray
region are marked by the arrows. Red circles indicate points to be added and
subtracted.
}\label{fig:corr1}
\end{figure}
In the first case $c_1<c_2<c$, and $r_1<r_2<r$; in the second case
$c<c_1<c_2$ and $r<r_1<r_2$. Apply Lemma \ref{lemma:convex} to the marked
points and observe that the modified downward closed sets are always subsets
of $D_n$. In the first case
$$
   \frac{c_2-c_1}{(r+1)-r} \ge 1 \ge \frac{c+1-c_2}{r_1-r},
$$
and in the second case
$$
   \frac{(c+1)-c)}{r_1-r_2} \le 1 \le \frac{c_2-(c+1)}{r-r_1}.
$$
Therefore, by Lemma \ref{lemma:convex}, $\mathbf v_S$ is superseded by the
vectors generated by the indicated sets, proving the claim.
\end{proof}

By Corollary \ref{corr1}, the downward closed set corresponding to an
extremal vertex is either a staircase with step heights $1$ (when
$S^{\mathrm{row}}$ is strictly decreasing), which we call \emph{horizontal,}
or the mirror image of such a staircase. The only configuration that belongs
to both cases is the diagonal $D_n$. It will be more convenient to use the
column-sequence $(c_0,c_1,\dots,c_k)$ to represent horizontal staircases.
Here $k\ge 0$ is the \emph{length} of the staircase, also denoted by
$\len(S)$. The last column size (height) is necessarily $c_k=0$, and $c_i$
equals either $c_{i+1}$ or $c_{i+1}\m+1$ for every $0\le i<k$. In the rest
of this section, all staircases, if not mentioned otherwise, are horizontal
ones.

\begin{definition}\label{def:PNP}
The staircase $S$ is Positive-Negative-Positive (PNP)-reducible in $D_n$ if
there are $i_1<i_2<i_3$ and $j_1>j_2>j_3$ such that $S_1=S+(i_1,j_1)$,
$S_2=S-(i_2,j_2)$, and $S_3=S+(i_3,j_3)$ are staircases in $D_n$ and
$\mathbf v_S$ is superseded by $\{\mathbf v_{S_1},\mathbf v_{S_2},\mathbf
v_{S_3}\}$. $S$ is \emph{PNP-irreducible} if it is not PNP-reducible.

Negative-Positive-Negative (NPN)-reducibility and NPN-irreducibility is
defined analogously, using staircases $S-(i_1,j_1)$, $S+(i_2,j_2)$, and
$S-(i_3,j_3)$, assuming that they are also subsets of $D_n$. $S$ is
\emph{irreducible in $D_n$} if it is both PNP- and NPN-irreducible. Finally,
let $\mathcal S_n$ be the collection of the irreducible staircases that are
subsets of $D_n$.
\end{definition}

By the remark at the beginning of this section, by Lemma \ref{lemma:convex},
and by Corollary \ref{corr1}, extremal vertices are generated by elements of
$\mathcal S_n$ and by their mirror images. We describe an incremental
algorithm that generates the elements of the collection $\mathcal S_n$.

A horizontal staircase $S$ of length $n$ can be recovered from a unique
horizontal staircase $S'$ of length $n-1$ as follows. If $S'$ has the column
sequence $(c'_0,c'_1,\dots,c'_{n-1})$, then $S$ is defined by one of the
column sequences
$$\begin{array}{l@{~}l@{~}l@{~}l@{~}l}
  (c'_0, & c'_1, & \dots, & c'_{n-1}, & 0) \mbox{ or}\\[2pt]
  (c'_0\m+1, & c'_1\m+1, & \dots, & c'_{n-1}\m+1, &0),
\end{array}$$
depending on whether the last two elements of the column sequence of $S$ are
equal.

\begin{claim}\label{claim:cases}{\upshape(i)} \hangindent=2.5\parindent\hangafter=1
Suppose $S$ has length $n$. $S$ is irreducible in $D_{n+1}$ if and only if
it is irreducible in $D_m$ for any $m\ge n+1$.\par

\titem{(ii)} If $\len(S)=n$ and $S$ is irreducible in $D_n$, then $S'$ is
irreducible in $D_{n-1}$.

\titem{(iii)} If $S\in\mathcal S_n$ but $S\notin\mathcal S_{n+1}$, then
$\len(S)=n$ and $S$ is PNP-reducible in $D_{n+1}$ with $i_3=n+1$ and
$j_3=0$.

\titem{(iv)} If $S'\in\mathcal S_{n-1}$ and $S\notin\mathcal S_n$, then
either $S$ is NPN-reducible with $i_3=n$ and $j_3=0$, or it is
PNP-reducible with $i_3=n-1$ and $j_3=1$.
\end{claim}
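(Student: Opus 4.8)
The four parts of the claim are all "monotonicity under the choice of the bounding diagonal" statements about the supersession relation, so the plan is to reduce everything to two basic observations about Corollary \ref{corr:slope} and Lemma \ref{lemma:convex}. First I would record the key geometric fact behind (i): whether a staircase $S$ of length $n$ is PNP- or NPN-reducible depends only on which modified sets $S\pm(i,j)$ are themselves staircases and on the slope inequalities of Lemma \ref{lemma:convex}; the only role of the ambient diagonal $D_m$ is to guarantee that the three modified sets $S_1,S_2,S_3$ actually lie inside $D_m$. Since $\len(S)=n$, the points $(i_k,j_k)$ that can be added or removed while keeping a staircase of length $\le n+1$ all have $i_k\le n$ (for a removal) or $i_k\le n+1$ with $j_k=0$ (for an extending addition), and every such point lies in $D_{n+1}$ hence in $D_m$ for all $m\ge n+1$. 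So the set of admissible witnesses is the same in $D_{n+1}$ and in every larger $D_m$; this gives (i) directly, with no calculation.

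For (ii) I would argue contrapositively: if $S'$ (length $n-1$) is reducible in $D_{n-1}$, exhibit a reduction of $S$ in $D_n$. The reduction of $S'$ uses some witness points $(i_k,j_k)$ with $i_k\le n-1$. Recall $S$ is obtained from $S'$ either by appending a zero column, or by adding one to every column and appending a zero; in the first case $S$ restricted to columns $0,\dots,n-1$ is exactly $S'$, and the same witness points work verbatim (they stay inside $D_n$ since $i_k\le n-1$, and the slope conditions are unchanged), while in the second case the witness points $(i_k,j_k)$ for $S'$ become $(i_k,j_k+1)$ for $S$ — adding/removing the top cell of a column shifted up by one — and again both the staircase property and the slope inequalities of Lemma \ref{lemma:convex} are preserved because they involve only the differences $i_3-i_2,\ i_3-i_1,\ j_2-j_3,\ j_1-j_3$, which are invariant under a uniform shift of all $j_k$. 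Hence a reduction of $S'$ lifts to a reduction of $S$, proving (ii).

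Parts (iii) and (iv) are the place where the real content sits, and I expect the bookkeeping there to be the main obstacle. For (iii): by (i), $S\in\mathcal S_n\setminus\mathcal S_{n+1}$ forces $\len(S)=n$, since if $\len(S)<n$ then "$S$ irreducible in $D_n$" already equals "$S$ irreducible in $D_m$ for all $m\ge \len(S)+1$" by (i). So $S$ has a reduction in $D_{n+1}$ that is not available in $D_n$; the only witnesses available in $D_{n+1}$ but not in $D_n$ are those touching the new diagonal, i.e. an \emph{extending} addition with $i_3=n+1,\ j_3=0$ — and an extending addition can only be the largest of the three points, so the reduction must be PNP with $i_3=n+1,\ j_3=0$ (an NPN reduction removes its largest point, which cannot be the newly-available extending cell). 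For (iv): $S'\in\mathcal S_{n-1}$ but $S\notin\mathcal S_n$; since $S'$ irreducible in $D_{n-1}$ gives $S'$ irreducible in all larger diagonals by (i), a reduction of $S$ cannot come purely from the columns $0,\dots,n-1$ "below the top" — more precisely, any reduction of $S$ whose witness points all avoid the distinguished last column of $S$ and the top cells created by the "$+1$ to every column" step would project to a reduction of $S'$ in $D_{n-1}$, contradicting $S'\in\mathcal S_{n-1}$ (this is the converse direction of the lifting argument in (ii)). Therefore the reduction must involve the genuinely new cell, which is either the extending zero-cell in column $n$ (forcing an NPN reduction with $i_3=n,\ j_3=0$, as that cell, being the last to be created, is the largest witness point and NPN removes its largest) or, in the "$+1$ shift" case, the cell in row $1$ at the far right (forcing a PNP reduction with $i_3=n-1,\ j_3=1$). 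The delicate point, and the one I would write out most carefully, is verifying that in each case the \emph{other two} witness points can be chosen so that all of $S_1,S_2,S_3$ remain staircases inside $D_n$ — this is where one uses Corollary \ref{corr:slope}, applying it with the extreme point pinned to the new cell and letting the slope condition pick whichever of the PNP/NPN triple is valid.
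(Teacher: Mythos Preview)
Your overall plan matches the paper's proof: (i) is immediate from the definition, (ii) is by contrapositive lifting of a reduction of $S'$ to $S$ (using that the slope conditions in Lemma~\ref{lemma:convex} depend only on differences and hence survive the shift), and (iii)--(iv) isolate the ``new'' witness point that prevents a reduction of $S$ from being pushed back to $S'$.

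Two corrections are needed. First, in part (iv) you have the case assignment reversed. In the ``$+1$ shift'' extension, column $n-1$ of $S$ has height $c'_{n-1}+1=1$, so the cell $(n-1,1)$ is \emph{already in} $S$ and cannot serve as the added point of a PNP reduction. It is in the ``append a zero column'' extension that column $n-1$ has height $0$, making $(n-1,1)$ available for addition; this is where the PNP reduction with $i_3=n-1$, $j_3=1$ arises. The paper's proof says exactly this: the PNP case with $(n-1,1)$ occurs ``when extending the staircase by a column of height zero.'' Second, your closing concern---verifying that the other two witness points can be chosen to keep $S_1,S_2,S_3$ staircases in $D_n$---is misplaced. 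The logical structure of (iv) is: $S$ is \emph{given} to be reducible, so some witnessing triple already exists; the task is only to show that any such triple, if it cannot be shifted back to a reduction of $S'$, must have $(i_3,j_3)$ equal to one of the two stated points. There is nothing to construct, and Corollary~\ref{corr:slope} is not needed here.
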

\begin{proof}
(i) is immediate from the definition as the staircases $S\pm(i,j)$ must be
subsets of $D_{n+1}$. 

\smallskip
\noindent(ii)
Assume $S'$ is reducible in $D_{n-1}$ shown by
the staircases $S'_1$, $S'_2$ and $S'_3$. Since they are in $D_{n-1}$, they
can be lifted back to $S_1$, $S_2$, $S_3$ in $D_n$. According to Lemma
\ref{lemma:convex} these staircases witness the reducibility of $S$.

\smallskip
\noindent (iii)
If $S$ is reducible in $D_{n+1}$ but not in $D_n$, then $S+(i_3,j_3)$ is not
in $D_n$, leading to the stated condition.

\smallskip
\noindent (iv)
If $S$ is reducible in $D_n$ while $S'$ is not reducible in $D_{n-1}$, then
the reduction must use $(i_3,j_3)$, which is in $D_n$ but not in $D_{n-1}$.
If it is an NPN-reduction then it must use the newly added point $(n,0)$; in
other cases the reduction can be shifted back to $S'$. In the case of a
PNP-reduction this additional point is $(n-1,1)$ (when extending the
staircase by a column of height zero), or can be shifted back to $S'$ again.
\end{proof}

\begin{pseudocode}{Generating irreducible staircases\label{code:1}}
\State /\!/ compute the set of irreducible staircases incrementally\label{c1-l1}
\State set $\mathcal S_0=\{(0)\}$, where $(0)$ is the staircase with a
single point at the origin.
\If{$\mathcal S_n$ has been created for $n\ge 0$, }
  \ForEach{ $S\in \mathcal S_n$ }
     \If{$\len(S)<n$}
       \State copy $S$ to $\mathcal S_{n+1}$
     \Else  ~~~~/\!/ $\len(S)=n$
       \If {$S$ is \emph{not} PNP-reducible with $(n+1,0)$,}
          \State copy $S$ to $\mathcal S_{n+1}$.
       \EndIf
       \State write $S$ as the column sequence $(c_0,\dots,c_n)$.
       \State add the staircase $(c_0\m+1,c_1\m+1,\dots,c_n\m+1,0)$ to $\mathcal S_{n+1}$.
       \State form the new staircase $S^*=(c_0,\dots,c_n,0)$.
       \If{$S^*$ is \emph{not} NPN-reducible with $(n+1,0)$, \emph{and} \\
           \hbox{\hspace{7.5pt}} $S^*$ is \emph{not} PNP-reducible with $(n,1)$,}
          \State add $S^*$ to $\mathcal S_{n+1}$.
       \EndIf
     \EndIf
  \EndFor
\EndIf
\end{pseudocode}

Based on Claim \ref{claim:cases}, the incremental algorithm, sketched as
Algorithm \ref{code:1}, generates all horizontal irreducible staircases. The
PNP- and NPN-irreducibility can be checked based on Lemma
\ref{lemma:convex}. The last point $(i_3,j_3)$ is fixed, and the na\"ive
implementation requires quadratic running time in $\len(S)$. With some
simple bookkeeping, it can be reduced to a backward scanning of the column
sequence, resulting in linear running time.

\begin{table}[thb]
\makeatletter
\def\ybb#1{%
\setbox\@tempboxa\hbox{#1}%
\begin{tikzpicture}\useasboundingbox(0,0)rectangle(0,0);
\fill[color=yellow!30] (-3pt,-3pt) rectangle (3pt+\wd\@tempboxa,10pt);
\end{tikzpicture}\unhbox\@tempboxa}
\makeatother\setlength\tabcolsep{10pt}
\begin{tabular}{llllll}
\ybb{0} & 2210 & 33210 & 332100 & 1111000 & 4332110\\
00 & \ybb{3210} & \ybb{43210} & 332110 & 2111000 & 4432100\\
\ybb{10} & 00000 & 000000 & 432100 & 2211000 & 4432110\\
000 & 10000 & 100000 & 432110 & 2211100 & 4432210\\
100 & 11000 & 110000 & 432210 & 2221100 & 5432100\\
110 & 11100 & 111000 & 433210 & 3221100 & 5432110\\
\ybb{210} & 21100 & 211000 & 443210 & 3321100 & 5432210\\
0000 & 22100 & 211100 &\ybb{543210} & 3322100 & 5433210\\
1000 & 22110 & 221100 & 0000000 & 3322110 & 5443210\\
1100 & 32100 & 321100 & 1000000 & 4322100 & 5543210\\
2100 & 32110 & 322100 & 1100000 & 4322110 & \ybb{6543210}\\
2110 & 32210 & 322110 & 1110000 & 4332100 &
\end{tabular}

\medskip

\caption{Column sequences of extremal horizontal
staircases.}\label{table:extremal}
\end{table}

Using the algorithm, we have computed the complete set of irreducible
staircases up to $n=60$. Table \ref{table:extremal} lists, up to length $6$,
the column sequences of those irreducible horizontal staircases that
remained irreducible in each subsequent generation. The irreducible downward
closed sets are these and their mirror images. Column sequences marked by a
yellow background are the diagonals $D_n$. An example is the column sequence
$2221100$ that defines the following downward closed set:
$$
   \begin{tikzpicture}[scale=1.5]
\fill[gray!35] (0,0)--(0,0.9)--(0.9,0.9)--(0.9,0.6)--(1.5,0.6)--(1.5,0.3)
   -- (2.1,0.3)--(2.1,0)--cycle;
\draw[step=0.3cm,help lines] (0,0) grid (2.2,1.0);
\foreach\x in{0,1,2}{
  \draw (-0.14,0.15+0.3*\x) node {$\scriptstyle\x$};
}
\foreach\x in{0,1,2,3,4,5,6}{
  \draw (0.15+0.3*\x,-0.15) node {$\scriptstyle\x$};
}
\foreach \x in{0,1,2,3,4,5,6}{
  \draw (0.15+0.3*\x,0.15) node {$\scriptstyle1$};
}
\foreach \x in{1,2,3,4,5}{
  \draw (-0.15+0.3*\x,0.45) node {$\scriptstyle\x$};
}
\draw (0.15,0.75) node{$\scriptstyle1$};
\draw (0.45,0.75) node{$\scriptstyle3$};
\draw (0.75,0.75) node{$\scriptstyle6$};
\draw (1.05,0.75)[color=gray!45] node{$\scriptstyle10$};
\draw (1.35,0.75)[color=gray!45] node{$\scriptstyle15$};
\draw (1.65,0.75)[color=gray!45] node{$\scriptstyle21$};
\draw (1.65,0.45)[color=gray!45] node{$\scriptstyle6$};
\draw (1.95,0.75)[color=gray!45] node{$\scriptstyle28$};
\draw (1.95,0.45)[color=gray!45] node{$\scriptstyle7$};
\end{tikzpicture}
$$
while the corresponding triplet is $\<32,76,35\>$. For each length, the number
of sequences in Table \ref{table:extremal} is
$$
   1,~2,~4,~7,~12,~18,~27,~38,~52,~68,~\dots
$$
The number of new extremal triplets is $2x-1$, that is,
$$
   1,~3,~7,~13,~23,~35,~53,~75,~103,~135,~\dots
$$
This latter sequence matches, up to the $60$th term, the sequence A103116 in
the Encyclopedia of Integer Sequences \cite{oeis}, which is defined as
$$
  \mathrm{remains}_n = \sum_{i\le n} (n-i+1) \phi(i),
$$
where $\phi$ is Euler's totient function. The formula suggests that the
connection is based on the number of different slopes determined by the
lattice points in a rectangle. Proving the equivalence of these two
sequences is an intriguing open problem.

\medskip

For better visualization, triplets $\<\alpha_S,\beta_S,\gamma_S\>$
corresponding to these irreducible staircases are plotted as the
three-dimensional points $\<\beta/\alpha, \gamma/\alpha,\alpha\>$ using
logarithmic scale for the third $\alpha$ coordinate. The plot in Figure
\ref{fig:ext1} contains all 126,981 extremal triplets in the range $\beta,
\gamma\le 20\alpha$. Some of the plotted triplets appear as late as
generation $n=80$; later generations do not contribute to this part of the
complete set. For comparison, some triplets in the $80$th generation have
values larger than $2^{85}$.

\begin{figure}[h!tb]
    \centering
    \begin{tikzpicture}[scale=1.2]
\draw(0,0) node[anchor=south west] {\includegraphics[width=9.6cm]{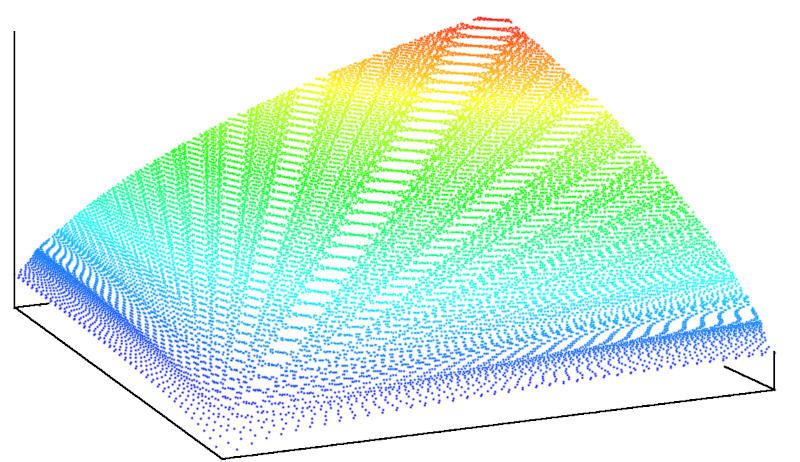}};
\draw[->,thick] (0.256,1.665) -- (0.252,4.54);
\draw (0.5,4.4) node{$\scriptstyle\alpha$};
\foreach\x in{3,6,9,12}{
  \draw(0.18,1.665+\x*0.20329)--(0.3,1.665+\x*0.20329);
  \draw(-0.1,1.665+\x*0.20329) node{$\scriptstyle 10^{\x}$};
}
\draw[<-,thick] (0.27-0.125*2.086,1.645+0.125*1.505)--(2.334+0.05*2.086,0.135-0.05*1.505);
\draw[->,thick] (2.34-0.5*0.27545,0.14-0.5*0.0344) -- (7.865+1.5*0.27545,0.834+1.5*0.0344);
\draw (7.92+1.5*0.27545,0.848+1.5*0.0344) node[above] {$\scriptstyle\beta/\alpha$};
\draw (0.04,1.84) node[left] {$\scriptstyle\gamma/\alpha$};
\foreach \x in{0,5,10,15,20}{
  \draw (2.45+\x*0.27545,-0.03+\x*0.0344) node {$\scriptstyle\x$};
}
\foreach \x in{0,5,10,15,20}{
  \draw (2.12-\x*0.1043,0.1+\x*0.07526) node {$\scriptstyle\x$};
}
    \end{tikzpicture}%
    \caption{Extremal configurations. Colors indicate the $\alpha$ value.}\label{fig:ext1}%
\end{figure}

To explain the shape of the surface of extremal triplets plotted in Figure
\ref{fig:ext1}, we provide some heuristic reasoning. A consequence of
Corollary \ref{corr:slope} is that if the extremal triplet $\mathbf v_S$ is
computed from the staircase $S$, then the slopes determined by the step
edges $(i,j)\in S$ (namely, points of $S$ where neither $(i+1,j)$ nor
$(i,j+1)$ are in $S$) are almost equal. Consequently, on a large scale,
extremal $\mathbf v_S$ vectors are generated by the set of lattice points in
right-angled triangles defined by the inequality
$$
   S(a,b) = \{ (x,y)\in \N\times\N : \frac x a + \frac y b \le 1 \}
$$
for some positive values of $a$ and $b$. Since, by Lemma \ref{lemma:bsum},
\begin{align*}
  \sum_{i\le x,\;j\le y} \b(i,j) & \approx \b(x\m+1,y\m+1), \mbox{ and}\\
  \sum_{i\le x,\;j\le y} i\tsp\b(i,j) & \approx x\tsp\b(x\m+1,y\m+1),
\end{align*}
the vector $\mathbf v_{S(a,b)}$ is well approximated by
$\b(x\m+1,y\m+1)\<1,x, y\>$, where $(x,y)\in S(a,b)$ is the point where
$\b(x\m+1,y\m+1)$ takes its maximal value. As the function $\b(x,y)$
strictly increases in both coordinates, this maximum is taken on the
boundary diagonal of the right-angled triangle $S(a,b)$ that has endpoints
$(0,b)$ and $(a,0)$. Using the Stirling formula $n! \approx \sqrt{2\pi
n}\tsp(n/e)^n$, we have
$$
   \b(x,y) = \frac{\tsp(x+y)!\tsp}{x!\, y!}
   \approx \frac{1}{\sqrt{2\pi}}\frac{(x+y)^{x+y+1/2}}
       {x^{x+1/2}\tsp y^{y+1/2}}.
$$
Introducing $\phi(x)=(x+\frac12)\log x$, we see that the logarithm of
$\b(x,y)$ is well approximated by the function
$$
  \theta(x,y) = \phi(x+y)-\phi(x)-\phi(y) -\log \sqrt{2\pi}.
$$
Using this approximation, the point $(u,v)$ is extremal in the triangle
$S(a,b)$ if $(u,v)$ is on the boundary diagonal and $\theta(u\m+1,v\m+1)$
has zero derivative along this diagonal. For fixed $u$ and $v$ such a
positive $a$ and $b$ exist just in case the partial derivatives $\theta_x$
and $\theta_y$ are positive at $(u\m+1,v\m+1)$. By inspection, this
condition is satisfied for every $(u,v)$. Consequently, if
$\<\alpha,\beta,\gamma\>$ is an extremal triplet, then choosing
$u=\beta/\alpha$, $v=\gamma/\alpha$, we expect
$$
   \log\alpha\approx\theta(u+1,v+1),
$$
and, conversely, for each $u$, $v$, with the choice $\log\alpha=
\theta(u\m+1,v\m+1)$, $\beta=u\tsp\alpha$, and $\gamma = v\tsp\alpha$,
we expect the triplet $\<\alpha,\beta,\gamma\>$ to be extremal.
For comparison,
\begin{figure}
    \centering
\begin{tikzpicture}[scale=1.2]
\draw(0.04,-0.06)  node[anchor=south west] 
{\includegraphics[width=9.4cm]{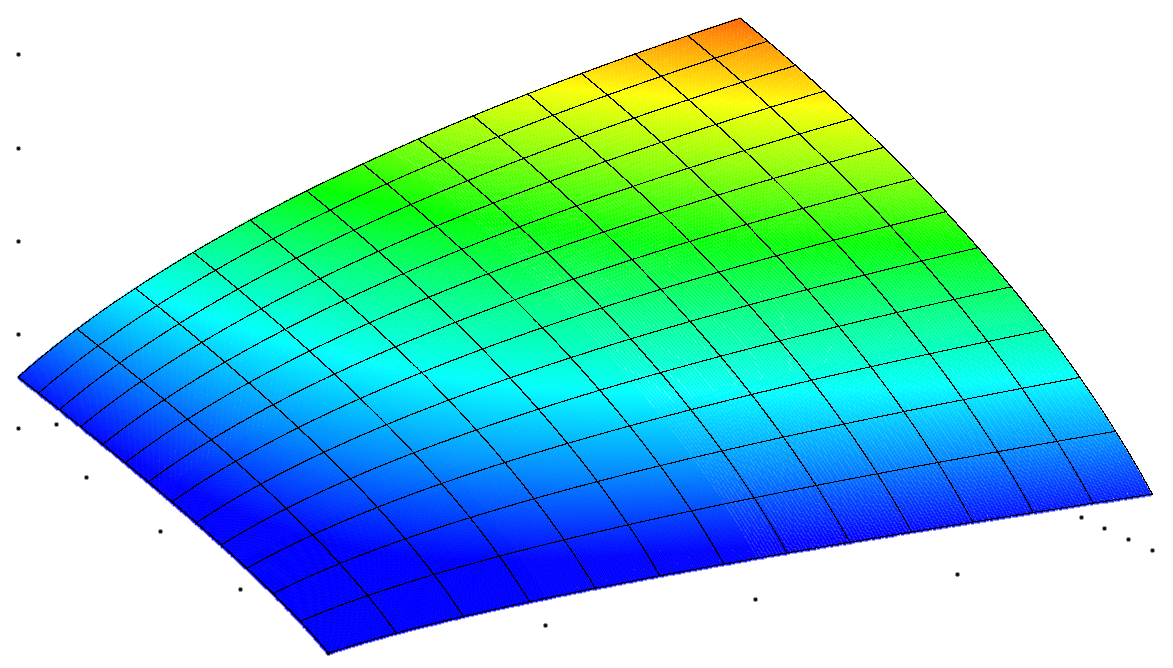}};
\draw[->,thick] (0.627,1.7)--(0.27,1.65) -- (0.27,4.54);
\draw[<-,thick] (0.27-0.125*2.086,1.65+0.125*1.505)--(2.334+0.05*2.086,0.145-0.05*1.505);
\draw[->,thick] (2.34-0.5*0.27545,0.145-0.5*0.0344) -- (7.865+1.5*0.27545,0.84+1.5*0.0344);
\draw (7.25,1.124)--(7.85,0.835) -- (7.85,1.23);
\foreach \x in{0,5,10,15,20}{
  \draw (2.45+\x*0.27545,-0.03+\x*0.0344) node {$\scriptstyle\x$};
}
\foreach \x in{0,5,10,15,20}{
  \draw (2.12-\x*0.1043,0.1+\x*0.07526) node {$\scriptstyle\x$};
}

\end{tikzpicture}%
    \caption{The $\theta(u+1,v+1)$ function. Colors indicate the function value.}
    \label{fig:thetafg}
\end{figure}
Figure \ref{fig:thetafg} plots these triplets over the same range that
was used in Figure \ref{fig:ext1}. This approximation seems to
slightly underestimate the real value of $\log\alpha$. For example,
the extremal triplet obtained from the diagonal staircase $D_{n+1}$ is
$$
  \alpha = 2^{n+1}-1, ~~~
  \beta = (n-1)\tsp2^n+1, ~~~ \gamma = (n-1)\tsp2^n+1,
$$
thus, $u=v=\beta/\alpha\approx (n-1)/2$, and $\log\alpha
\approx (n+1)\log 2$. At the same time,
$$
\theta(u\m+1,v\m+1)=(n+1)\log 2 - \log\sqrt{n\m+1}+O(1).
$$
Extremal triplets on the two edges of the surface are specified by the
totally flat, stairless staircases. These triplets are
$$
   \alpha=n, ~~~ \beta=n(n-1)/2, ~~~ \gamma=0
$$
on one axis, and $\beta$, $\gamma$ swapped on the other.
In this case, the $(u,v)$ pair is $((n\m-1)/2,0)$, and
$$
   \theta(u+1,v+1) = \log n +\big( 1 - \log\sqrt{8\pi} + O(1/n)\big),
$$
which differs from the correct value by a constant only.

We have also looked at how the newly discovered entropy inequalities delimit
the five-variable entropy region. The triplet $\<\alpha_S, \beta_S,
\gamma_S\>$ yields the inequality
$$
    (a,b\|z)+ \alpha_S\big( \abcd+\Z \big) + \beta_S\tsp\C +
     \gamma_S\tsp\D \ge 0.
$$
Since the closure of the $5$-variable entropy region is a pointed convex
cone, one can normalize it by assuming $(a,b\|z)=1$. An equivalent view is
to take the cross-section of $\clGa5$ by this hyperplane. Consider the
three-dimensional subspace spanned by the vectors
$$\begin{array}{r@{\;}l@{\;}l}
  \mathbf x &\eqdef \C &= (a,c\|b)+(b,c\|a), \\[2pt]
  \mathbf y &\eqdef \D &= (a,d\|b)+(b,d\|a), \\[2pt]
 -\mathbf z &\eqdef \abcd+\Z &= \abcd + (a,z\|b) + (b,z\|a);
\end{array}$$
observe that $\mathbf z$ is negated. Normalize the five-variable entropic
function $f$ so that it satisfies $f(a,b\|z)=1$, then project it to this
subspace. Use the scalar products $\<f\m\cdot\mathbf x,f\m\cdot\mathbf
y,f\m\cdot\mathbf z\>$ as the projection coordinates. This three-dimensional
cross-section of the five-variable entropy region is
$$
  \Delta = \big\{ \<f\m\cdot\mathbf x,
                f\m\cdot\mathbf y,
                f\m\cdot\mathbf z\>\in\R^3 : ~~
                f\in\clGa5 \mbox{ such that } f(a,b\|z)=1 \big\}.
$$
Clearly, points in $\Delta$ have non-negative $x$ and $y$ coordinates,
while the $z$ coordinate can take both positive and negative values.
Since $\clGa5$ is a closed convex cone, $\Delta$ is closed and convex.
We concentrate on the part above the $xy$ plane:
$$
  \Delta^+ = \{\<x,y,z\>\in\Delta: z\ge 0\}.
$$
Shannon inequalities provide no restriction whatsoever on $\Delta^+$ as any
non-negative coordinate triplet can be realized by some polymatroid. To show
this, define $\r_I$ for any subset $I$ of the ground set $abcdz$ as
$$
   \r_I : A \mapsto \begin{cases}
          1& \mbox{ if } I\cap A\ne\emptyset,\\
          0& \mbox{ otherwise.}
          \end{cases}
$$
Note that $\lambda\tsp\r_I$ is entropic for any non-negative $\lambda$.
Let, moreover, $\r^\circ$ be the function
$$
   \r^\circ : A\mapsto  \begin{cases}
     2 & \mbox{ if } |A|=1, \\
     4 & \mbox{ if $|A|\ge3$, or $A$ is one of }cd, cz, dz,\\
     3 & \mbox{ otherwise,}
\end{cases}$$
as $A$ runs over the non-empty subsets of $abcdz$. Both $\r_I$ and
$\r^\circ$ are exremal rays of $\G5$, so they are polymatroids. For
arbitrary non-negative numbers $x,y,z$ the linear combination
$f=\r_{abcd}+x\tsp\r_{ac}+y\tsp\r_{ad}+z\tsp\r^\circ$ satisfies
$f(a,b\|z)=1$ and has coordinates $\langle x,y,z \rangle$, providing the
required polymatroid.

Points of $\Delta^+$ with non-negative $x$ and $y$ coordinates and $z=0$ are
realized by linear polymatroids; thus, the complete non-negative quadrant of
the $xy$ plane is a part of $\Delta^+$. Our first non-Shannon inequality,
generated by the triplet $S_{D_1}= \<1,0,0\>$, is
$$
    (a,b\|z)+\abcd +\Z \ge 0.
$$
This inequality immediately limits the region $\Delta^+$ to $z\le 1$;
therefore, points in $\Delta^+$ have a height at most $1$.
%
\begin{figure}[htb]
    \centering
    \begin{tikzpicture}[>=stealth]
\draw(0,0) node[anchor=south west] {\includegraphics[width=8cm]{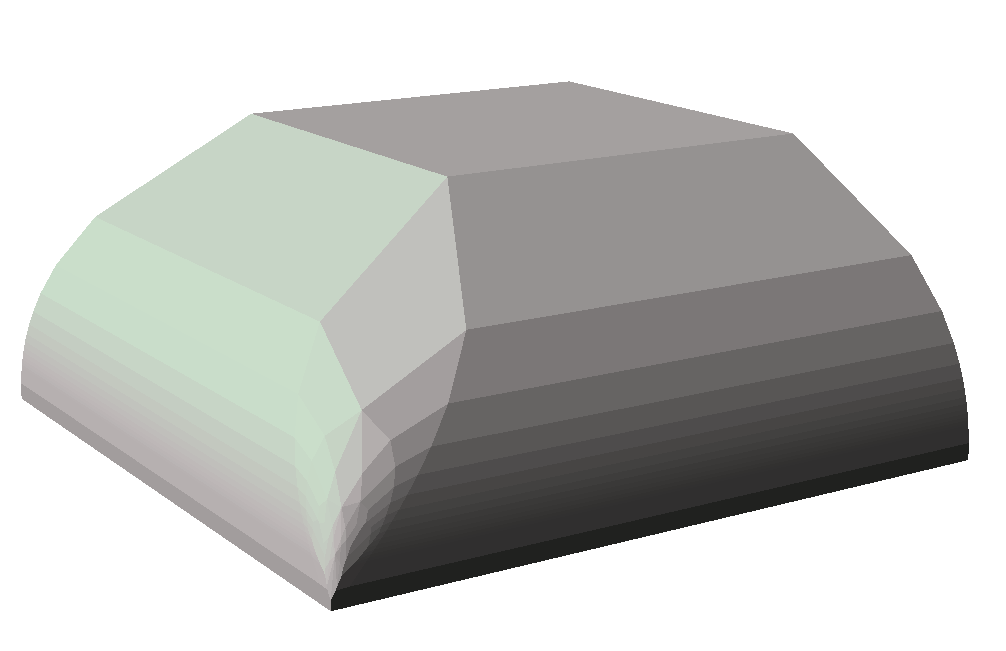}};
\draw (-0.3,2.5) node {$y$};
\draw[<-](-0.1,2.4)--(2.76+2.86*0.05,0.44-0.05*1.96);
\draw[->](2.76,0.44)--(2.76,5.15);
\draw[->](2.76-0.05*5.44,0.44-0.05*1.3)--(8.2,1.74);
\draw (2.98,5.05) node {$z$};
\draw (8.4,1.77) node {$x$};
\foreach\x/\y in {2.76/3.15} {
  \draw (\x-0.1,\y) --(\x+0.1,\y);
  \draw (\x-0.2,\y) node {$\scriptstyle1$};
}
\foreach\x in {0,1,2,2.5}{
  \draw (2.76-\x*0.99,0.44+\x*0.68) +(-0.4,-0.1) node{$\scriptstyle\x$};
}
\foreach\x in {0,1,2,2.5}{
  \draw (2.76+\x*2.03,0.44+\x*0.484) +(0.2,-0.2) node{$\scriptstyle\x$};
}
\draw (4.5,5.0) node {$\scriptstyle(1,1,1)$};
\draw[->] (4.5,4.82) -- (3.71,3.91);
     \end{tikzpicture}%
    \caption{Points of $\Delta^+$ are on or below the indicated surface.}%
    \label{fig:region}%
\end{figure}
%
Other extremal triplets provide additional linear constraints. Figure
\ref{fig:region} illustrates the delimited part of the non-negative octant
as viewed from the origin, cut at $x\le2.5$ and at $y\le2.5$. The depicted
bound of $\Delta^+$ extends similarly to larger values of $x$ and $y$. Along
the $x$ and $y$ axes, the bound approaches the $xz$ and $yz$ coordinate
planes as the functions $z=\sqrt y$ and $z=\sqrt x$, respectively. Along the
$xy$ diagonal, the limiting behavior toward the $z$ axis is similar to the
entropy function $-(x+y)\log(x+y)$. The corner point of the $z=1$ plateau
has coordinates $\<1,1,1\>$. The $\theta(u\m+1,v\m+1)$ estimate provides a
smooth version of this surface. Figure \ref{fig:logregion} plots the
$0<x,y\le 1$ part of this bound on a logarithmic scale. The flat ``wings''
around the $x$ and $y$ edges, rising at a slope of $1/2$, represent the
region where the bound approaches the axes at the rate of $\sqrt x$ or
$\sqrt y$, respectively. The diagonal hump of the surface rises as $t-\log
t$, corresponding to the approximation $z\approx -(x+y)\log(x+y)$ as $x=y\to
0$.

\begin{figure}[b!ht]
\hbox to 0.99\linewidth{\hfill
\includegraphics[width=9cm]{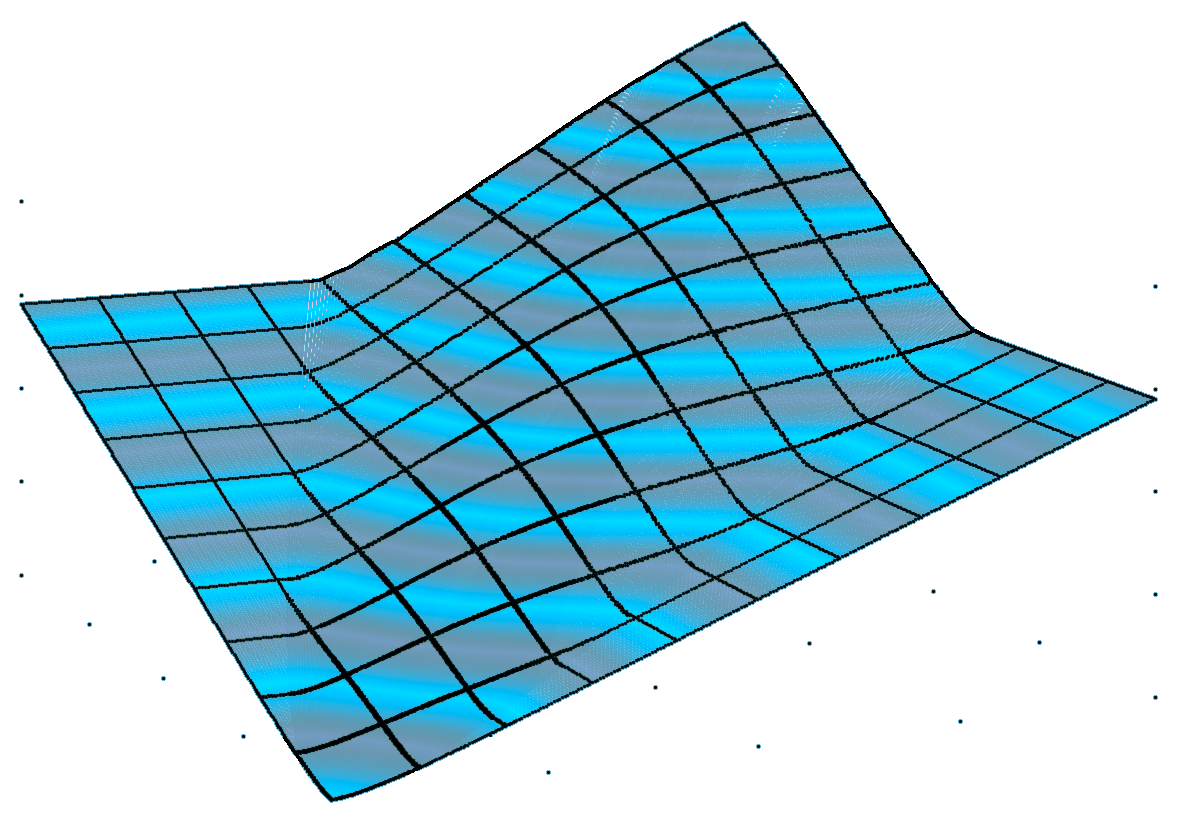}%
\begin{tikzpicture}[>=stealth]\useasboundingbox(0,0) rectangle(0,0);
\draw[->](-6.5-6.5*0.05,0.2-0.8*0.05)--(0,1);
\draw[->](-6.5+2.5*0.05,0.2-1.8*0.05)--(-9,2);
\draw (0,1) node[below] {$\scriptstyle {}-\log(x)$};
\draw (-9.6,2.15) node[below]{$\scriptstyle {}-\log(y)$};
\draw (-8.8,5) node[above]{$\scriptstyle{}-\log(z)$};
\draw[->](-8.84,1.88)--(-8.84,5);
\draw[->](-0.31,0.96)--(-0.31,4.4);
\draw[color=black!20](-8.84,1.88)--(-7.7,2.01)
   (-0.31,0.96)--(-2.55,2.05)
   (-1.73,0.77)--(-3.39,1.63)
   (-3.26,0.58)--(-4.295,1.17)
   (-4.85,0.39)--(-5.31,0.685)
   (-1.19,1.386)--(-4.7,0.98)
   (-2.0,1.77) -- (-3.5,1.6)
   (-7.76,1.12)-- (-7.18,1.18)
   (-7.17,0.68) -- (-6.90,0.716)
   (-8.33,1.52) --(-7.45,1.62)
;
\foreach\x in {0,7.5,15,22.5}{
  \draw(-6.95-\x*0.082,0.12+\x*0.06) node{$\scriptstyle\x$};
  \draw(-6.25+\x*0.209,-0.1+\x*0.026) node{$\scriptstyle\x$};
}
\foreach\x in{5,10,15,20}{
  \draw (-0.04,0.96+\x*0.155) node {$\scriptstyle\x$};
}
\end{tikzpicture}\hfill}%
\caption{Bounds of $\Delta^+$ around the origin on logarithmic scale}\label{fig:logregion}
\end{figure}

\section{Mat\'u\v s' Inequalities}\label{sec:fminf}

\newcommand\slb{\begin{tikzpicture}[scale=0.13\ht\strutbox]%
        \draw[line width=0.015cm](0,0)--(0.055,0.22)--(0.22,0.22);
        \draw[line width=0.035cm](0.22,0.22)--(0.165,0);
        \draw[line width=0.025cm](0.165,0)--(0,0);
                \end{tikzpicture}}

In \cite{M.infinf}, F.~Mat\'u\v s constructed an infinite family of
five-variable non-Shannon inequalities. Similar to our bound of $\Delta^+$
in Figure \ref{fig:region}, these inequalities provide the same quadratic
outer bound on a special two-dimensional cross-section of the four-variable
entropy region. By exhibiting examples for a quadratic inner bound in that
cross-section, Mat\'u\v s proved that the four-variable entropy region (and,
consequently, the entropy region $\Ga N$ with $|N|\ge 4$) is not polyhedral.
To describe his family of inequalities, let $\slb$ denote either $\abcd$ or
$\acbd$.

\begin{theorem}[F.~Mat\'u\v s \cite{M.infinf}]\label{thm:fminf}
The following is an entropy inequality for every non-negative $k$:
\begin{equation}\label{eq:m}
 (b,z\|a)+ k\tsp\big(\slb+(a,z\|b)+(a,b\|z)\big)
        + \frac{k(k\m-1)}2\big((a,c\|b)+(a,b\|c)\big) \ge 0.
\end{equation}
\end{theorem}

\begin{proof}

By induction on $k$. It clearly holds for $k=0$. To prove the inequality for
$k\m+1$, suppose it holds for $k$ in every entropic polymatroid. Since the
inequality to be proved does not contain entropy terms that intersect both
$cd$ and $z$, the Maximum Entropy Principle implies that this inequality
holds in every entropic polymatroid if and only if it holds in entropic
polymatroids that additionally satisfy $(cd,z\|ab)=0$; see also the proof of
Lemma \ref{lemma:Jalphabeta}. Thus, without loss of generality, we may
additionally assume that $cd$ and $z$ are conditionally independent given
$ab$. By induction, (\ref{eq:m}) holds with $b$ and $c$ exchanged (as it
holds in every entropic polymatroid); thus we have the following inequality:
\begin{equation}\label{eq:m2}
  (c,z\|a)+k\tsp\big(\slb+(a,z\|c)+(a,c\|z)\big)
     +\frac{k(k\m-1)}2\big((a,b\|c)+(a,c\|b)\big) \ge 0.
\end{equation}
The following inequalities hold in every polymatroid:
\begin{align*}
(b,z\|a)+ \big(\slb+(a,z\|b)+(a,b\|z)\big) &\ge (c,z\|a)-3(cd,z\|ab),\\
    (a,z\|b)+(a,b\|c) &\ge (a,z\|c)-(cd,z\|ab), \\
    (a,b\|z)+(a,c\|b) &\ge (a,c\|z)-(cd,z\|ab).
\end{align*}
Adding the first line once, the second line, and the third line $k$ times to
(\ref{eq:m2}), and observing that $(cd,z\|ab)=0$, yields the required inequality 
(\ref{eq:m}) for $k\m+1$.
\end{proof}

While Mat\'u\v s' inequalities do not appear in our families, they contain
strikingly similar ones. Combining results from Theorem \ref{thm:mainI} and
Theorem \ref{thm:mainII}, for each non-negative $k$ the following is an
entropy inequality:
\begin{equation}\label{eq:ourm}
  (a,b\|z)+k\tsp\big(\slb+(a,z\|b)+(b,z\|a)\big)+\frac{k(k\m-1)}2\big(
  (a,c\|b)+(b,c\|a)\big) \ge 0.
\end{equation}
As discussed at the end of Section \ref{sec:all}, inequalities in
(\ref{eq:ourm}) can also be proved by induction on $k$, similarly to the
proof of Theorem \ref{thm:fminf}. This means that they can be obtained by a
different iterative method, exploiting the simple consequence of the Maximum
Entropy Principle used above:

\smallskip\hangindent=\parindent
Suppose that the inequality $\psi$ on $abcdz$ does not contain entropy terms
that intersect both $cd$ and $z$. Then $\psi$ is a valid entropy inequality
if and only if it holds for every entropic polymatroid that additionally
satisfies $(cd,z\|ab)=0$.

\smallskip
\noindent
Since entropic polymatroids surely satisfy the Shannon inequalities, the
first iteration extracts the additional inequalities that hold in
polymatroids with $(cd,z\|ab)=0$. This calculation has been done and
provided two types of inequalities:
$$
    \slb+(a,b\|z)+(a,z\|b)+(b,z\|a) \ge 0,
$$
and their symmetric versions; see \cite{Csirmaz.oneadhesive} or
\cite{M.fmadhe}. Consequently, they necessarily hold in entropic
polymatroids. In the second iteration, inequalities from the first iteration
are added to the Shannon inequalities, and then the consequences of this
larger set are extracted, and so on.

Even completing the second iteration seems to be beyond reach since
inequalities from the first iteration can be applied to any five subsets of
the base set, resulting in an enormous set of inequalities. Restricting the
substitution for permutations of the base elements, or certain one- and
two-element sets, reduces the number of inequalities to a manageable level.
For this and several other sparse sets, the computation has been done. The
obtained new inequalities contained, as expected, the corresponding
inequalities from (\ref{eq:m}) and (\ref{eq:ourm}); however, they also
contained many others that were clearly not tight entropy inequalities: they
were significantly weaker than those provided by Theorem \ref{thm:mainI}.
These preliminary results indicated no additional, structured family of
non-Shannon inequalities provided by this alternate method.

\section{Conclusions}\label{sec:final}

Structural properties of the entropy region of four or more variables are
mostly unknown. This region is bounded by linear inequalities corresponding
to the non-negativity of Shannon information measures. Finding additional
entropy inequalities is, and remains, an intriguing open problem. Previous
works on generating and applying such non-Shannon entropy inequalities
focused mainly on the four-variable case
\cite{Beimel.Orlov,csirmaz.book,DFZ11,studeny}, and only a few sporadic
five-variable non-Shannon inequalities have been discovered \cite{MMRV}.
This work provides infinitely many five-variable non-Shannon information
inequalities by systematically exploring a special property of entropic
vectors. Other works utilized the \emph{Copy Lemma}, a method distilled from
the original Zhang-Yeung construction by Dougherty et al.~\cite{DFZ11}. Our
method is based on a different paradigm derived from the principle of
maximum entropy and is a special case of the Maximum Entropy Method
described in\cite{exploring}. As proven in Lemma \ref{lemma:maxent1}, the
principle of maximum entropy implies that every entropic polymatroid has an
$n,m$-copy, which is a polymatroidal extension with special properties as
defined in Definition \ref{def:maxent1}. In Claim \ref{claim:mempoly}, we
have proved that polymatroids having $n,m$-copies form a polyhedral cone and
hint at how its facets can be computed. Facet equations provide the
potentially new non-Shannon entropy inequalities.

\smallskip

While the polyhedral computation presented in Claim \ref{claim:mempoly} is
numerically intractable even for small parameter values, the theoretical
results of Section \ref{sec:prepare} allowed us to reduce this complexity
significantly. Computational aspects of determining the facets of a
high-dimensional cone are closely related to linear multi-objective
optimization \cite{inner}. We have developed a specially tailored variant of
Benson's inner approximation algorithm \cite{inner,bensolve}, which takes
advantage of the special properties of this enumeration problem.
Computational results are reported in Section \ref{sec:compresults} for
generations $n\le 9$. Numerical instability, originating from both the
underlying LP solver and the polyhedral algorithm, prevented the completion
of the computation for larger values of $n$.

\smallskip

Non-Shannon inequalities obtained from these computations are discussed in
Section \ref{sec:ineq}. Based on these experimental results, two infinite
families of five-variable inequalities were defined. The first family in
Theorem \ref{thm:mainI} is parametrized by downward closed subsets of
non-negative lattice points. The second family in Theorem \ref{thm:mainII}
has a single positive integer parameter. Inequalities in both families are
\emph{proved} to hold for polymatroids on five elements that have an
$n$-copy; consequently, they are all valid entropy inequalities. It is
\emph{conjectured} that they cover all inequalities that can be obtained by
the applied method. In other words, if a polymatroid on five elements
satisfies all these inequalities, then it has an $n$-copy for all $n$. This
conjecture is left as an open problem. The computational results confirmed
this conjecture up to $n=9$.

Inequalities in the first family are investigated in Section
\ref{sec:reducedset} in more detail. They are specified by triplets
$\<\alpha_s,\beta_s,\gamma_s\>$ determined by downward closed sets $s$ of
nonnegative lattice points as discussed in Definition \ref{def:triplets}.
Such a triplet is \emph{extremal} if the corresponding inequality is not a
consequence of other inequalities from the same family. Extremal triplets
are determined by a special collection of downward closed sets called
\emph{irreducible staircases}. Based on the theoretical results in Corollary
\ref{corr1} and Claim \ref{claim:cases}, an incremental algorithm, sketched
as Algorithm \ref{code:1} was used to generate all irreducible staircases up
to generation $60$. The converse implication, which is valid for the
computed cases, that triplets generated by irreducible staircases are
extremal, is left as an open problem. Triplets
($\alpha_s,\beta_s,\gamma_s\>$ in the range $\beta_s,\gamma_s\le
20\tsp\alpha_s$, generated by irreducible staircases, are plotted in Figure
\ref{fig:ext1}. The number of new irreducible staircases that remained
irreducible in the subsequent generation matches the sequence A103116 in the
Encyclopedia of Integer Sequences \cite{oeis}. It is an interesting open
problem to prove the equality of these sequences.

To illustrate how the newly discovered entropy inequalities delimit the
five-variable entropy region, entropy vectors were normalized to satisfy
$(a,b\|z)=1$ and projected onto a three-dimensional subspace. Part of the
projection in the non-negative octant is denoted by $\Delta^+$. The Shannon
inequalities do not provide any restriction on this part. Figure
\ref{fig:region} illustrates the bounds implied by the new inequalities.
While the non-negative quadrant of the $xy$ plane is known to be part of
$\Delta^+$, and that it also contains points above that plane, it is an
intriguing open problem whether our bound is, at least asymptotically, tight
around the $x$ and $y$ axes. Showing that our bound is asymptotically tight
at the zero point would amount to settling the long-standing open problem of
whether the entropic region is semi-algebraic.

\section*{Funding}

The research reported in this paper was partially supported by the ERC
Advanced Grant ERMiD.


\end{document}